\definecolor{darkblue}{rgb}{0.0,0.0,0.2}
\definecolor{darkgreen}{rgb}{0.0,0.3,0.0}
\newcommand{\RR}{\mathbb{R}}
\newcommand{\PP}{\mathbb{P}}
\newcommand\abs[1]{\left\lvert#1\right\rvert}
\newcommand\ceil[1]{\left\lceil#1\right\rceil}
\newcommand\parens[1]{\left(#1\right)}
\newcommand\brackets[1]{\left[#1\right]}
\newcommand\angles[1]{\langle#1\rangle}
\newcommand*{\pr}[2][]{\PP\ifx\\\left[#1\right]\\\else_{#1}\fi \left[#2\right]}
\newcommand*{\EE}[2][]{\mathbb{E}\ifx\\\left[#1\right]\\\else_{#1}\fi \left[#2\right]}
\newcommand{\JB}{\text{JB}}
\newtheorem{defin}{Definition}[section]
\newtheorem{lemma}[defin]{Lemma}
\newtheorem{theorem}[defin]{Theorem}
\newtheorem{prop}[defin]{Proposition}
\newtheorem{corollary}[defin]{Corollary}
\newtheorem{claim}[defin]{Claim}
\theoremstyle{definition}
\newtheorem{algorithm}[defin]{Algorithm}
\newtheorem{remark}[defin]{Remark}
\definecolor{c0000ff}{RGB}{0,0,255}
\definecolor{c008000}{RGB}{0,128,0}
\definecolor{c00ffff}{RGB}{0,100,200}
\newcommand{\innerdivfig}{
\path[cm={{0.39095736,0.0,0.0,0.39095736,(-45.081848,-36.707878)}},draw=c008000,line
  join=round,line cap=rect,miter limit=4.00,line width=1.637pt]
  (131.2577,94.9154) .. controls (158.2535,149.7956) and (170.6749,180.8289) ..
  (206.3039,221.5552) -- (211.9711,227.7474) -- (217.6383,233.6854) --
  (223.3055,239.3474) -- (228.9727,244.7118) -- (234.6398,249.7576) --
  (240.3070,254.4644) -- (245.9742,258.8128) -- (251.6414,262.7841) --
  (257.3086,266.3611) -- (262.9758,269.5279) -- (268.6430,272.2703) --
  (274.3102,274.5755) -- (279.9773,276.4328) -- (282.8109,277.1908) --
  (285.6445,277.8337) -- (288.4781,278.3607) -- (291.3117,278.7713) --
  (294.1453,279.0649) -- (296.9789,279.2412) -- (299.8125,279.3000) --
  (302.6461,279.2412) -- (305.4797,279.0649) -- (308.3133,278.7713) --
  (311.1469,278.3607) -- (313.9805,277.8337) -- (316.8141,277.1908) --
  (322.4813,275.5607) -- (328.1484,273.4782) -- (333.8156,270.9530) --
  (339.4828,267.9967) -- (345.1500,264.6229) -- (350.8172,260.8467) --
  (356.4844,256.6846) -- (362.1516,252.1546) -- (367.8188,247.2758) --
  (373.4859,242.0681) -- (379.1531,236.5522) -- (384.8203,230.7495) --
  (390.4875,224.6817) -- (396.1547,218.3706) -- (404.6555,208.4959) --
  (413.1563,198.1965) -- (421.6570,187.5444) -- (479.1090,113.4900);
\path[draw=c00ffff,line join=miter,line cap=butt,miter limit=4.00,line
  width=1.6pt] (9.4912,16.0178) -- (79.8098,109.5844);
\path[draw=black,dash pattern=on 1.92pt off 1.92pt,line join=miter,line
  cap=butt,miter limit=4.00,line width=1.64pt] (78.4160,38.6101) --
  (78.4160,114.0119);
\path[shift={(-81.129466,-269.85805)},fill=black]
  (161.1647,341.3687)arc(0.000:180.000:1.500)arc(-180.000:0.000:1.500) -- cycle;
\path[shift={(-87.529466,-268.85805)},fill=black]
  (117.4655,310.4681)arc(0.000:180.000:1.500)arc(-180.000:0.000:1.500) -- cycle;
}
\newcommand{\divfig}{

\begin{tikzpicture}[y=1pt, x=1pt,yscale=-1, inner sep=0pt, outer sep=0pt]
\innerdivfig
\path[fill=black] (98.470551,91.341965) node[above right] (text6811) {$D_G(y \,\|\, x)$};
\path[fill=black] (-10,91.81398) node[above right] (text6785-9)
  {\color{c00ffff}$\hspace{-30pt}G(x)+(\;\cdot\,-x)G'(x)$};
\path[fill=black] (27.61791,33.518784) node[above right] (text6773) {$x$};
\path[fill=black] (77.02356,33.518784) node[above right] (text6777) {$y$};
\path[fill=black] (116.48392,53.68718) node[above right] (text6781) {\color{c008000}$G(\cdot)$};
\path[shift={(-81.129466,-260.85805)},fill=black]
  (161.0940,368.3094)arc(0.000:180.000:1.500)arc(-180.000:0.000:1.500) -- cycle;

% The brace:
\path[draw=black,miter limit=4.00,line width=1pt] (94.4643,89.5393) ..
  controls (94.4643,89.6325) and (94.3966,89.7553) .. (94.2612,89.9070) ..
  controls (94.1257,90.0588) and (93.9539,90.2222) .. (93.7455,90.3972) ..
  controls (93.5372,90.5723) and (93.3080,90.7591) .. (93.0580,90.9575) ..
  controls (92.8080,91.1560) and (92.5684,91.3486) .. (92.3393,91.5353) ..
  controls (92.1101,91.7104) and (91.9018,91.8797) .. (91.7143,92.0431) ..
  controls (91.5372,92.1948) and (91.4122,92.3232) .. (91.3393,92.4282) ..
  controls (91.2768,92.5216) and (91.2247,92.8718) .. (91.1830,93.4788) ..
  controls (91.1517,94.0857) and (91.1153,94.8269) .. (91.0737,95.7023) ..
  controls (91.0424,96.5778) and (91.0008,97.5291) .. (90.9487,98.5562) ..
  controls (90.9070,99.5717) and (90.8445,100.5405) .. (90.7612,101.4627) ..
  controls (90.6883,102.3848) and (90.5893,103.1960) .. (90.4643,103.8963) ..
  controls (90.3497,104.5967) and (90.1986,105.0636) .. (90.0112,105.2970) ..
  controls (89.8549,105.4955) and (89.5945,105.6997) .. (89.2299,105.9098) ..
  controls (88.8653,106.1199) and (88.4539,106.3183) .. (87.9955,106.5051) ..
  controls (87.5476,106.7035) and (87.0789,106.8844) .. (86.5893,107.0479) ..
  controls (86.0997,107.2230) and (85.6518,107.3689) ..
  (85.2455,107.4856)(85.2455,71.5931) .. controls (85.6518,71.7096) and
  (86.0997,71.8557) .. (86.5893,72.0308) .. controls (87.0789,72.1942) and
  (87.5476,72.3751) .. (87.9955,72.5735) .. controls (88.4539,72.7603) and
  (88.8653,72.9588) .. (89.2299,73.1688) .. controls (89.5945,73.3789) and
  (89.8549,73.5832) .. (90.0112,73.7816) .. controls (90.1986,74.0151) and
  (90.3497,74.4820) .. (90.4643,75.1823) .. controls (90.5893,75.8827) and
  (90.6882,76.6939) .. (90.7612,77.6160) .. controls (90.8445,78.5381) and
  (90.9070,79.5128) .. (90.9487,80.5399) .. controls (91.0008,81.5555) and
  (91.0424,82.5009) .. (91.0737,83.3763) .. controls (91.1154,84.2518) and
  (91.1518,84.9930) .. (91.1830,85.5999) .. controls (91.2247,86.2069) and
  (91.2767,86.5571) .. (91.3393,86.6504) .. controls (91.4434,86.8138) and
  (91.6622,87.0298) .. (91.9955,87.2982) .. controls (92.3393,87.5550) and
  (92.6934,87.8293) .. (93.0580,88.1211) .. controls (93.4226,88.4130) and
  (93.7455,88.6873) .. (94.0268,88.9440) .. controls (94.3184,89.2008) and
  (94.4643,89.3993) .. (94.4643,89.5393);
\end{tikzpicture}

}
\begin{document}
\begin{titlepage}
\title{Agreement Implies Accuracy for Substitutable Signals}

\author{Rafael Frongillo, Eric Neyman, Bo Waggoner}

\date{\today}

\maketitle
\thispagestyle{empty}

\begin{abstract}
Inspired by Aumann's agreement theorem, \citet{aar05} studied the amount of communication necessary for two Bayesian experts to approximately agree on the expectation of a random variable.
Aaronson showed that, remarkably, the number of bits does not depend on the amount of information available to each expert.
However, in general the agreed-upon estimate may be inaccurate: far from the estimate they would settle on if they were to share all of their information.
We show that if the experts' signals are \emph{substitutes}---meaning the experts' information has diminishing marginal returns---then it is the case that if the experts are close to agreement then they are close to the truth.
We prove this result for a broad class of agreement and accuracy measures that includes squared distance and KL divergence.
Additionally, we show that although these measures capture fundamentally different kinds of agreement, Aaronson's agreement result generalizes to them as well.
\end{abstract}
\end{titlepage}

\section{Introduction} \label{sec:intro}
Suppose that Alice and Bob are honest, rational Bayesians who wish to estimate some quantity---say, the unemployment rate one year from now.
Alice is an expert on historical macroeconomic trends, while Bob is an expert on contemporary monetary policy.
They convene to discuss and share their knowledge with each other until they reach an agreement about the expected value of the future unemployment rate.
Alice and Bob could reach agreement by sharing everything they had ever learned, at which point they would have the same information,
but the process would take years.
How then should they proceed?

In the seminal work ``Agreeing to Disagree," Aumann~\citep{aum76} observed that Alice and Bob can reach agreement simply by taking turns sharing their current expected value for the quantity.
In addition to modeling communication between Bayesian agents, protocols similar to this one model financial markets: each trader shares partial information about their expected value on their turn (discussed in Section~\ref{sec:markets}).
A remarkable result by Scott Aaronson \cite{aar05} shows that if Alice and Bob follow certain protocols of this form, they will agree to within $\epsilon$ with probability $1-\delta$ by communicating $O \parens{\frac{1}{\delta \epsilon^2}}$ bits.\footnote{To ensure that each message is short, Alice and Bob share discretized versions of their estimates; we discuss this in Section~\ref{sec:prelims}.}
Notably, this bound only depends on the error Alice and Bob are willing to tolerate, and not on the amount of information available to them.

Absent from Aaronson's results, however, is \emph{what Alice and Bob agree on}.
In particular, there is no guarantee that Alice and Bob will be \emph{accurate}, meaning their agreed-upon estimate will be close (in e.g.\ expected squared distance) to what they would believe if they shared all of their information.
In fact, they might agree on something highly inaccurate:
suppose that Alice and Bob have independent, uniformly random bits $b_A,b_B$, and wish to estimate the XOR $b_A \oplus b_B$.
Alice and Bob agree from the onset, as from each of their perspectives, the expected value of $b_A \oplus b_B$ is $\frac{1}{2}$.
Yet this expectation is far from the best estimate given their collective knowledge, which is either $0$ or $1$.
So while agreement is fundamental to understanding communication between Bayesians---in Aumann's terms, they cannot ``agree to disagree''---agreement is far from the whole story.
An important open problem is therefore what assumptions guarantee that Alice and Bob are accurate once they agree.

We address this open problem by introducing a natural condition, called \emph{rectangle substitutes}, under which agreement implies accuracy.
Rectangle substitutes is a notion of \emph{informational substitutes}: the property that additional information has diminishing marginal returns.
The notion of substitutes is ubiquitous in optimization problems, and informational substitutes conditions have recently been used to analyze equilibria in markets \cite{cw16}.
We show that under the rectangle substitutes condition, \emph{any} protocol leading to agreement will also lead to accuracy.
We then extend these results beyond the case of squared error, to a broad family of measures of agreement and accuracy including KL divergence.

\subsection{Overview of approach and results}

In \citet{aar05}, Alice and Bob are said to \emph{agree} if the squared distance between their estimates is small. Likewise, we can say that Alice and Bob are \emph{accurate} if the squared distance between each of their estimates and the truth is small.
In Section~\ref{sec:quadratic} we present our first main result: under these definitions, \textbf{if the information structure satisfies rectangle substitutes, then agreement implies accuracy}.
In other words, under this assumption, when two Bayesians agree---regardless of how little information they have shared---they necessarily agree \emph{on the truth}.

The proof involves carefully partitioning the space of posterior beliefs induced by the protocol.
Agreement is used to show that Alice and Bob usually fall into the same partition element, which means that Bob would not learn much from learning the partition element of Alice's expectation.
Then, the rectangle substitutes condition is used to show that if Bob were to learn Alice's partition element, then he would be very close to knowing the truth.\\

Aaronson measures agreement in terms of squared error, yet other measurements like KL divergence may be better suited for some settings.
For example, if Alice and Bob estimate the probability of a catastrophic event as $10^{-10}$ and $10^{-2}$, respectively, then under squared error they are said to agree closely, but arguably they disagree strongly, as reflected by their large KL divergence.
Motivated these different ways to measure agreement, we next ask:
\begin{enumerate}
    \item Can Aaronson's protocols be generalized to other notions of agreement, such that the number of bits communicated is independent of the amount of information available to Alice and Bob?
    \item Do other notions of agreement necessarily imply accuracy under rectangle substitutes?
\end{enumerate}
In Section \ref{sec:bregman}, we give our second and third main results: \textbf{the answer to both questions is yes.}
Specifically, the positive results apply when when measuring agreement and accuracy using Bregman divergences, a class of error measures that includes both squared distance and KL divergence.\footnote{The third result holds under an ``approximate triangle inequality" condition on the Bregman divergence, which is satisfied by most or all natural choices; indeed, it is nontrivial to construct a Bregman divergence that does not satisfy this property.}

Aaronson's proof of his agreement theorem turns out to be specific to squared distance.
Our agreement theorem (Theorem~\ref{thm:agree_bregman}) modifies Aaronson's protocol to depend on the particular Bregman divergence, i.e.\ the relevant error measure.
It then proceeds in a manner inspired by Aaronson but using several new ideas.
Our proof that agreement implies accuracy under rectangle substitutes for general Bregman divergences also involves some nontrivial changes to our proof for squared distance.
In particular, the fact that the length of an interval cannot be inferred from the Bregman divergence between its endpoints necessitates a closer analysis of the partition of Alice's and Bob's beliefs.

We conclude in Section~\ref{sec:markets} with a discussion of connections between agreement protocols and information revelation in financial markets, and discuss an interesting potential avenue for future work.

\subsection{Related Work} \label{sec:related_work}
Our setting is related to but distinct from communication complexity.
In that field (e.g.\ \cite{rao2020communication}), the goal is for Alice and Bob to correctly compute a function of their inputs while communicating as few bits as possible and using any protocol necessary.
By contrast, \citet{aar05} considered a goal of agreement, not correctness, and focused on specific natural protocols, which he showed achieve this goal in a constant number of bits.
Our work focuses on Aaronson's setting.
We discuss how our results might be framed in terms of communication complexity in Appendix~\ref{appx:comm}.

Our introduction of the substitutes condition is inspired by its usefulness in  prediction markets~\citep{cw16}.
The ``expectation-sharing'' agreement protocols we study bear a strong similarity to dynamics of market prices.
\citet{ostrovsky2012information} introduced a condition under which convergence of prices in a market implies that all information is aggregated.
This can be viewed as an ``agreement implies accuracy'' condition.
We discuss these works and the connection of our work to markets in Section~\ref{sec:markets}.
Another similar definition of informational substitutes is used by \cite{nr21b} in the context of robust aggregation of forecasts.

Finally, we note that the ``agreement protocols'' we study are not related to key agreement protocols in cryptography, where the goal is for two communicating parties to jointly  construct a shared string for cryptographic use.

\section{Preliminaries} \label{sec:prelims}
\subsection{Information Structures} We consider a set $\Omega$ of states of the world, with a probability distribution $\PP$ over the world states.
There are two experts, Alice and Bob. Alice learns the value of a random variable $\sigma: \Omega \to \mathcal{S}$; we call $\sigma$ Alice's \emph{signal} and $\mathcal{S}$ her \emph{signal set}. Correspondingly, Bob learns the value of a random variable $\tau: \Omega \to \mathcal{T}$. These signals each convey partial information about the true state $\omega \in \Omega$.
Alice and Bob are interested in a third random variable $Y: \Omega \to [0, 1]$. We use the term \emph{information structure} to refer to the tuple $\mathcal{I} := (\Omega, \PP, \mathcal{S}, \mathcal{T}, Y)$.

We denote by $\mu_{\sigma \tau} := \EE{Y \mid \sigma, \tau}$
the random variable that is equal to the expected value of $Y$ conditioned on both Alice's signal $\sigma$ and Bob's signal $\tau$.
We also define $\mu_\sigma := \EE{Y \mid \sigma}$ and $\mu_\tau := \EE{Y \mod \tau}$.
For a measurable set $S \subseteq \mathcal{S}$, we define $\mu_S := \EE{Y \mid \sigma \in S}$; we define $\mu_T$ analogously for $T \subseteq \mathcal{T}$.
Additionally, for $T \subseteq \mathcal{T}$, we define $\mu_{\sigma T} := \EE{Y \mid \tau \in T, \sigma}$, i.e.\ the expected value of $Y$ conditioned on the particular value of $\sigma$ and the knowledge that $\tau \in T$.
If Alice knows that Bob's signal belongs to $T$ (and nothing else about his signal), then the expected value of $Y$ conditional on her information is $\mu_{\sigma T}$; we refer to this as Alice's \emph{expectation}.
Likewise, for $S \subseteq \mathcal{S}$, we define $\mu_{S \tau} := \EE{Y \mid \sigma \in S, \tau}$.
Finally, we define $\mu_{ST} := \EE{Y \mid \sigma \in S, \tau \in T}$. This is the expectation of a third party who only knows that $\sigma \in S$ and $\tau \in T$.

In general we often wish to take expectations conditioned on $\sigma \in S, \tau \in T$ (for some $S \subseteq \mathcal{S}, T \subseteq \mathcal{T}$). We will use the shorthand $\EE{\cdot \mid S, T}$ for $\EE{\cdot \mid \sigma \in S, \tau \in T}$ in such cases.

\subsection{Agreement Protocols} \label{subsec:agreement-protocols}
The notion of \emph{agreement} between Alice and Bob is central to our work.
We first define agreement in terms of squared error, and generalize to other error measures in Section~\ref{sec:bregman}.

\begin{defin}[$\epsilon$-agree]
Let $a$ and $b$ be Alice's and Bob's expectations, respectively ($a$ and $b$ are random variables on $\Omega$). Alice and Bob \emph{$\epsilon$-agree} if $\frac{1}{4} \EE{(a - b)^2} \le \epsilon$.
\end{defin}
\noindent
The constant $\frac{1}{4}$ makes the left-hand side represent Alice's and Bob's distance to the average of their expectations.

Our setting follows \cite{aar05}, which examined communication protocols that cause Alice and Bob to agree.
In a \emph{(deterministic) communication protocol}, Alice and Bob take turns sending each other messages.
On Alice's turns, Alice communicates a message that is a deterministic function of her input (i.e.\ her signal $\sigma$) and all previous communication, and likewise for Bob on his turns.
A \emph{rectangle} is a set of the form $S \times T$ where $S \subseteq \mathcal{S}$ and $T \subseteq \mathcal{T}$.
The transcript of the protocol at a time step $t$ (i.e.\ after $t$ messages have been sent) partitions $\Omega$ into rectangles: for any given sequence of $t$ messages, there are subsets $S_t \subseteq \mathcal{S}, T_t \subseteq \mathcal{T}$ such that the protocol transcript at time $t$ is equal to this sequence if and only if $(\sigma, \tau) \in S_t \times T_t$.
For a given communication protocol, we may think of $S_t$ and $T_t$ as random variables.
Alice's expectation at time $t$ (i.e.\ \emph{after} the $t$-th message has been sent) is $\mu_{\sigma T_t}$ and Bob's expectation at time $t$ is $\mu_{S_t \tau}$.
Finally, the protocol terminates at a certain time (which need not be known in advance of the protocol).
While typically in communication complexity a protocol is associated with a final output, in this case we are interested in Alice's and Bob's expectations, so we do not require an output.

It will be convenient to hypothesize a third party observer, whom we call Charlie, who observes the protocol but has no other information.
At time $t$, Charlie has expectation $\mu_{S_t T_t}$.
Charlie's expectation can also be interpreted as the expectation of $Y$ according to Alice and Bob's common knowledge.

The following definition formalizes the relationship between communication protocols and agreement.

\begin{defin}[$\epsilon$-agreement protocol]
Given an information structure $\mathcal{I}$, a communication protocol \emph{causes Alice and Bob to $\epsilon$-agree} on $\mathcal{I}$ if Alice and Bob $\epsilon$-agree at the end of the protocol, i.e., if $\frac{1}{4} \EE{(a - b)^2} \le \epsilon$, where the expected value is over Alice's and Bob's inputs. We say that a communication protocol is an \emph{$\epsilon$-agreement protocol} if the protocol causes Alice and Bob to $\epsilon$-agree on every information structure.
\end{defin}

Aaronson defines and analyzes two $\epsilon$-agreement protocols.\footnote{A minor difference to our framing is that \citet{aar05} focuses on \emph{probable approximate agreement}: protocols that cause the absolute difference between Alice and Bob to be at most $\epsilon$ with probability all but $\delta$. While the results as presented in this section are stronger than those in \cite{aar05} (the original results follow from these as a consequence of Markov's inequality), these results follow from a straightforward modification of his proofs.} The first of these is the \emph{standard protocol}, in which Alice and Bob take turns stating their expectations for a number of time steps that can be computed by Alice and Bob independently in advance of the protocol, and which is guaranteed to be at most $O(1/\epsilon)$.

The fact that exchanging their expectations for $O(1/\epsilon)$ time steps results in $\epsilon$-agreement is profound and compelling. However, the standard protocol may require an unbounded number of bits of communication, since Alice and Bob are exchanging real numbers. To address this, Aaronson defines another agreement protocol that is truly polynomial-communication (which we slightly modify for our purposes):

\begin{defin}[Discretized protocol, \cite{aar05}]
Choose $\epsilon > 0$. In the \emph{discretized protocol} with parameter $\epsilon$, on her turn (at time $t$), Alice sends ``low" if her expectation is smaller than Charlie's by more than $\epsilon/4$, i.e.\ if $\mu_{S_{t - 1} \tau} < \mu_{S_{t - 1} T_{t - 1}} - \epsilon/4$; ``high" if her expectation is larger than Charlie's by more than $\epsilon/4$; and ``medium" otherwise. Bob acts analogously on his turn. At the start of the protocol, Alice and Bob use the information structure to independently compute the time $t_{\text{end}} \le \frac{1000}{\epsilon}$ that minimizes $\EE{(\mu_{\sigma T_{t_{\text{end}}}} - \mu_{S_{t_{\text{end}}} \tau})^2}$. The protocol ends at this time.
\end{defin}

\begin{theorem}[{\cite[Theorem 4]{aar05}}] \label{thm:aaronson-thm4}
The discretized protocol with parameter $\epsilon$ is an $\epsilon$-agreement protocol with transcript length $O(1/\epsilon)$ bits.
\end{theorem}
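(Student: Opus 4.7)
The proof splits into the trivial transcript-length bound and the substantive $\epsilon$-agreement claim. For the former, each turn sends one of three symbols and there are at most $t_{\text{end}} \le 1000/\epsilon$ turns, so the transcript is $O(1/\epsilon)$ bits. For the latter, my plan centers on the third-party expectation $c_t := \mu_{S_t T_t}$ (Charlie's belief), which is a $[0,1]$-valued martingale in the protocol's filtration. Writing $\Delta_t := \EE{(c_t - c_{t-1})^2}$, the Pythagorean identity for martingales gives the budget
\[
\sum_{t=1}^{1000/\epsilon} \Delta_t \;\le\; \mathrm{Var}(Y) \;\le\; \tfrac{1}{4}.
\]

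The key link from $\Delta_t$ to the protocol is as follows. First, by iterated expectations, on the ``low'' branch at Alice's turn $t+1$ every contributing $\sigma$ satisfies $a_t \le c_t - \epsilon/4$, so $c_{t+1} \le c_t - \epsilon/4$ on that branch (analogously for ``high''); hence $\Delta_{t+1} \ge (\epsilon/4)^2 \Pr[\text{non-medium at turn } t+1]$. Second, on the ``medium'' branch $|a_t - c_t| \le \epsilon/4$ by the protocol's definition, and $|a_t - c_t| \le 1$ trivially elsewhere, yielding the one-shot bound
\[
\EE{(a_t - c_t)^2} \;\le\; (\epsilon/4)^2 + 16\Delta_{t+1}/\epsilon^2,
\]
with a symmetric inequality for $\EE{(b_t - c_t)^2}$ at Bob's next turn. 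Combining with $\EE{(a_t - b_t)^2} \le 2\EE{(a_t - c_t)^2} + 2\EE{(b_t - c_t)^2}$, the goal reduces to exhibiting a single time $t \le 1000/\epsilon$ at which both the next Alice- and Bob-turn $\Delta$'s are small; since $t_{\text{end}}$ is chosen to minimize the true disagreement over $t \le 1000/\epsilon$, any such single-time witness suffices.

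The main obstacle is that the naive pigeonhole on $\sum \Delta_t \le 1/4$ over $1000/\epsilon$ turns yields average $\Delta = O(\epsilon)$, which substituted into the one-shot bound gives an error term $16\Delta/\epsilon^2 = O(1/\epsilon)$, far too large. To close this gap I would strengthen the one-shot bound into an \emph{exact} Pythagorean telescoping: at Alice's turn, $a_{t+1} = a_t$ and $c_{t+1} = \EE{a_t \mid S_{t+1}, T_t}$, so $\EE{(a_t - c_t)^2} - \EE{(a_{t+1} - c_{t+1})^2} = \Delta_{t+1}$, and symmetrically at Bob's turn. Coupled with a (to be proved) cross-turn monotonicity statement---that $\EE{(a_t - c_t)^2}$ does not grow at Bob's turns and vice versa---both quantities become nonincreasing with total telescopic decrease $\sum_t \Delta_t \le 1/4$; averaging over $t \le 1000/\epsilon$ then yields $\EE{(a_{t_{\text{end}}} - b_{t_{\text{end}}})^2} \le 4\epsilon$. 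The cross-turn monotonicity is the delicate step: since $\sigma$ and $\tau$ need not be conditionally independent given $(S_t, T_t)$, it requires a careful iterated-expectations argument tailored to the rectangle structure of the transcript, rather than a clean product-measure factorization.
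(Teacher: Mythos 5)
Your setup is sound and matches the standard approach (Aaronson's, and the paper's generalization in Appendix~C): track Charlie's belief $c_t=\mu_{S_tT_t}$ as a martingale, use the budget $\sum_t\Delta_t\le\mathrm{Var}(Y)$, and use the $\epsilon/4$ threshold to convert non-``medium'' messages into progress. The telescoping identity $\EE{(a_t-c_t)^2}-\EE{(a_{t+1}-c_{t+1})^2}=\Delta_{t+1}$ at Alice's turns is also correct. The fatal step is the ``cross-turn monotonicity'' you defer: the claim that $\EE{(a_t-c_t)^2}$ does not grow at Bob's turns is \emph{false} in general. By the Pythagorean theorem, $\EE{(a_t-c_t)^2}=\EE{(Y-c_t)^2}-\EE{(Y-a_t)^2}$ is the marginal value of Alice's signal given the current common knowledge; revealing more of Bob's signal can strictly \emph{increase} this quantity whenever the signals are informational complements. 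Asserting it cannot increase is essentially the (rectangle) substitutes condition restricted to the protocol's rectangles --- exactly the hypothesis this paper introduces later and which Theorem~\ref{thm:aaronson-thm4} must do without, since it is an unconditional statement about every information structure. Indeed the paper explicitly quotes Aaronson on this point: Alice and Bob can ``nearly agree for the first $t-1$ time steps, then disagree violently at the $t$-th step,'' which is precisely a violation of your monotonicity. No ``careful iterated-expectations argument tailored to the rectangle structure'' can rescue it.

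There is a second, quantitative gap even if monotonicity were granted. Combining your one-shot bound with the telescoping gives $f(t+1)\le f(t)\bigl(1-\tfrac{\epsilon^2}{16}\bigr)+\tfrac{\epsilon^4}{256}$ for $f(t):=\EE{(a_t-c_t)^2}$, which requires on the order of $\epsilon^{-2}\log(1/\epsilon)$ Alice-turns to drive $f$ below $O(\epsilon)$; and averaging the one-shot bound over $1000/\epsilon$ rounds leaves the $\tfrac{16}{\epsilon^2}\cdot\tfrac{\epsilon}{4000}=\Theta(1/\epsilon)$ error term you yourself identified as the obstacle. The inequality is simply pointed the wrong way: the working argument does not try to upper-bound $\EE{(a_t-c_t)^2}$ at every round, but instead lower-bounds the progress conditional on \emph{dis}agreement. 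Concretely (this is the paper's proof of the generalization, Theorem~\ref{thm:agree_bregman}, via Lemma~\ref{lem:charlie_learns}): the genuinely monotone quantity is Charlie's error $\EE{(Y-c_t)^2}$, and one shows that whenever Alice and Bob fail to $\epsilon$-agree at time $t$, this quantity must drop by a fixed amount within the next two rounds --- handling separately the case where the large term is $\EE{(b_t-c_t)^2}$ but the next turn is Alice's. Bounding the number of such drops bounds the first time of agreement, and the $t_{\text{end}}$ minimization then yields agreement at termination. You should restructure your argument along these lines rather than via per-round control of $\EE{(a_t-c_t)^2}$.
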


In general, we refer to Aaronson's standard and discretized protocols as examples of \emph{expectation-sharing} protocols.
We will define other examples in Section \ref{sec:bregman}, similar to Aaronson's discretized protocol but with different cutoffs for low, medium, and high.
We also interpret expectation-sharing protocols in the context of markets in Section~\ref{sec:markets}.

\subsection{Accuracy and Informational Substitutes}
Most of our main results give conditions such that if Alice and Bob $\epsilon$-agree, then Alice's and Bob's estimates are accurate.
By \emph{accurate}, we mean that Alice's and Bob's expectations are close to $\mu_{\sigma \tau}$, i.e., what they would believe if they knew each other's signals.
(After all, they cannot hope to have a better estimate of $Y$ than $\mu_{\sigma \tau}$; for this reason we sometimes refer to $\mu_{\sigma \tau}$ as the ``truth.'') Formally:

\begin{defin}[$\epsilon$-accurate]
Let $a$ be Alice's expectation. Alice is \emph{$\epsilon$-accurate} if $\EE{(\mu_{\sigma \tau} - a)^2} \le \epsilon$. We define $\epsilon$-accuracy analogously for Bob.
\end{defin}

One cannot hope for an unconditional result stating that if Alice and Bob agree, then they are accurate.
Consider for instance the \emph{XOR information structure} from the introduction: Alice and Bob each receive independent random bits as input, and $Y$ is the XOR of these bits.
Then from the start Alice and Bob agree that the expected value of $Y$ is exactly $\frac{1}{2}$, but this value is far from $\mu_{\sigma \tau}$, which is either $0$ or $1$.

Intuitively, this situation arises because Alice's and Bob's signals are \emph{informational complements}: each signal is not informative by itself, but they are informative when taken together.
On the other hand, we say that signals are \emph{informational substitutes} if learning one signal is less valuable if you already know the other signal.
An extreme example is if $\sigma=\tau=X$ for any random variable $X$.
Here $\sigma$ becomes useless upon learning $\tau$ and vice versa.
In \citet{cw16},\footnote{We recommend the ArXiv version for the most up-to-date introduction to informational substitutes.} the authors discuss formalizations of several notions of informational substitutes.
All of these notions capture ``diminishing marginal value," in the sense that, roughly speaking, the value of partial information is a submodular set function.
The various definitions proposed by \citet{cw16} only differ in how finely they allow decomposing $\sigma$ and $\tau$ to obtain a marginal unit.
Our definition has the same format, but uses a decomposition inspired by information rectangles in communication complexity.
Recall that we write $\mid S, T$ as shorthand for $\mid \sigma \in S, \tau \in T$.

\begin{defin} \label{def:rect_subs_quad}
An information structure $\mathcal{I} = (\Omega, \PP, \mathcal{S}, \mathcal{T}, Y)$ satisfies rectangle substitutes if it satisfies weak substitutes on every sub-rectangle, i.e., if for every $S \subseteq \mathcal{S}, T \subseteq \mathcal{T}$, we have
\begin{align*}
&\EE{(Y - \mu_{S \tau})^2 \mid S, T} - \EE{(Y - \mu_{\sigma \tau})^2 \mid S, T} \le \EE{(Y - \mu_{ST})^2 \mid S, T} - \EE{(Y - \mu_{\sigma T})^2 \mid S, T}.
\end{align*}
\end{defin}

We will show that under rectangle substitutes, if Alice and Bob approximately agree, then they are approximately accurate.

\paragraph{Interpreting substitutes.}
Both sides of the inequality in Definition \ref{def:rect_subs_quad} represent the ``value'' of learning $\sigma$ as measured by a decrease in error.
The left-hand side gives the decrease if one already knows $\tau$ and that $\sigma \in S$; the right-hand side gives the decrease if one only knows that $\sigma \in S, \tau \in T$.
Substitutes thus says: the marginal value of learning $\sigma$ is smaller if one already knows $\tau$ than if one does not.
This statement should hold for every sub-rectangle $S,T$.
We remark that the inequality can be rearranged to focus instead on the marginal value of $\tau$ rather than $\sigma$.
We also note that in the XOR information structure, the left-hand side of the inequality is $\frac{1}{4}$ while the right-hand side is zero: a large violation of the substitutes condition.
In the example $\sigma=\tau=X$, the left side is always zero.

\citet{cw16} discusses three interpretations of substitutes, which motivate it as a natural condition.
(1) Each side of the inequality measures an improvement in \emph{prediction error}, here the squared loss, due to learning $\sigma$.
Under substitutes, the improvement is smaller if one already knows $\tau$.
(2) Each side measures a \emph{decrease in uncertainty} due to learning $\sigma$.
Under substitutes, $\sigma$ provides less information about $Y$ if one already knows $\tau$.%
\footnote{Here, uncertainty is measured by variance of one's belief. Under the KL divergence analogue covered in Section \ref{sec:bregman_prelims}, uncertainty is measured in bits via Shannon entropy.}
(3) Each side measures the \emph{decrease in distance} of a posterior expectation from the truth when learning $\sigma$.
The distance to $Y$ changes less if one already knows $\tau$.

\subsection{The Pythagorean Theorem}
We will use the following fact throughout.
We defer the proof to Appendix~\ref{appx:bregman_omitted}, where we establish a more general version of this statement.

\begin{prop}[Pythagorean theorem] \label{prop:pythag}
Let $A$ be a random variable, $B = \EE{A \mid \mathcal{F}}$ where $\mathcal{F}$ is a sigma-algebra, and $C$ be a random variable defined on $\mathcal{F}$. Then
\[\EE{(A - C)^2} = \EE{(A - B)^2} + \EE{(B - C)^2}.\]
\end{prop}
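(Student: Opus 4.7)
The plan is to expand the quadratic and show the cross term vanishes by the tower property of conditional expectation. Writing $A - C = (A - B) + (B - C)$ and squaring gives
\[
(A - C)^2 = (A - B)^2 + (B - C)^2 + 2(A - B)(B - C),
\]
so by linearity of expectation it suffices to prove $\EE{(A - B)(B - C)} = 0$.

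To show this, I would condition on $\mathcal{F}$. Since both $B$ and $C$ are $\mathcal{F}$-measurable, so is $B - C$, and hence it pulls out of the conditional expectation:
\[
\EE{(A - B)(B - C) \mid \mathcal{F}} = (B - C)\,\EE{A - B \mid \mathcal{F}}.
\]
Now $\EE{A \mid \mathcal{F}} = B$ by definition of $B$, and $\EE{B \mid \mathcal{F}} = B$ since $B$ is $\mathcal{F}$-measurable, so the inner conditional expectation is zero. Taking a further expectation and applying the tower property yields $\EE{(A - B)(B - C)} = 0$, which finishes the proof.

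The main (very minor) obstacle is just bookkeeping the measurability assumptions: one needs $C$ to be $\mathcal{F}$-measurable so that the cross term can be pulled through the conditional expectation, and one needs the relevant integrability ($A \in L^2$, which is implicit in the statement since all the squared expectations are assumed finite). No substitutes structure or properties of the underlying information structure are used; the result is a purely probabilistic identity. When the paper later generalizes this to Bregman divergences in Appendix~\ref{appx:bregman_omitted}, the same strategy should work with the Bregman cross-term $\nabla G(B)(A - B)$ replacing the linear cross term, again vanishing after conditioning on $\mathcal{F}$.
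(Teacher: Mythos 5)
Your proof is correct and matches the paper's approach: the paper proves the general Bregman version (Proposition~\ref{prop:pythag_bregman}) by reducing the difference of divergences to the cross term $\EE{(A-B)(G'(C)-G'(B))}$ and killing it by pulling the $\mathcal{F}$-measurable factor out of the conditional expectation, exactly as you do for $G(x)=x^2$. One tiny note on your closing aside: in the general case the cross term is $(A-B)\bigl(G'(C)-G'(B)\bigr)$ rather than $G'(B)(A-B)$, but it is still an $\mathcal{F}$-measurable factor times $A-B$, so the same conditioning argument applies.
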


We use the phrase \emph{Pythagorean theorem} in part because of its form, and in part because it is precisely the familiar Pythagorean theorem when the random variables are viewed as points in a Hilbert space\footnote{We do not make use of this abstraction in our work, but we refer the interested reader to \cite{zidak57}.} with inner product $\angles{X, Y} := \EE{XY}$.

Informally, $A$ is a random variable, $B$ is the expected value of $A$ conditional on some partial information, and $C$ is a random variable that only depends on this information.
So the theorem applies when $B$ is a coarse estimate of $A$ and $C$ is at least as coarse as $B$, a scenario that often occurs in our setting.

One application of the Pythagorean theorem in our context takes $A = Y$, $B = \mu_{\sigma \tau}$ (the expected value of $Y$ conditioned on the experts' signals), and $C = \mu_{\sigma T}$ (Alice's expected value, which only depends on her signal and thus on the signal pair).
This particular application, along with the symmetric one taking $C = \mu_{S \tau}$, allows us to rewrite the rectangle substitutes condition in a form that we will find more convenient:

\begin{remark}
An information structure $\mathcal{I}$ satisfies rectangle substitutes if and only if
\begin{equation} \label{eq:rec_sub_quad}
\EE{(\mu_{\sigma \tau} - \mu_{S \tau})^2 \mid S, T} \le \EE{(\mu_{\sigma T} - \mu_{ST})^2 \mid S, T}
\end{equation}
for all $S, T$.
\end{remark}

\section{Results for Squared Distance} \label{sec:quadratic}
Our main results show that, under the rectangle substitutes condition, any communication protocol that causes Alice and Bob to agree also causes them to be accurate.
We now show the first of these results, which is specific to the squared distance error measure that we have been discussing.

\subsection{Agreement Implies Accuracy}

\begin{theorem} \label{thm:agreement_accurate_quad}
Let $\mathcal{I} = (\Omega, \PP, \mathcal{S}, \mathcal{T}, Y)$ be an information structure that satisfies rectangle substitutes. For any communication protocol that causes Alice and Bob to $\epsilon$-agree on $\mathcal{I}$, Alice and Bob are $10 \epsilon^{1/3}$-accurate after the protocol terminates.
\end{theorem}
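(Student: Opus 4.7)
The plan is to introduce a discretization and a hypothetical additional round of communication in order to translate $\epsilon$-agreement into $\epsilon^{1/3}$-accuracy via rectangle substitutes. Fix a width $\delta$ (to be set to $\epsilon^{1/6}$) and partition $[0,1]$ into intervals of length $\delta$, with an offset $u \in [0,\delta]$ to be chosen below. Write $\tilde a(\sigma)$ and $\tilde b(\tau)$ for the interval indices of $a = \mu_{\sigma T_t}$ and $b = \mu_{S_t \tau}$ respectively, and introduce the hypothetical posterior $a^\star := \EE{Y \mid \sigma, \tilde b, S_t, T_t}$---what Alice would believe if Bob additionally revealed the interval $\tilde b$. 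I will show (i) by rectangle substitutes, $\EE{(\mu_{\sigma\tau} - a^\star)^2} \le \delta^2$; and (ii) by agreement, $\EE{(a - a^\star)^2}$ is small, because Alice's and Bob's partition elements usually coincide, so that $\tilde b$ is effectively uninformative to Alice.

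For (i), condition on the final rectangle $S = S_t$, $T = T_t$ and partition $T$ into $T_j := \{\tau \in T : \tilde b(\tau) = j\}$. Applying the symmetric analogue of \eqref{eq:rec_sub_quad} (obtained by swapping the roles of $\sigma$ and $\tau$) to each rectangle $S \times T_j$ gives
\[
\EE{(\mu_{\sigma\tau} - \mu_{\sigma T_j})^2 \mid S, T_j} \;\le\; \EE{(b - \mu_{S T_j})^2 \mid S, T_j} \;\le\; \delta^2,
\]
since for $\tau \in T_j$ both $b$ and $\mu_{S T_j} = \EE{b \mid S, T_j}$ lie in the same $\delta$-interval. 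Averaging over $j$ and over the random rectangle $(S,T)$ yields $\EE{(\mu_{\sigma\tau} - a^\star)^2} \le \delta^2$. Since $a^\star$ is the conditional expectation of $\mu_{\sigma\tau}$ given a refinement of Alice's $\sigma$-algebra, Proposition~\ref{prop:pythag} gives the orthogonal decomposition $\EE{(\mu_{\sigma\tau} - a)^2} = \EE{(\mu_{\sigma\tau} - a^\star)^2} + \EE{(a - a^\star)^2}$.

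For (ii), condition on $\sigma, S, T$ and let $q_j := \Pr[\tau \in T_j \mid \sigma, S, T]$, so that $a = \sum_j q_j \mu_{\sigma T_j}$ and $\EE{(a - a^\star)^2 \mid \sigma, S, T} = \sum_j q_j (\mu_{\sigma T_j} - a)^2$. Setting $j^\star := \tilde a(\sigma)$, the identity $a - \mu_{\sigma T_{j^\star}} = \sum_{j \ne j^\star} q_j (\mu_{\sigma T_j} - \mu_{\sigma T_{j^\star}})$ gives $|a - \mu_{\sigma T_{j^\star}}| \le 1 - q_{j^\star}$, and combining with the trivial $|\mu_{\sigma T_j} - a| \le 1$ for $j \ne j^\star$ yields $\EE{(a - a^\star)^2 \mid \sigma, S, T} \le 2(1 - q_{j^\star})$, hence $\EE{(a - a^\star)^2} \le 2\Pr[\tilde a \ne \tilde b]$. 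To bound this disagreement probability I take $u$ uniformly on $[0,\delta]$: for each pair of values $a, b \in [0,1]$, $\Pr_u[\tilde a \ne \tilde b] = \min(|a-b|/\delta, 1) \le |a-b|/\delta$, so averaging over $u$ and using $\EE{|a-b|} \le 2\sqrt\epsilon$ (Jensen and $\epsilon$-agreement) gives $\EE[u]{\Pr[\tilde a \ne \tilde b]} \le 2\sqrt\epsilon/\delta$; fix any $u$ achieving this.

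Combining the pieces yields $\EE{(\mu_{\sigma\tau} - a)^2} \le \delta^2 + 4\sqrt\epsilon/\delta$, and $\delta = \epsilon^{1/6}$ gives $5\epsilon^{1/3} \le 10\epsilon^{1/3}$ as required (with the symmetric argument handling Bob by partitioning $S$ instead of $T$). I expect the main obstacle to be step (ii): rectangle substitutes is a per-rectangle statement while $\epsilon$-agreement is a global $L^2$ guarantee, so the ``bridge'' must simultaneously exploit the alignment event $\{\tilde a = \tilde b\}$ in two places---in the variance decomposition of $a^\star$ around $a$, and in the randomized-offset translation of global agreement to a disagreement-probability bound---and losing constants in either estimate easily degrades the final bound from $\epsilon^{1/3}$ to $\epsilon^{1/4}$ or worse.
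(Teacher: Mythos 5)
Your proof is correct, and all the key ingredients match the paper's: discretizing one party's posterior into width-$\delta$ intervals with randomized boundaries, applying rectangle substitutes to the induced sub-rectangles, splitting the error with the Pythagorean theorem, and controlling the ``belief change upon learning the partition index'' by the probability that the two indices disagree. The organization, however, is genuinely different and arguably cleaner. The paper factors the argument through a standalone lemma (Lemma~\ref{lem:bob_close_quad}) proved for an unconditioned information structure, applies it separately inside each final rectangle $S_t \times T_t$ with a rectangle-dependent agreement level $\epsilon_{ST}$, and then aggregates via concavity of $x^{1/3}$; you instead run a single global discretization (one offset $u$, one width $\delta$) across all rectangles simultaneously, so the substitutes half of the bound is $\delta^2$ pointwise in every sub-rectangle $S \times T_j$ and the disagreement half is controlled directly by the global $L^2$ agreement guarantee, with no Jensen step needed. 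This buys a modest improvement in the constant ($5\epsilon^{1/3}$ versus the paper's $6(4\epsilon)^{1/3}$) and avoids having to restrict the information structure and re-verify that rectangle substitutes is inherited (which the paper must note explicitly). Two points worth making explicit if you write this up: you use the $\sigma \leftrightarrow \tau$ symmetric form of Equation~\ref{eq:rec_sub_quad}, which is valid because Definition~\ref{def:rect_subs_quad} rearranges into a form symmetric in the two signals (the paper remarks on this); and the single offset $u$ must be fixed once for the whole argument, which is fine since your part~(i) bound holds deterministically for every $u$.
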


The crux of the argument is the following lemma.

\begin{lemma} \label{lem:bob_close_quad}
Let $\mathcal{I} = (\Omega, \PP, \mathcal{S}, \mathcal{T}, Y)$ be an information structure that satisfies rectangle substitutes. Let $\epsilon = \EE{(\mu_{\sigma} - \mu_{\tau})^2}$. Then
\[\EE{(\mu_{\sigma \tau} - \mu_{\tau})^2} \le 6\epsilon^{1/3}.\]
\end{lemma}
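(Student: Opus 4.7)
The plan is to partition $[0,1]$ into intervals $I_1, \ldots, I_k$ of width $\delta$ (to be optimized), set $S_j := \{s \in \mathcal{S} : \mu_s \in I_j\}$, and write $S(\sigma)$ for the partition element containing $\sigma$. The quantity $\mu_{S(\sigma)\tau}$ — the expectation of $Y$ for a third party who observes both $\tau$ and Alice's partition cell — is a projection of $\mu_{\sigma\tau}$ that refines $\mu_\tau$, so the Pythagorean theorem (Proposition~\ref{prop:pythag}) splits the target into two summands:
\[
    \EE{(\mu_{\sigma\tau} - \mu_\tau)^2} = \EE{(\mu_{\sigma\tau} - \mu_{S(\sigma)\tau})^2} + \EE{(\mu_{S(\sigma)\tau} - \mu_\tau)^2}.
\]

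For the first summand, I would apply rectangle substitutes in the form~(\ref{eq:rec_sub_quad}) to each rectangle $S_j \times \mathcal{T}$:
\[
    \EE{(\mu_{\sigma\tau} - \mu_{S_j\tau})^2 \mid S_j} \le \EE{(\mu_\sigma - \mu_{S_j})^2 \mid S_j} \le \delta^2,
\]
since both $\mu_\sigma$ and $\mu_{S_j}$ lie in the length-$\delta$ interval $I_j$ whenever $\sigma \in S_j$. Averaging over $j$ gives $\EE{(\mu_{\sigma\tau} - \mu_{S(\sigma)\tau})^2} \le \delta^2$, which is the formal version of the heuristic ``if Bob knew Alice's partition element, he would almost know the truth.''

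For the second summand, the tower rule gives $\mu_\tau = \EE{\mu_{S(\sigma)\tau} \mid \tau}$, so the summand equals $\EE{\text{Var}(\mu_{S(\sigma)\tau} \mid \tau)}$. Writing $i(\tau)$ for the index with $\mu_\tau \in I_{i(\tau)}$ and setting $p_\tau := \Pr[S(\sigma) \ne S_{i(\tau)} \mid \tau]$, I would decompose $\mu_\tau$ as $(1-p_\tau)\,\mu_{S_{i(\tau)}\tau} + \sum_{j\ne i(\tau)} \Pr[S_j \mid \tau]\,\mu_{S_j\tau}$ to get $|\mu_\tau - \mu_{S_{i(\tau)}\tau}| \le p_\tau$, and then split the conditional variance by whether $S(\sigma) = S_{i(\tau)}$ to bound it by $p_\tau^2 + p_\tau \le 2 p_\tau$. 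Hence this summand is at most $2\,\Pr[S(\sigma) \ne S_{i(\tau)}]$.

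Agreement then bounds the mismatch probability via a random offset. Shifting the partition by a uniform $r \in [0,\delta]$, the probability over $r$ that $\mu_\sigma$ and $\mu_\tau$ fall in different intervals is at most $|\mu_\sigma - \mu_\tau|/\delta$ for any fixed pair, so Cauchy--Schwarz gives $\EE[r]{\Pr[S(\sigma) \ne S_{i(\tau)}]} \le \EE{|\mu_\sigma - \mu_\tau|}/\delta \le \sqrt{\epsilon}/\delta$, and some offset realizes this bound. Combining,
\[
    \EE{(\mu_{\sigma\tau} - \mu_\tau)^2} \le \delta^2 + \frac{2\sqrt{\epsilon}}{\delta},
\]
and the choice $\delta = \epsilon^{1/6}$ yields the desired $O(\epsilon^{1/3})$ bound. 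I expect the main obstacle to be the second summand: a direct triangle-inequality attack through $\mu_\sigma$ fails, because controlling $\EE{(\mu_{S(\sigma)\tau} - \mu_\sigma)^2}$ turns out to be of comparable difficulty to the original target; the crucial insight is to recognize this summand as a conditional variance that agreement can flatten by letting Bob almost perfectly guess Alice's partition cell from his own signal.
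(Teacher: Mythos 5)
Your proof is correct and follows essentially the same route as the paper's: the same Pythagorean decomposition, the same application of rectangle substitutes to the strips $S_j \times \mathcal{T}$, the same bound of the second summand by twice the mismatch probability via the estimate $\abs{\mu_\tau - \mu_{S_{i(\tau)}\tau}} \le p_\tau$, and the same randomized-grid-plus-Cauchy--Schwarz argument (the paper perturbs each cut point independently rather than shifting the whole grid, but this is immaterial). Your version even yields the slightly better constant $3\epsilon^{1/3}$ because your intervals have width exactly $\delta$ rather than at most $2/N$.
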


\noindent Let us first prove Theorem~\ref{thm:agreement_accurate_quad} assuming Lemma~\ref{lem:bob_close_quad} is true.

\begin{proof}[Proof of Theorem~\ref{thm:agreement_accurate_quad}]
Consider any protocol that causes Alice and Bob to $\epsilon$-agree on $\mathcal{I}$. Let $S$ be the set of possible signals of Alice at the end of the protocol which are consistent with the protocol transcript, and define $T$ likewise for Bob. Intuitively, $S \times T$ is the set of plausible signal pairs $(\sigma, \tau)$ according to an external observer of the protocol. Observe that $S$ and $T$ are random variables, each a function of both $\sigma$ and $\tau$.
We have
\begin{align*}
\EE{(\mu_{\sigma \tau} - \mu_{S \tau})^2} &= \EE[S, T]{\EE{(\mu_{\sigma \tau} - \mu_{S \tau})^2 \mid S, T}}\\
&\le \EE[S, T]{6 \parens{\EE{(\mu_{\sigma T} - \mu_{S \tau})^2 \mid S, T}}^{1/3}}\\
&\le 6 \EE[S, T]{\EE{(\mu_{\sigma T} - \mu_{S \tau})^2 \mid S, T}}^{1/3}\\
&= 6 \EE{(\mu_{\sigma T} - \mu_{S \tau})^2}^{1/3} = 6(4\epsilon)^{1/3} \le 10 \epsilon^{1/3}.
\end{align*}
In the second step, we apply Lemma~\ref{lem:bob_close_quad} to the information structure $\mathcal{I}$ restricted to $S \times T$ --- that is, to $\mathcal{I}' = (\Omega', \PP', S, T, Y)$, where $\Omega' = \{\omega \in \Omega: \sigma \in S, \tau \in T\}$ and $\PP'[\omega] = \pr{\omega \mid \sigma \in S, \tau \in T}$. (Note that we use the fact that if $\mathcal{I}$ satisfies rectangle substitutes, then so does $\mathcal{I}'$; this is because a rectangle of $\mathcal{I}'$ is also a rectangle of $\mathcal{I}$.) The third step follows by the concavity of $x^{1/3}$. Therefore, Bob is $10\epsilon^{1/3}$ accurate (and Alice is likewise by symmetry).
\end{proof}

The proof of Lemma~\ref{lem:bob_close_quad} relies on the following claim.
We defer the proof of Lemma~\ref{lem:bob_close_quad} (and Claim~\ref{claim:n_sigma_tau}) to Appendix~\ref{appx:quad_omitted}, and instead sketch the proofs here.

\begin{claim} \label{claim:n_sigma_tau}
For any $N \ge 1$, it is possible to partition $[0, 1]$ into $N$ intervals $[0, x_1), [x_1, x_2), \dots,$ $[x_{N - 1}, 1]$ in a way so that each interval has length at most $\frac{2}{N}$, and
\[\pr{k(\sigma) \neq k(\tau)} \le \sqrt{\epsilon} N,\]
where $k(\sigma)$ denotes the $k \in [N]$ such that $x_{k - 1} \le \mu_{\sigma} < x_k$, and $k(\tau)$ is defined analogously.\footnote{For convenience we define $x_0 = 0$ and $x_N$ to be some number greater than $1$.}
\end{claim}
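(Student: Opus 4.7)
The plan is to prove the claim by a standard random-shift (averaging) argument, choosing the breakpoints from a uniformly shifted equispaced grid.

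For a shift parameter $s \in [0, 1/N]$, define breakpoints $x_k := s + (k-1)/N$ for $k = 1, \dots, N-1$. This produces $N$ intervals $[0, x_1), [x_1, x_2), \dots, [x_{N-1}, 1]$. The first interval has length $s \le 1/N$, the middle intervals have length exactly $1/N$, and the last interval has length $1 - (s + (N-2)/N) = 2/N - s \le 2/N$. So for every choice of $s$, the length condition is met. I will show that for some $s$, the resulting partition also satisfies the probability bound, by averaging over $s$.

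For any fixed pair of values $\mu_\sigma, \mu_\tau \in [0,1]$, let $d = |\mu_\sigma - \mu_\tau|$. The event $k(\sigma) \neq k(\tau)$ occurs iff some breakpoint of the grid lies strictly between $\mu_\sigma$ and $\mu_\tau$. Since breakpoints are spaced exactly $1/N$ apart, over the random shift $s \sim \text{Unif}[0, 1/N]$ the probability of a breakpoint landing in an interval of length $d$ is exactly $\min(dN, 1) \le dN$. Thus, by Fubini,
\[
\EE[s]{\pr{k(\sigma) \neq k(\tau)}} = \EE{\EE[s]{\mathbbm{1}[k(\sigma) \neq k(\tau)]}} \le N \cdot \EE{|\mu_\sigma - \mu_\tau|}.
\]
By Cauchy--Schwarz (or Jensen), $\EE{|\mu_\sigma - \mu_\tau|} \le \sqrt{\EE{(\mu_\sigma - \mu_\tau)^2}} = \sqrt{\epsilon}$. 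Therefore the expectation over $s$ is at most $\sqrt{\epsilon} N$, and so there exists a particular value of $s$ (hence a particular partition) achieving $\pr{k(\sigma) \neq k(\tau)} \le \sqrt{\epsilon} N$, as required.

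There is no real obstacle here; the only thing to watch is the bookkeeping of interval lengths so that the first and last (possibly shorter or longer) intervals still respect the $2/N$ bound, and the observation that when $d \ge 1/N$ the bound $\min(dN, 1) \le dN$ is still valid (and the resulting crude bound $N \EE{d}$ suffices because $N \EE{d} \le N \sqrt{\epsilon}$ already yields the claim, making the trivial bound $\sqrt{\epsilon} N \ge 1$ automatic in the uninteresting regime).
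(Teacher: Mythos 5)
Your proof is correct and is essentially the paper's argument: both randomize the breakpoint locations, bound the probability that $\mu_\sigma$ and $\mu_\tau$ are separated by a breakpoint by $N\,\EE{\abs{\mu_\sigma-\mu_\tau}}\le N\sqrt{\epsilon}$ via Cauchy--Schwarz, and conclude by averaging. The only (immaterial) differences are that you use a single common shift of an equispaced grid while the paper jitters each breakpoint independently within a window of width $1/N$ centered at $i/N$, and that your claim that the crossing probability is \emph{exactly} $\min(dN,1)$ is slightly off near the right edge (where the finite grid runs out of breakpoints) --- but it errs only in the direction of an upper bound, which is all you use.
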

Intuitively, Claim~\ref{claim:n_sigma_tau} is true because if $\EE{(\mu_{\sigma} - \mu_{\tau})^2}$ is small, then $\mu_{\sigma}$ and $\mu_{\tau}$ are likely to fall into the same interval.

We now sketch the proof of Lemma~\ref{lem:bob_close_quad}.
To see why Claim~\ref{claim:n_sigma_tau} is relevant, recall that we wish to upper bound the expectation of $(\mu_{\sigma \tau} - \mu_{\tau})^2$. Let $S^{(k)} := \{\sigma \in \mathcal{S}: x_{k - 1} \le \mu_{\sigma} < x_k\}$. By the Pythagorean theorem, we have
\[\EE{(\mu_{\sigma \tau} - \mu_{\tau})^2} = \EE{(\mu_{\sigma \tau} - \mu_{S^{(k(\sigma))} \tau})^2} + \EE{(\mu_{S^{(k(\sigma))} \tau} - \mu_{\tau})^2}.\]
By using the rectangle substitutes condition for $S = S^{(k)}, T = \mathcal{T}$ for every $k$, we find that
\begin{equation} \label{eq:rect_subs_app}
\EE{(\mu_{\sigma} - \mu_{S^{(k(\sigma))}})^2} \ge \EE{(\mu_{\sigma \tau} - \mu_{S^{(k(\sigma))} \tau})^2}.
\end{equation}
Therefore, we have
\begin{equation} \label{eq:two_sums_quad}
\EE{(\mu_{\sigma \tau} - \mu_{\tau})^2} \le \EE{(\mu_{\sigma} - \mu_{S^{(k(\sigma))}})^2} + \EE{(\mu_{S^{(k(\sigma))} \tau} - \mu_{\tau})^2}.
\end{equation}
Claim~\ref{claim:n_sigma_tau} lets us argue that the first of these two terms is small (because $\mu_{\sigma}$ and $\mu_{S^{(k(\sigma))}}$ are always within $\frac{2}{N}$ of each other) and that the second term is also small (because conditioned on $\tau$, $k(\sigma)$ is known with high probability). We find that choosing $N = \epsilon^{1/6}$ gives us the bound in Lemma~\ref{lem:bob_close_quad}.

Theorem~\ref{thm:agreement_accurate_quad} is a general result about agreement protocols. Applying the result to Aaronson's discretized protocol gives us the following result.

\begin{corollary} \label{cor:aaronson_exp}
Let $\mathcal{I}$ be any information structure that satisfies universal rectangle substitutes. For any $\epsilon > 0$, Alice and Bob will be $\epsilon$-accurate after running Aaronson's discretized protocol with parameter $\epsilon^3/1000$ (and this takes $O(1/\epsilon^3)$ bits of communication).
\end{corollary}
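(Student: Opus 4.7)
The plan is simply to compose the two main ingredients already stated: Theorem~\ref{thm:aaronson-thm4} converts a parameter choice into an agreement guarantee and a bit count, while Theorem~\ref{thm:agreement_accurate_quad} converts an agreement guarantee into an accuracy guarantee under rectangle substitutes. So the only real task is to solve for the parameter that makes the final accuracy bound equal~$\epsilon$.

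Concretely, I would set the discretized-protocol parameter to $\epsilon' := \epsilon^3/1000$. Applying Theorem~\ref{thm:aaronson-thm4} directly, the discretized protocol with parameter $\epsilon'$ causes Alice and Bob to $\epsilon'$-agree on \emph{every} information structure, and in particular on $\mathcal{I}$, using a transcript of length $O(1/\epsilon') = O(1/\epsilon^3)$ bits. Then, since $\mathcal{I}$ satisfies (universal) rectangle substitutes, Theorem~\ref{thm:agreement_accurate_quad} applies to the resulting protocol and yields that Alice and Bob are $10(\epsilon')^{1/3}$-accurate at termination. A one-line calculation gives
\[
10(\epsilon')^{1/3} \;=\; 10 \left(\frac{\epsilon^3}{1000}\right)^{\!1/3} \;=\; 10 \cdot \frac{\epsilon}{10} \;=\; \epsilon,
\]
which is exactly the claimed accuracy.

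There is no real obstacle here: the only thing to check is that the hypotheses of Theorem~\ref{thm:agreement_accurate_quad} are satisfied, which is immediate because ``universal rectangle substitutes'' implies the plain rectangle substitutes assumption of that theorem, and the discretized protocol is a valid communication protocol whose agreement guarantee we have just invoked. The bit bound comes for free from Theorem~\ref{thm:aaronson-thm4} with $\epsilon' = \Theta(\epsilon^3)$, giving $O(1/\epsilon^3)$ as stated. Thus the proof is essentially a one-paragraph application of the preceding two theorems with the right parameter substitution.
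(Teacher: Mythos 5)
Your proposal is correct and is exactly the argument the paper intends (the paper states the corollary as an immediate application of Theorem~\ref{thm:aaronson-thm4} and Theorem~\ref{thm:agreement_accurate_quad} without writing out the calculation). The parameter substitution $10(\epsilon^3/1000)^{1/3}=\epsilon$ and the bit bound $O(1/\epsilon^3)$ are both verified correctly.
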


\begin{remark}
The discretized protocol is not always the most efficient agreement protocol.
For example, Proposition~\ref{prop:fast_rect} shows that if the rectangle substitutes condition holds, agreement (and therefore accuracy) can be reached with just $O(\log(1/\epsilon))$ bits, an improvement on Corollary \ref{cor:aaronson_exp}.
We discuss communication complexity further in Appendix~\ref{appx:comm}.
Even if more efficient protocols are sometimes possible, expectation-sharing protocols are of interest because they model naturally-occurring communication processes.
For example, they capture the dynamics of prices in markets, which we also discuss in Section~\ref{sec:markets}.
More generally, we find it remarkable that Alice and Bob become accurate by running the agreement protocol (indeed \emph{any} agreement protocol), despite such protocols being designed with only agreement in mind.
\end{remark}

Finally, we observe the following important consequence of Theorem~\ref{thm:agreement_accurate_quad}: once Alice and Bob agree, they continue to agree.

\begin{corollary}
\label{cor:continue-agreeing-quad}
Let $\mathcal{I} = (\Omega, \PP, \mathcal{S}, \mathcal{T}, Y)$ be an information structure that satisfies rectangle substitutes. Consider a communication protocol with the property that Alice and Bob $\epsilon$-agree after round $t$. Then Alice and Bob $10 \epsilon^{1/3}$-agree on all subsequent time steps.
\end{corollary}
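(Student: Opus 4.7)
The plan is to combine Theorem~\ref{thm:agreement_accurate_quad} with the fact that Alice's and Bob's accuracy is monotonically non-decreasing as a protocol unfolds, and then to convert accuracy back into agreement via a triangle inequality.

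First, I would observe that the truncation of the given protocol that halts immediately after round $t$ is itself a valid communication protocol, and by hypothesis it causes Alice and Bob to $\epsilon$-agree on $\mathcal{I}$. Applying Theorem~\ref{thm:agreement_accurate_quad} to this truncated protocol, both Alice and Bob are $10\epsilon^{1/3}$-accurate at the end of round $t$: writing $a_t = \mu_{\sigma T_t}$ and $b_t = \mu_{S_t \tau}$ for their respective expectations, $\EE{(\mu_{\sigma\tau} - a_t)^2} \le 10\epsilon^{1/3}$, and the analogous bound holds for $b_t$.

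Second, I would argue that this accuracy persists at every subsequent round $t' \ge t$. By the tower property, $a_{t'} = \EE{Y \mid \sigma, T_{t'}} = \EE{\mu_{\sigma\tau} \mid \sigma, T_{t'}}$. Since the transcript at time $t'$ determines its prefix at time $t$ (and hence the set $T_t$), the random variable $a_t$ is measurable with respect to the sigma-algebra generated by $\sigma$ and $T_{t'}$. Proposition~\ref{prop:pythag}, applied with $A = \mu_{\sigma\tau}$, $B = a_{t'}$, and $C = a_t$, then yields
\[
    \EE{(\mu_{\sigma\tau} - a_t)^2} = \EE{(\mu_{\sigma\tau} - a_{t'})^2} + \EE{(a_{t'} - a_t)^2},
\]
so $\EE{(\mu_{\sigma\tau} - a_{t'})^2} \le \EE{(\mu_{\sigma\tau} - a_t)^2} \le 10\epsilon^{1/3}$, and the symmetric argument handles $b_{t'}$.

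Finally, I would convert accuracy back into agreement using the elementary bound $(x - y)^2 \le 2x^2 + 2y^2$: taking $x = a_{t'} - \mu_{\sigma\tau}$, $y = b_{t'} - \mu_{\sigma\tau}$ and passing to expectations gives
\[
    \tfrac{1}{4} \EE{(a_{t'} - b_{t'})^2} \le \tfrac{1}{2}\parens{\EE{(a_{t'} - \mu_{\sigma\tau})^2} + \EE{(b_{t'} - \mu_{\sigma\tau})^2}} \le 10\epsilon^{1/3},
\]
which is precisely $10\epsilon^{1/3}$-agreement at time $t'$. The only step requiring any real care is the application of Proposition~\ref{prop:pythag}: one must verify that $a_t$ is a conditional expectation of $\mu_{\sigma\tau}$ with respect to a coarsening of the sigma-algebra defining $a_{t'}$, which is immediate from the tower property and from the fact that Alice's information only accumulates as $t$ increases.
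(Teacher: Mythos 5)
Your proposal is correct and follows essentially the same route as the paper's proof: apply Theorem~\ref{thm:agreement_accurate_quad} at time $t$, propagate accuracy forward via the Pythagorean theorem (Proposition~\ref{prop:pythag}), and convert back to agreement with the bound $(x-y)^2 \le 2x^2 + 2y^2$. Your version is slightly more careful in spelling out the truncated-protocol step and the measurability hypothesis for the Pythagorean theorem, but the substance is identical.
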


\begin{proof}
If Alice and Bob $\epsilon$-agree then they are $10\epsilon^{1/3}$-accurate, so in particular $\EE{(\mu_{\sigma \tau} - \mu_{\sigma T_t})^2} \le 10\epsilon^{1/3}$. Note that $\EE{(\mu_{\sigma \tau} - \mu_{\sigma T_s})^2}$ is a decreasing function of $s$, since for any $s_1 \le s_2$ we have
\[\EE{(\mu_{\sigma \tau} - \mu_{\sigma T_{s_1}})^2} = \EE{(\mu_{\sigma \tau} - \mu_{\sigma T_{s_2}})^2} + \EE{(\mu_{\sigma T_{s_2}} - \mu_{\sigma T_{s_1}})^2}\]
by the Pythagorean theorem. Therefore, for any $t' > t$, we have $\EE{(\mu_{\sigma \tau} - \mu_{\sigma T_{t'}})^2} \le 10\epsilon^{1/3}$. Symmetrically, we have $\EE{(\mu_{\sigma \tau} - \mu_{S_{t'} \tau})^2} \le 10\epsilon^{1/3}$. Therefore, $\EE{(\mu_{\sigma T_{t'}} - \mu_{S_{t'} \tau})^2} \le 40\epsilon^{1/3}$, which means that after round $t'$, Alice and Bob $10 \epsilon^{1/3}$-agree.
\end{proof}

Corollary~\ref{cor:continue-agreeing-quad} stands in contrast to the more general case, in which it is possible that Alice and Bob ``nearly agree for the first $t - 1$ time steps, then disagree violently at the $t$-th step" \cite[\S2.2]{aar05}.
Thus, while the main purpose of Theorem~\ref{thm:agreement_accurate_quad} is a property about \emph{accuracy}, an \emph{agreement} property falls out naturally: under the rectangle substitutes condition, once Alice and Bob are close to agreement, they will remain in relatively close agreement into the future.

\subsection{Graceful Decay Under Closeness to Rectangle Substitutes}
In a sense, the rectangle substitutes condition is quite strong: it requires that the weak substitutes condition be satisfied on \emph{every} sub-rectangle. One might hope for a result that generalizes Theorem~\ref{thm:agreement_accurate_quad} to information structures that almost satisfy the rectangle substitutes condition but do not quite.
Let us formally define a notion of closeness to rectangle substitutes.

\begin{defin}
An information structure $\mathcal{I} = (\Omega, \PP, \mathcal{S}, \mathcal{T}, Y)$ satisfies \emph{$\delta$-approximate rectangle substitutes} if for every partition of $\mathcal{S} \times \mathcal{T}$ into rectangles,\footnote{There are partitions into rectangles that cannot arise from a communication protocol. Our results would apply equally if this condition were instead defined for every partition that could arise from a communication protocol, but we state this condition more generally so that it could be applicable in a broader context than the analysis of communication protocols.} the rectangle substitutes condition holds in expectation over the partition, up to an additive constant of $\delta$, i.e., if we have
\begin{equation} \label{eq:approx_rect}
\EE[\sigma, \tau]{(\mu_{\sigma \tau} - \mu_{S_{\sigma, \tau} \tau})^2} \le \EE[\sigma, \tau]{(\mu_{\sigma T_{\sigma, \tau}} - \mu_{S_{\sigma, \tau} T_{\sigma, \tau}})^2} + \delta,
\end{equation}
where $S_{\sigma, \tau} \times T_{\sigma, \tau}$ is the rectangle containing $(\sigma, \tau)$.
\end{defin}

\begin{remark}
The $\delta$-approximate rectangle substitutes property is a relaxation of the rectangle substitutes property, in the sense that the two are equivalent if $\delta = 0$. To see this, first observe that if $\mathcal{I}$ satisfies rectangle substitutes, then it satisfies Equation~\ref{eq:approx_rect} with $\delta = 0$ pointwise across all $S_{\sigma, \tau}, T_{\sigma, \tau}$, and thus in expectation. In the other direction, suppose that $\mathcal{I}$ satisfies $0$-approximate rectangle substitutes. Let $S \subseteq \mathcal{S}, T \subseteq \mathcal{T}$ and consider the partition of $\mathcal{I}$ into rectangles that contains $S \times T$ and, separately, every other signal pair $(\sigma, \tau)$ in its own rectangle. For this partition, Equation~\ref{eq:approx_rect} reduces precisely to Equation~\ref{eq:rec_sub_quad} (the rectangle substitutes condition for $S$ and $T$).
\end{remark}

\noindent Theorem~\ref{thm:agreement_accurate_quad} generalizes to approximate rectangle substitutes as follows.

\begin{theorem} \label{thm:agreement_accurate_graceful_decay}
Let $\mathcal{I} = (\Omega, \PP, \mathcal{S}, \mathcal{T}, Y)$ be an information structure that satisfies $\delta$-approximate rectangle substitutes. For any communication protocol that causes Alice and Bob to $\epsilon$-agree on $\mathcal{I}$, Alice and Bob are $(10 \epsilon^{1/3} + \delta)$-accurate after the protocol terminates.
\end{theorem}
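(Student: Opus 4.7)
The plan is to adapt the proof of Theorem~\ref{thm:agreement_accurate_quad} but invoke approximate rectangle substitutes only \emph{once}, on a single refined partition of $\mathcal{S}\times\mathcal{T}$. The key point is that the $\delta$-approximate condition bounds slack over an entire rectangle partition, so if the argument is arranged to call on it globally rather than once per sub-rectangle, we pay just one additive $\delta$.

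First, let $\{R_j = S_j \times T_j\}_j$ be the rectangle partition induced by the transcript at termination, and for a parameter $N$ to be chosen later, apply Claim~\ref{claim:n_sigma_tau} within each restricted structure on $R_j$ to obtain a sub-partition of $S_j$ into $S_j^{(1)},\dots,S_j^{(N)}$ by length-$\le 2/N$ intervals of $\mu_{\sigma T_j}$. The refinement $\{S_j^{(k)} \times T_j\}_{j,k}$ is itself a rectangle partition of $\mathcal{S}\times\mathcal{T}$, so the approximate substitutes hypothesis applies to it. Let $S_*^{(K)}$ and $T_*$ denote the (random) sides of the refined rectangle containing $(\sigma,\tau)$. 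Since $\mu_{S_*\tau}$ is measurable with respect to $\sigma(K,\tau)$ (as $K$ determines the protocol rectangle index), Proposition~\ref{prop:pythag} with $A=\mu_{\sigma\tau}$ and $\mathcal{F}=\sigma(K,\tau)$ gives
\begin{equation*}
\EE{(\mu_{\sigma\tau}-\mu_{S_*\tau})^2} = \EE{(\mu_{\sigma\tau}-\mu_{S_*^{(K)}\tau})^2} + \EE{(\mu_{S_*^{(K)}\tau}-\mu_{S_*\tau})^2}.
\end{equation*}
The first summand is at most $\EE{(\mu_{\sigma T_*}-\mu_{S_*^{(K)}T_*})^2}+\delta \le 4/N^2+\delta$, by approximate substitutes applied to the refined partition and the fact that $\mu_{\sigma T_*}$ and $\mu_{S_*^{(K)}T_*}$ both lie in a common length-$\le 2/N$ interval by construction.

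For the second summand, work inside each $R_j$: conditional on $\tau$, the quantity $\mu_{S_j^{(k_j(\sigma))}\tau}$ is $[0,1]$-valued with mean $\mu_{S_j\tau}$ and equals $\mu_{S_j\tau}$ whenever $k_j(\sigma)=k_j(\tau)$, so its variance is at most the mismatch probability, which is at most $N\sqrt{\EE{(\mu_{\sigma T_j}-\mu_{S_j\tau})^2 \mid R_j}}$ by Claim~\ref{claim:n_sigma_tau}. Averaging over $j$, using Jensen's inequality for $\sqrt{\cdot}$, and then $\EE{(\mu_{\sigma T_*}-\mu_{S_*\tau})^2}\le 4\epsilon$ from the $\epsilon$-agreement hypothesis gives $\EE{(\mu_{S_*^{(K)}\tau}-\mu_{S_*\tau})^2}\le 2N\sqrt{\epsilon}$. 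Combining yields $\EE{(\mu_{\sigma\tau}-\mu_{S_*\tau})^2} \le 4/N^2+2N\sqrt{\epsilon}+\delta$, and choosing $N\sim\epsilon^{-1/6}$ gives the $10\epsilon^{1/3}+\delta$ accuracy bound for Bob (Alice is symmetric).

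The main conceptual obstacle is precisely this consolidation: the $\delta$-approximate condition is stated for partitions of $\mathcal{S}\times\mathcal{T}$ as a whole, so directly mimicking the exact-substitutes proof---restrict to $R_j$, invoke a restricted Lemma~\ref{lem:bob_close_quad}, and aggregate---is not guaranteed to absorb the slack into a single $\delta$, since this would require an analog of approximate substitutes on restrictions that is not provided by the definition. Building one global refined partition and invoking approximate substitutes only on it pays $\delta$ exactly once; everything else in the argument parallels the exact-substitutes proof.
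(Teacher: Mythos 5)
Your proof is correct, and at its core it is the same argument as the paper's: the same Pythagorean decomposition, the same interval partition from Claim~\ref{claim:n_sigma_tau}, and the same realization that the approximate-substitutes slack must be charged globally rather than once per sub-rectangle. The only real difference is bookkeeping. You build the single refined partition $\{S_j^{(k)} \times T_j\}_{j,k}$ explicitly and invoke the $\delta$-approximate condition once on it; the paper instead does exactly the ``restrict to $R_j$, invoke a restricted Lemma~\ref{lem:bob_close_quad}, and aggregate'' route that you worried would not absorb the slack into a single $\delta$ --- and it does go through, because the paper defines $\delta_{ST}$ as the best approximation parameter of each restricted structure and observes that $\EE[S,T]{\delta_{ST}} \le \delta$ (the union of the per-rectangle worst-case partitions is itself a global rectangle partition, so their slacks average to at most $\delta$). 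That averaging argument is just an implicit version of your explicit global partition, so the two proofs are interchangeable; yours is arguably a bit more direct, and in fact yields the slightly better constant $6\epsilon^{1/3}+\delta$. One small slip: conditional on $\tau$ and $R_j$, the quantity $\mu_{S_j^{(k_j(\sigma))}\tau}$ does not equal $\mu_{S_j\tau}$ when $k_j(\sigma)=k_j(\tau)$; it equals the constant $\mu_{S_j^{(k_j(\tau))}\tau}$. Your variance bound survives, since a $[0,1]$-valued variable that is constant off an event of probability $\eta$ has variance at most $\EE{(X-x_0)^2}\le\eta$, but the intermediate claim as written is false.
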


\begin{proof}
We first observe that Lemma~\ref{lem:bob_close_quad} can be modified as follows.

\begin{lemma} \label{lem:bob_close_graceful_decay}
Let $\mathcal{I} = (\Omega, \PP, \mathcal{S}, \mathcal{T}, Y)$ be an information structure that satisfies $\delta$-approximate rectangle substitutes. Let $\epsilon = \EE{(\mu_{\sigma} - \mu_{\tau})^2}$. Then
\[\EE{(\mu_{\sigma \tau} - \mu_{\tau})^2} \le 6\epsilon^{1/3} + \delta.\]
\end{lemma}

\noindent
The proof of Lemma~\ref{lem:bob_close_graceful_decay} is exactly the same as that of Lemma~\ref{lem:bob_close_quad}, except that Equation~\ref{eq:rect_subs_app} (Equation~\ref{eq:delta_change} in the full proof) includes an additive $\delta$ term on the left-hand side:
\[\EE{(\mu_{\sigma} - \mu_{S^{(k(\sigma))}})^2} + \delta \ge \EE{(\mu_{\sigma \tau} - \mu_{S^{(k(\sigma))} \tau})^2}.\]
This modified inequality follows immediately from the $\delta$-approximate rectangle substitutes condition, noting that one partition of $\mathcal{S} \times \mathcal{T}$ into rectangles is $\{S_1 \times \mathcal{T}, \dots, S_N \times \mathcal{T}\}$.
The extra $\delta$ term produces the $\delta$ term in the lemma statement.

To prove the theorem, let $S$ be the set of possible signals of Alice at the end of the protocol which are consistent with the protocol transcript, and define $T$ likewise for Bob. Let $\delta_{ST}$ be the minimum $\delta$ such that $S \times T$ satisfies $\delta$-approximate rectangle substitutes. Note that $\EE[S, T]{\delta_{ST}} \le \delta$: otherwise, by taking the union over the worst-case partitions for each $S, T$ we would exhibit a partition of $\mathcal{S} \times \mathcal{T}$ into rectangles that would violate the $\delta$-approximate rectangle substitutes property. Therefore we have
\begin{align*}
\EE{(\mu_{\sigma \tau} - \mu_{S \tau})^2} &= \EE[S, T]{\EE{(\mu_{\sigma \tau} - \mu_{S \tau})^2 \mid S, T}}\\
&\le \EE[S, T]{6 \parens{\EE{(\mu_{\sigma T} - \mu_{S \tau})^2 \mid S, T}}^{1/3} + \delta_{ST}}\\
&\le 6 \EE[S, T]{\EE{(\mu_{\sigma T} - \mu_{S \tau})^2 \mid S, T}}^{1/3} + \delta\\
&= 6 \EE{(\mu_{\sigma T} - \mu_{S \tau})^2}^{1/3} + \delta = 6(4\epsilon)^{1/3} + \delta \le 10 \epsilon^{1/3} + \delta.
\end{align*}
As in the proof of Theorem~\ref{thm:agreement_accurate_quad}, the second step follows by applying Lemma~\ref{lem:bob_close_quad} to the information structure $\mathcal{I}$ restricted to $S \times T$.
\end{proof}

\section{Results for Other Divergence Measures} \label{sec:bregman}
Squared distance is a compelling error measure because it elicits the mean. That is, if you wish to estimate a random variable $Y$ and will be penalized according to the squared distance between $Y$ and your estimate, the strategy that minimizes your expected penalty is to report the expected value of $Y$ (conditional on the information you have). This is in contrast to e.g.\ absolute distance as an error measure, which would instead elicit the median of your distribution. The class of error measures that elicit the mean is precisely the class of \emph{Bregman divergences} (defined below).

In this section, our main result is a \textbf{generalization of Theorem~\ref{thm:agreement_accurate_quad} to (almost) arbitrary Bregman divergences} (see e.g.\ Theorem~\ref{thm:agreement_accurate_beta}). Additionally, we provide a \textbf{generalization of Aaronson's discretized protocol to arbitrary Bregman divergences} (Theorem~\ref{thm:agree_bregman}).

\subsection{Preliminaries on Bregman Divergences} \label{sec:bregman_prelims}
\begin{defin}
Given a differentiable,\footnote{When we say ``differentiable," we mean differentiable on the interior of the interval on which $G$ is defined.} strictly convex function $G: [0, 1] \to \RR$, and $x, y \in [0, 1]$, the \emph{Bregman divergence} from $y$ to $x$ is
\[D_G(y \parallel x) := G(y) - G(x) - (y - x)G'(x).\]
\end{defin}

\begin{figure}
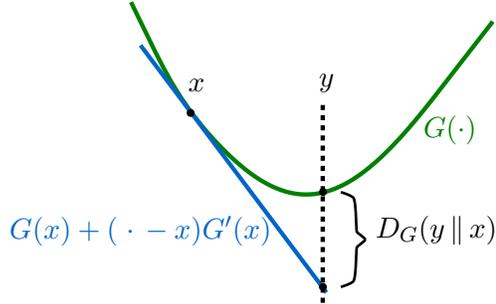

    \centering
    \divfig
    \caption{The Bregman divergence $D_G(y \parallel x)$ is the vertical distance at $y$ between $G$ and the tangent line to $G$ at $x$.}
    \label{fig:bregman}
\end{figure}

\begin{prop}[\cite{banerjee2005clustering}] \label{prop:bregman_exp}
Given a random variable $Y$, the quantity $\EE{D_G(Y \parallel x)}$ is minimized by $x = \EE{Y}$.
\end{prop}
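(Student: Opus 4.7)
The plan is to compute $\mathbb{E}[D_G(Y \parallel x)]$ directly from the definition and show that the gap between this quantity and its value at $x = \mathbb{E}[Y]$ is itself a Bregman divergence, which is nonnegative and strictly positive away from $\mathbb{E}[Y]$.

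Concretely, let $\mu = \mathbb{E}[Y]$. Expanding the definition and using linearity of expectation gives
\[
\mathbb{E}[D_G(Y \parallel x)] = \mathbb{E}[G(Y)] - G(x) - (\mu - x) G'(x),
\]
since $x$ is deterministic and only $Y$ is random. Specializing to $x = \mu$, the linear term vanishes and we obtain $\mathbb{E}[D_G(Y \parallel \mu)] = \mathbb{E}[G(Y)] - G(\mu)$. Subtracting these two identities, the $\mathbb{E}[G(Y)]$ terms cancel and we get the key identity
\[
\mathbb{E}[D_G(Y \parallel x)] - \mathbb{E}[D_G(Y \parallel \mu)] = G(\mu) - G(x) - (\mu - x) G'(x) = D_G(\mu \parallel x).
\]

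To finish, I would invoke the standard fact that a Bregman divergence built from a strictly convex $G$ satisfies $D_G(\mu \parallel x) \geq 0$ with equality iff $x = \mu$; this is immediate because $G(x) + (\mu - x)G'(x)$ is the value at $\mu$ of the tangent line to $G$ at $x$, which by strict convexity lies strictly below $G(\mu)$ whenever $x \neq \mu$. Therefore $\mathbb{E}[D_G(Y \parallel x)] \geq \mathbb{E}[D_G(Y \parallel \mu)]$ with equality iff $x = \mu$, which is exactly the claim.

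There is no real obstacle here; the only thing to be careful about is handling boundary behavior of $G'$ at the endpoints of $[0,1]$ (where $G$ is only assumed differentiable on the interior), but since $\mu \in [0,1]$ and the argument is purely algebraic once we restrict to interior $x$, the boundary cases follow by a limiting argument or can be subsumed into the convention that $D_G(y \parallel x)$ is defined where needed.
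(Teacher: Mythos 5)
Your proof is correct. The paper does not actually prove this proposition---it cites it from \citet{banerjee2005clustering}---and your argument (expanding the expectation and observing that the excess loss is exactly $D_G(\EE{Y} \parallel x) \ge 0$, with equality iff $x = \EE{Y}$) is the standard one, essentially identical to the proof in that reference; your remark about boundary points where $G'$ may be undefined is a reasonable way to handle the only technical wrinkle.
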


An intuitive formulation of Bregman divergence is that $D_G(y \parallel x)$ can be found by drawing the line tangent to $G$ at $x$ and computing how far below the point $(y, G(y))$ this line passes. We illustrate this in Figure~\ref{fig:bregman}. Note that the Bregman divergence is not in general symmetric in its arguments; indeed, $G(x) = x^2$ is the only $G$ for which it is.

The Bregman divergence with respect to $G(x) = x^2$ is precisely the squared distance. Another common Bregman divergence is the KL divergence, which corresponds to $G(x) = x \log x + (1 - x) \log(1 - x)$, the negative of Shannon entropy.

We generalize relevant notions such as agreement and accuracy to arbitrary Bregman divergences as follows. In the definitions below, $G: [0, 1] \to \RR$ is a differentiable, strictly convex function.

\begin{defin}
Let $a$ be Alice's expectation. Alice is \emph{$\epsilon$-accurate} if $\EE{D_G(\mu_{\sigma \tau} \parallel a)} \le \epsilon$, and likewise for Bob.
\end{defin}

We discuss our choice of the order of these two arguments (i.e.\ why we do not instead consider the expectation of $D_G(a \parallel \mu_{\sigma \tau})$) in Appendix~\ref{appx:alternative_defs}. We now define $\epsilon$-agreement, and to do so we first define the \emph{Jensen-Bregman divergence}.

\begin{defin}
For $a, b \in [0, 1]$, the \emph{Jensen-Bregman divergence} between $a$ and $b$ with respect to $G$ is
\[\JB_G(a, b) := \frac{1}{2} \parens{D_G \parens{a \parallel \frac{a + b}{2}} + D_G \parens{b \parallel \frac{a + b}{2}}} = \frac{G(a) + G(b)}{2} - G \parens{\frac{a + b}{2}}.\]
\end{defin}

The validity of the second equality can be easily derived from the definition of Bregman divergence. Note that the Jensen-Bregman divergence, unlike the Bregman divergence, is symmetric in its arguments. The Jensen-Bregman divergence is a lower bound on the average Bregman divergence from Alice and Bob to any other point (see Proposition~\ref{prop:bregman_facts}~\ref{item:min_bregman}).

\begin{defin}
Let $a$ and $b$ be Alice's and Bob's expectations, respectively. Alice and Bob \emph{$\epsilon$-agree} with respect to $G$ if $\JB_G(a, b) \le \epsilon$.
\end{defin}

In Appendix~\ref{appx:alternative_defs} we discuss alternative definitions of agreement and accuracy. The upshot of this discussion is that our definition of agreement is the \emph{weakest} reasonable one, and our definition of accuracy is the \emph{strongest} reasonable one. This means that the main result of this section---that under a wide class of Bregman divergence, agreement implies accuracy---is quite powerful: it starts with a weak premise and proves a strong conclusion.

\begin{defin}
Given an information structure $\mathcal{I}$, a communication protocol \emph{causes Alice and Bob to $\epsilon$-agree} on $\mathcal{I}$ with respect to $G$ if Alice and Bob $\epsilon$-agree with respect to $G$ at the end of the protocol. A communication protocol is an \emph{$\epsilon$-agreement protocol} with respect to $G$ if the protocol causes Alice and Bob to $\epsilon$-agree with respect to $G$ on every information structure.
\end{defin}

We also generalize the notion of rectangle substitutes to this domain, following \cite{cw16}, which explored notions of substitutes for arbitrary Bregman divergences.

\begin{defin} \label{def:rect_subs_bregman}
Let $G: [0, 1] \to \RR$ be a differentiable, strictly convex function. An information structure $\mathcal{I} = (\Omega, \PP, \mathcal{S}, \mathcal{T}, Y)$ satisfies rectangle substitutes with respect to $G$ if for every $S \subseteq \mathcal{S}, T \subseteq \mathcal{T}$, we have
\begin{align*}
&\EE{D_G(Y \parallel \mu_{S \tau}) \mid S, T} - \EE{D_G(Y \parallel \mu_{\sigma \tau}) \mid S, T}\\
&\le \EE{D_G(Y \parallel \mu_{ST}) \mid S, T} - \EE{D_G(Y \parallel \mu_{\sigma T}) \mid S, T}.
\end{align*}
\end{defin}

The Pythagorean theorem (Proposition~\ref{prop:pythag}) generalizes to arbitrary Bregman divergences:
\begin{prop} \label{prop:pythag_bregman}
Let $A$ be a random variable, $B = \EE{A \mid \mathcal{F}}$ where $\mathcal{F}$ is a sigma-algebra, and $C$ be a random variable defined on $\mathcal{F}$. Then
\[\EE{D_G(A \parallel C)} = \EE{D_G(A \parallel B)} + \EE{D_G(B \parallel C)}.\]
\end{prop}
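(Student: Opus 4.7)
The plan is to reduce the statement to the key identity
\[
D_G(A \parallel C) - D_G(A \parallel B) - D_G(B \parallel C) = (A - B)\bigl(G'(B) - G'(C)\bigr),
\]
which holds pointwise, and then show that the right-hand side has zero expectation by conditioning on $\mathcal{F}$. This is exactly the Bregman analogue of how one proves the squared-distance version by expanding $(A-C)^2 = (A-B)^2 + 2(A-B)(B-C) + (B-C)^2$ and using $\EE{A - B \mid \mathcal{F}} = 0$.

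First I would expand all three divergences directly from the definition $D_G(y \parallel x) = G(y) - G(x) - (y-x)G'(x)$. Adding $D_G(A \parallel B)$ and $D_G(B \parallel C)$ produces $G(A) - G(C) - (A-B)G'(B) - (B-C)G'(C)$, while $D_G(A \parallel C)$ expands to $G(A) - G(C) - (A-B)G'(C) - (B-C)G'(C)$. Subtracting yields the identity displayed above; this is a routine one-line calculation with no subtlety beyond regrouping terms.

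Next I would take expectations. Since $B$ and $C$ are both $\mathcal{F}$-measurable, so is $G'(B) - G'(C)$, and the tower property gives
\[
\EE{(A-B)\bigl(G'(B) - G'(C)\bigr)} = \EE{\bigl(G'(B) - G'(C)\bigr)\,\EE{A - B \mid \mathcal{F}}} = 0,
\]
because $B = \EE{A \mid \mathcal{F}}$ forces the inner conditional expectation to vanish. Rearranging the pointwise identity and taking expectations then delivers $\EE{D_G(A \parallel C)} = \EE{D_G(A \parallel B)} + \EE{D_G(B \parallel C)}$, as desired.

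There is no real obstacle; the only point requiring a sentence of care is integrability, so that the conditional-expectation manipulation is legitimate. In the setting of the paper this is automatic because $A$ takes values in $[0,1]$ (it will be instantiated as $Y$ or as a conditional expectation thereof) and $G$ is assumed differentiable on $[0,1]$, so $G$ and $G'$ are bounded on the relevant range and every expectation in sight is finite.
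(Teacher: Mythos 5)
Your proposal is correct and follows essentially the same argument as the paper: expand the three Bregman divergences to isolate the cross term $(A-B)\bigl(G'(B)-G'(C)\bigr)$, then kill it via the tower property using the $\mathcal{F}$-measurability of $G'(B)-G'(C)$ and $\EE{A-B\mid\mathcal{F}}=0$. The only cosmetic difference is that you establish the identity pointwise before taking expectations, while the paper works inside the expectation throughout; your added remark on integrability is a fine (if unnecessary in this bounded setting) touch.
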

Although the proof of this observation is fairly straightforward, to our knowledge Proposition~\ref{prop:pythag_bregman} is original to this work. 
We provide a proof in Appendix~\ref{appx:bregman_omitted}.
Just as we did with squared error, this general Pythagorean theorem allows us to rewrite the rectangle substitutes condition for Bregman divergences.

\begin{remark}
An information structure $\mathcal{I}$ satisfies rectangle substitutes with respect to $G$ if and only if for all $S \subseteq \mathcal{S}, T \subseteq \mathcal{T}$ we have
\begin{equation} \label{eq:rec_sub_bregman}
\EE{D_G(\mu_{\sigma \tau} \parallel \mu_{S \tau}) \mid S, T} \le \EE{D_G(\mu_{\sigma T} \parallel \mu_{ST}) \mid S, T}.
\end{equation}
Given the interpretation of Bregman divergences as measures of error, we can interpret the left side as Bob's expected error in predicting the truth while the right side is Charlie's expected error when predicting Alice's expectation.
Both sides measure a prediction error due to not having Alice's signal, but from different starting points.
\end{remark}

\subsection{Generalizing the Discretized Protocol}
Later in this work, we will show that under some weak conditions, protocols that cause Alice and Bob to agree with respect to $G$ also cause Alice and Bob to be accurate with respect to $G$. However, this raises an interesting question: are there protocols that cause Alice and Bob to agree with respect to $G$?
In particular, we are interested in natural expectation-sharing protocols.
Aaronson's discretized protocol is specific to $G(x) = x^2$, and it is not immediately obvious how to generalize it. We present the following generalization.

\begin{defin}
Let $G$ be a differentiable, strictly convex function, and let $M := \max_x G(x) - \min_x G(x)$. Choose $\epsilon > 0$. In the \emph{discretized protocol with respect to $G$} with parameter $\epsilon$, on her turn (at time $t$), Alice sends ``medium" if $D_G(\mu_{\sigma T_{t - 1}} \parallel \mu_{S_{t - 1} T_{t - 1}}) < \frac{\epsilon}{2}$, and otherwise either ``low" or ``high", depending on whether $\mu_{\sigma T_t}$ is smaller or larger (respectively) than $\mu_{S_t T_t}$. Bob acts analogously on his turn. At the start of the protocol, Alice and Bob use the information structure to independently compute the time $t_{\text{end}} \le \frac{24M(4M + \epsilon)}{\epsilon^2}$ that minimizes $\EE{D_G(\mu_{\sigma T_{t_{\text{end}}}} \parallel \mu_{S_{t_{\text{end}}} \tau})}$. The protocol ends at this time.
\end{defin}

\begin{theorem} \label{thm:agree_bregman}
The discretized protocol with respect to $G$ with parameter $\epsilon$ is an $\epsilon$-agreement protocol that involves $O \parens{\frac{M(M + \epsilon)}{\epsilon^2}}$ bits of communication.
\end{theorem}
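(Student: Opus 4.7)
The plan is to generalize Aaronson's analysis of the squared-distance discretized protocol, using the Bregman Pythagorean theorem (Proposition~\ref{prop:pythag_bregman}) in place of its quadratic counterpart. The central potential is Charlie's expected error $P_t := \EE{D_G(Y \parallel c_t)}$, which by Proposition~\ref{prop:pythag_bregman} is nonincreasing with $P_0 \leq M$. Since $\JB_G(a, b) \leq \frac{1}{2} D_G(a \parallel b)$ (take $c = b$ in the minimum-Bregman characterization of $\JB_G$ from Section~\ref{sec:bregman_prelims}), it suffices to find $t \leq t_{\text{end}}$ with $\EE{D_G(a_t \parallel b_t)} \leq 2\epsilon$; the protocol's choice of the minimizer over $t \leq T := 24M(4M+\epsilon)/\epsilon^2$ then delivers the claim, which I will establish by bounding the time-average of $\EE{D_G(a_t \parallel b_t)}$ over $t \leq T$ by $2\epsilon$.

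The first main ingredient bounds non-medium messages. On Alice's turn $t$, partition $S_{t-1}$ into $S^{(\ell)}, S^{(m)}, S^{(h)}$ according to her message. For $i \in \{\ell, h\}$, each $\sigma \in S^{(i)}$ has $\mu_{\sigma T_{t-1}}$ on one fixed side of $c_{t-1}$ at Bregman distance $\geq \epsilon/2$; since $D_G(\cdot \parallel c_{t-1})$ is convex with its unique minimum at $c_{t-1}$ (hence monotone on each side), the posterior mean $c_t^{(i)} = \mu_{S^{(i)} T_{t-1}}$ lies on the same side and inherits $D_G(c_t^{(i)} \parallel c_{t-1}) \geq \epsilon/2$. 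Applying Proposition~\ref{prop:pythag_bregman} to $(Y, c_t, c_{t-1})$ yields $P_{t-1} - P_t = \EE{D_G(c_t \parallel c_{t-1})} \geq (\epsilon/2)\Pr[\text{non-medium at } t]$, and summing gives $\EE{\#\text{non-medium messages}} \leq 2M/\epsilon$. Conversely, on Alice's medium event, $D_G(a_{t-1} \parallel c_{t-1}) < \epsilon/2$ holds pointwise; the same convexity argument gives $D_G(c_t^{(m)} \parallel c_{t-1}) < \epsilon/2$, and another application of Proposition~\ref{prop:pythag_bregman} (using $a_t = a_{t-1}$) yields $\EE{D_G(a_t \parallel c_t) \mid \text{Alice medium at } t} < \epsilon/2$. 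A symmetric statement holds for Bob.

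To finish, these ingredients need to be combined into a bound on the time-average of $\EE{D_G(a_t \parallel b_t)}$. Letting $\phi_t := \EE{D_G(a_t \parallel c_t)}$ and $\psi_t := \EE{D_G(b_t \parallel c_t)}$, Proposition~\ref{prop:pythag_bregman} shows that on Alice's turn $\phi_t$ decreases by exactly $P_{t-1} - P_t$, while on Bob's turn $\phi_t$ grows by Alice's own information gain from Bob's message minus Charlie's; since $\sum_t \EE{D_G(a_t \parallel a_{t-1})} \leq \EE{D_G(\mu_{\sigma \tau} \parallel \mu_\emptyset)} \leq M$ by a telescoping Pythagorean calculation, the total growth of $\phi$ across Bob's turns is at most $M$, and similarly for $\psi$. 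The chief obstacle will be converting these per-round estimates into a bound on $\sum_{t \leq T} \EE{D_G(a_t \parallel b_t)}$ without invoking an approximate triangle inequality for $D_G$. The plan is to partition times into ``doubly-medium'' rounds (Alice's most recent message at $t_A \leq t$ and Bob's most recent at $t_B \leq t$ were both medium, with no non-medium messages intervening) and ``bad'' rounds (charged against the $O(M/\epsilon)$ non-medium messages, each of which can spoil only $O(1)$ neighboring rounds). On doubly-medium rounds, both $\phi_t$ and $\psi_t$ can be shown to remain $O(\epsilon)$ using the post-medium bounds together with the smallness of Charlie's update on medium events (which in turn bounds Alice's and Bob's updates via Jensen and the convexity of $D_G$ in its first argument). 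A direct computation then bounds $\EE{D_G(a_t \parallel b_t)}$ on doubly-medium rounds, yielding the desired total $O(M(M+\epsilon)/\epsilon)$ and, via pigeonhole, the theorem.
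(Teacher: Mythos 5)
Your first two ingredients are correct, and the non-medium bound is in fact a cleaner argument than the paper's: the paper instead proves a quantitatively stronger statement (Lemma~\ref{lem:charlie_learns}), that Charlie's update on the ``high'' event is at least $\frac{\alpha\epsilon}{8M+2\epsilon}$ where $\alpha=\EE{D_G(a_{t-1}\parallel c_{t-1})\mathbbm{1}_{\text{hi}}}$ can be of order $M$, and this requires a delicate extremal analysis of $D_G(\cdot\parallel c)$; your version only extracts $\tfrac{\epsilon}{2}\Pr[\text{non-medium}]$, which is why you are forced into a counting/averaging argument rather than the paper's route (a case split on whether $\EE{D_G(a_t\parallel c_t)}$ or $\EE{D_G(b_t\parallel c_t)}$ is large, concluding that Charlie's potential drops by $\Omega(\epsilon^2/M)$ in \emph{every} two-round window without agreement).

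The genuine gap is in your combining step. The claim that on doubly-medium rounds ``both $\phi_t$ and $\psi_t$ remain $O(\epsilon)$,'' justified by ``the smallness of Charlie's update on medium events\dots bounds Alice's and Bob's updates,'' is false. A medium message moves Charlie's expectation by little, but it can still be extremely informative to the \emph{listener}: $\EE{D_G(b_t\parallel b_{t-1})}$ is not controlled by $\EE{D_G(c_t\parallel c_{t-1})}$ (this is exactly the XOR/complements phenomenon --- a message can leave the coarse posterior mean fixed while drastically refining the finer one), and Jensen gives the inequality in the wrong direction. So after Alice's medium message the listener's divergence $\psi_t$ can jump by a large amount even though the round is doubly-medium, and your charging scheme --- which attributes all bad rounds to the $O(M/\epsilon)$ non-medium messages --- does not cover these rounds. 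You already have the raw material for a correct fix, but it is not the one you describe: use that on the speaker's turn $\EE{D_G(a_t\parallel c_t)\mathbbm{1}_{\text{non-med}}}\le M\Pr[\text{non-med at }t]$ (since $\EE{D_G(a_t\parallel c_t)\mid S_t,T_t}=\EE{G(a_t)\mid S_t,T_t}-G(c_t)\le M$), so $\phi_t\le \tfrac{\epsilon}{2}+M\Pr[\text{non-med at }t]$ right after Alice speaks, and then bound the \emph{sum over all rounds} of the listener-gain increments by $M$ via the telescoping Pythagorean identity; averaging $\phi_t+\psi_t$ over the $T$ rounds then gives $\epsilon+O(M^2/(\epsilon T))+O(M/T)\le 2\epsilon$, and $\JB_G(a_t,b_t)\le\tfrac12(D_G(a_t\parallel c_t)+D_G(b_t\parallel c_t))$ (Proposition~\ref{prop:bregman_facts}, with $x=c_t$ rather than $x=b_t$) finishes by pigeonhole. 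As written, though, the proposal's assembly rests on a false per-round claim and is not a proof.
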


Our proof draws inspiration from Aaronson's proof of the discretized protocol, but has significant differences. The key idea is to keep track of the monovariant $\EE{D_G(Y \parallel \mu_{S_t T_t})}$. This is Charlie's expected error (as measured by the Bregman divergence from the correct answer $Y$) after time step $t$---recall that Charlie is our name for a third-party observer of the protocol. Note that this quantity is at most $M$ and at least $0$. Hence, if we show that the quantity decreases by at least some value $\beta$ every time Alice and Bob do not $\epsilon$-agree, then we will have shown that Alice and Bob must $\epsilon$-agree within $\frac{\beta}{M}$ time steps. We defer the proof to Appendix~\ref{appx:bregman_omitted}.

\subsection{Approximate Triangle Inequality}

Our results will hold for a class of Jensen-Bregman divergences that satisfy an approximate version of the triangle inequality.
Specifically, we will require $\JB_G$ to satisfy the following \emph{$c$-approximate triangle inequality} for some $c>0$.

\begin{defin}
Given a differentiable, strictly convex function $G: [0, 1] \to \RR$ and a positive number $c$, we say that $\JB_G(\cdot, \cdot)$ satisfies the \emph{$c$-approximate triangle inequality} if for all $a, b, x \in [0, 1]$ we have
\[\JB_G(a, x) + \JB_G(x, b) \ge c \JB_G(a, b).\]
\end{defin}

It is possible to construct functions $G$ such that there is no positive $c$ for which $\JB_G$ satisfies the $c$-approximate triangle inequality. However, $\JB_G$ satisfies the $c$-approximate triangle inequality for some positive $c$ for essentially all natural choices of $G$.

\begin{prop} \label{prop:triangle} Let $G: [0, 1] \to \RR$ be a differentiable, strictly convex function.
\begin{enumerate}[label=(\roman*)]
\item \label{item:capprox_1} If $\sqrt{\JB_G(\cdot, \cdot)}$ satisfies the triangle inequality, then $\JB_G$ satisfies the $\frac{1}{2}$-approximate triangle inequality.
\item \label{item:capprox_2} If $G(x) = x^2$ (i.e.\ $D_G$ is squared distance) or if $G(x) = x \log x + (1 - x) \log(1 - x)$ (i.e.\ $D_G$ is KL divergence), then $\sqrt{\JB_G}$ satisfies the triangle inequality (and so $\JB_G$ satisfies the $\frac{1}{2}$-approximate triangle inequality).
\end{enumerate}
\end{prop}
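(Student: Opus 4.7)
The plan is to handle the two parts separately: part~\ref{item:capprox_1} reduces to an elementary AM--GM computation, while part~\ref{item:capprox_2} handles the two stated choices of $G$ via direct algebra and a classical metric property of the Jensen--Shannon divergence.

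For part~\ref{item:capprox_1}, I would start from the hypothesized triangle inequality $\sqrt{\JB_G(a,x)} + \sqrt{\JB_G(x,b)} \ge \sqrt{\JB_G(a,b)}$, square both sides to obtain
\[\JB_G(a,x) + \JB_G(x,b) + 2\sqrt{\JB_G(a,x)\,\JB_G(x,b)} \ge \JB_G(a,b),\]
and then bound the cross term via AM--GM in the form $2\sqrt{uv} \le u+v$. This immediately yields $2\bigl(\JB_G(a,x)+\JB_G(x,b)\bigr) \ge \JB_G(a,b)$, i.e.\ the $\tfrac{1}{2}$-approximate triangle inequality. There is no real obstacle here.

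For part~\ref{item:capprox_2}, when $G(x)=x^2$ I would simply plug into the closed form $\JB_G(a,b) = \tfrac{G(a)+G(b)}{2} - G\!\parens{\tfrac{a+b}{2}}$ from the definition of Jensen--Bregman divergence; the algebra collapses to $\tfrac{(a-b)^2}{4}$, so $\sqrt{\JB_G(a,b)} = \tfrac{|a-b|}{2}$ and the triangle inequality is inherited from the absolute value. When $G(x) = x\log x + (1-x)\log(1-x)$, I would first observe by a short calculation (using $G'(x)=\log\tfrac{x}{1-x}$) that $D_G(y\parallel x)$ is the KL divergence between the Bernoulli distributions with parameters $y$ and $x$, so that $\JB_G(a,b)$ coincides (after accounting for the factor of $\tfrac{1}{2}$ built into $\JB_G$) with the Jensen--Shannon divergence between $\mathrm{Bern}(a)$ and $\mathrm{Bern}(b)$. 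The triangle inequality for $\sqrt{\JB_G}$ then follows from the classical theorem that the square root of Jensen--Shannon divergence is a metric on the space of probability distributions (due to Endres--Schindelin and \"Osterreicher--Vajda).

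The main obstacle is this last metric property: proving it from scratch is nontrivial, so the strategy is to invoke the existing result and reduce the Bregman setting to it via the Bernoulli identification. Everything else in the proposition amounts to one line of algebra or one application of AM--GM.
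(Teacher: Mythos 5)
Your proposal is correct and follows essentially the same route as the paper: part~\ref{item:capprox_1} by squaring the hypothesized triangle inequality and bounding the cross term with AM--GM, and part~\ref{item:capprox_2} by noting $\JB_G(a,b)=(a-b)^2/4$ for $G(x)=x^2$ and deferring the Shannon-entropy case to the Endres--Schindelin metric property of the Jensen--Shannon divergence, which is precisely the citation the paper uses. No gaps.
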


\begin{proof}
Regarding Fact~\ref{item:capprox_1}, suppose that $\sqrt{\JB_G}$ satisfies the triangle inequality. Then for all $a, b, x$ we have $\sqrt{\JB_G(a, x)} + \sqrt{\JB_G(x, b)} \ge \sqrt{\JB_G(a, b)}$. Squaring both sides and observing that $\JB_G(a, x) + \JB_G(x, b) \ge 2\sqrt{\JB_G(a, x) \JB_G(x, b)}$ completes the proof.

Fact~\ref{item:capprox_2} is trivial for $G(x) = x^2$, since $\sqrt{\JB_G}$ is the absolute distance metric (times a constant factor). As for $G(x) = x \log x + (1 - x) \log(1 - x)$, we refer the reader to \cite{es03}.
\end{proof}

The question of when $\sqrt{\JB_G}$ satisfies the triangle inequality has been explored in previous work; we refer the interested reader to \cite{abb13} and \cite{ccr08}.\\

\subsection{Generalized Agreement Implies Generalized Accuracy}
In all of the results in this subsection, we consider the following setting: $G$ is a differentiable convex function; $c$ is a positive real number such that $\JB_G$ satisfies the $c$-approximate triangle inequality; and $\mathcal{I} = (\Omega, \PP, \mathcal{S}, \mathcal{T}, Y)$ is an information structure that satisfies rectangle substitutes with respect to $G$.

We prove generalizations of Theorem~\ref{thm:agreement_accurate_quad}, showing that under the rectangle substitutes condition, if a protocol ends with Alice and Bob in approximate agreement, then Alice and Bob are approximately accurate.
The first result we state assumes that $G$ is symmetric, but is otherwise quite general.

\begin{theorem} \label{thm:agreement_accurate_beta}
Assume that $G$ is symmetric about the line $x = \frac{1}{2}$. For any communication protocol that causes Alice and Bob to $\epsilon$-agree on $\mathcal{I}$, and for any $\beta \ge \frac{2}{c} \epsilon$, Alice and Bob are
\[\parens{\frac{8}{c^2} \beta + 16 \parens{G(0) - G\parens{\parens{\frac{\epsilon}{\beta}}^{1/(1 - \log_2 c)}}}}\text{-accurate}\]
after the protocol terminates.
\end{theorem}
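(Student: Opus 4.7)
The plan is to mirror the three-part structure of the proof of Theorem~\ref{thm:agreement_accurate_quad}, generalized to Bregman divergences. First I would establish a Bregman analogue of Lemma~\ref{lem:bob_close_quad}: under rectangle substitutes with respect to $G$ and writing $\epsilon = \EE{\JB_G(\mu_\sigma, \mu_\tau)}$, for any $\beta \ge \frac{2}{c}\epsilon$ we should obtain $\EE{D_G(\mu_{\sigma\tau} \parallel \mu_\tau)} \le \frac{4}{c^2}\beta + 8\bigl(G(0) - G((\epsilon/\beta)^{1/(1-\log_2 c)})\bigr)$. Then, exactly as in the proof of Theorem~\ref{thm:agreement_accurate_quad}, I would apply this lemma to the information structure restricted to the rectangle $S\times T$ induced by the protocol transcript, and use concavity of the bound in $\epsilon$ (noting that $\epsilon \mapsto \epsilon^{1/(1-\log_2 c)}$ is concave since $1-\log_2 c \ge 1$, and $G$ is convex) to push the expectation over $S, T$ through via Jensen's inequality. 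This step doubles constants, producing the factors $\frac{8}{c^2}$ and $16$ in the theorem statement.

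To prove the lemma, I partition $[0, 1]$ into intervals $I_1, \ldots, I_N$ with cutpoints $0 = x_0 < \cdots < x_N = 1$ chosen so that consecutive pairs satisfy $\JB_G(x_{k-1}, x_k) \le \beta$; the symmetry of $G$ about $\tfrac{1}{2}$ makes this easy to arrange symmetrically. Let $S^{(k)} := \{\sigma : \mu_\sigma \in I_k\}$, with $k(\sigma), k(\tau)$ defined analogously to Claim~\ref{claim:n_sigma_tau}. By the Bregman Pythagorean theorem (Proposition~\ref{prop:pythag_bregman}),
\[
\EE{D_G(\mu_{\sigma\tau} \parallel \mu_\tau)} = \EE{D_G(\mu_{\sigma\tau} \parallel \mu_{S^{(k(\sigma))}\tau})} + \EE{D_G(\mu_{S^{(k(\sigma))}\tau} \parallel \mu_\tau)}.
\]
The first term is bounded, via rectangle substitutes (in the form of Equation~\ref{eq:rec_sub_bregman}) applied to each rectangle $S^{(k)} \times \mathcal{T}$, by $\EE{D_G(\mu_\sigma \parallel \mu_{S^{(k(\sigma))}})}$. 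Since $\mu_\sigma$ and $\mu_{S^{(k(\sigma))}}$ lie in a common interval $I_{k(\sigma)}$ of $\JB_G$-diameter at most $\beta$, this quantity is $O(\beta/c^2)$ (the factor $1/c^2$ coming from relating $D_G$ to $\JB_G$ via the approximate triangle inequality). The second term vanishes whenever $k(\sigma) = k(\tau)$, and is otherwise at most $G(0) - G(\tfrac{1}{2})$ pointwise, so it is controlled by $\pr{k(\sigma) \neq k(\tau)}$ times this maximum value.

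The main obstacle---and the step where the exponent $1/(1-\log_2 c)$ emerges---is a Bregman analogue of Claim~\ref{claim:n_sigma_tau} that bounds $\pr{k(\sigma) \neq k(\tau)}$. The key observation is that if $k(\sigma)$ and $k(\tau)$ differ by $d$, then iterating the $c$-approximate triangle inequality $\log_2 d$ times along the chain of intermediate cutpoints yields $\JB_G(\mu_\sigma, \mu_\tau) \gtrsim c^{\log_2 d} \cdot \beta = d^{\log_2 c}\beta$. Combining this with Markov's inequality applied to $\EE{\JB_G(\mu_\sigma,\mu_\tau)} = \epsilon$ and optimizing over $d$ produces $\pr{k(\sigma) \neq k(\tau)} \lesssim (\epsilon/\beta)^{1/(1-\log_2 c)}$. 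The subtlety is that Markov alone does not immediately give the right bound for a fixed partition, so a shifting argument analogous to the one behind Claim~\ref{claim:n_sigma_tau} is used to choose the partition origin so that crossings of cutpoints are probabilistically rare. Rather than bounding by $G(0) - G(\tfrac{1}{2})$, the refined accounting bounds the contribution of the second Pythagorean term by $G(0) - G((\epsilon/\beta)^{1/(1-\log_2 c)})$, which is the quantity appearing in the theorem; this uses that on the event $k(\sigma) = k(\tau)$ the intervals near the boundary contribute less to the Bregman cost because $G$ is large and flat there. Assembling the two pieces gives the lemma, and the theorem follows.
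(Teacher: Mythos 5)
Your high-level architecture matches the paper's: a Bregman analogue of Lemma~\ref{lem:bob_close_quad}, the Pythagorean decomposition, rectangle substitutes applied to $S^{(k)}\times\mathcal{T}$, a partition into intervals of small $\JB_G$-width, and Jensen's inequality to lift the bound to the protocol rectangle. But two of the steps you sketch do not go through as described. First, your claim that the second Pythagorean summand $\EE{D_G(\mu_{S^{(k(\sigma))}\tau}\parallel\mu_\tau)}$ ``vanishes whenever $k(\sigma)=k(\tau)$'' is false: even on that event, $\mu_{S^{(k(\tau))}\tau}\neq\mu_\tau$ unless Bob knows $k(\sigma)$ with probability exactly one, since learning that $k(\sigma)=k(\tau)$ still shifts his posterior. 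The correct accounting (as in Lemma~\ref{lem:second_summand}, mirroring the quadratic case) shows that conditioned on $\tau=\hat\tau$ this shift is at most $q(\hat\tau)/(p(\hat\tau)-q(\hat\tau))$, whose $G$-gap is $\tilde G$ of that quantity; aggregating via Jensen is what actually produces the $G(0)-G(Q)$ term. Your ``refined accounting'' narrative attributes this term to the wrong source (note that your crude bound $Q\cdot(G(0)-G(\tfrac12))$ would be \emph{smaller} than $G(0)-G(Q)$, which should have signalled that the vanishing claim is doing illegitimate work).

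Second, the bound $\pr{k(\sigma)\neq k(\tau)}\lesssim(\epsilon/\beta)^{1/(1-\log_2 c)}$ cannot be obtained the way you propose. Your chain argument only constrains $\JB_G(\mu_\sigma,\mu_\tau)$ when the two estimates are separated by many full intervals; the dominant contribution is from \emph{adjacent} intervals ($d=1$), where both estimates can sit arbitrarily close to a shared cutpoint and Markov gives nothing. You defer this to ``a shifting argument analogous to Claim~\ref{claim:n_sigma_tau},'' but that claim's proof hinges on the identity $\EE_{x\sim U[0,1]}[\rho(x)]=\EE{\abs{\mu_\sigma-\mu_\tau}}$, which is specific to Lebesgue measure and absolute distance and has no Bregman analogue. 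The paper instead uses a two-stage construction (Algorithm~\ref{alg:intervals}: preliminary intervals with $\JB_G$-width pinned \emph{between} $\beta$ and $2\beta/c$ --- the lower bound is essential and absent from your partition --- followed by a near-optimal cutpoint within each) together with Claims~\ref{claim:alpha_2} and~\ref{claim:alphak_bound}, which build a chain of crossing points \emph{inside} each preliminary interval, each crossed with probability at least $\alpha_k/2$, and play the chain length against the $c$-approximate triangle inequality; this is where the exponent $1/(1-\log_2 c)$ genuinely arises. That machinery is the technical heart of the proof and your sketch does not supply a substitute for it. (Minor further points: the $1/c^2$ in the first summand comes from each final interval spanning two preliminary ones, not from converting $D_G$ to $\JB_G$, and the hypothesis $\beta\ge\tfrac{2}{c}\epsilon$ is needed to ensure $(\epsilon/\beta)^{1/(1-\log_2 c)}\le\tfrac12$ so that $\tilde G^*$ reduces to $G(0)-G(\cdot)$; your writeup does not use it anywhere.)
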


This result is not our most general, as it assumes that $G$ is symmetric, but this assumption likely holds for most use cases. To apply the result optimally, one must first optimize $\beta$ as a function of $G$. For example, setting $\beta = \epsilon^{r/(r + 1 - \log_2 c)}$ (with $r$ defined below) gives us the following corollary:\footnote{Corollary~\ref{cor:agreement_accurate_bregman_r} as stated (without the symmetry assumption) is actually a corollary of Theorem~\ref{thm:agreement_accurate_bregman}.}

\begin{corollary} \label{cor:agreement_accurate_bregman_r}
Assume that $G(0) - G(x), G(1) - G(1 - x) \le O(x^r)$. For any communication protocol that causes Alice and Bob to $\epsilon$-agree on $\mathcal{I}$, Alice and Bob are $O \parens{\epsilon^{r/(r + 1 - \log_2 c)}}$-accurate after the protocol terminates, where the constant hidden by $O(\cdot)$ depends on $G$.
\end{corollary}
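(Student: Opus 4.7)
The plan is to apply Theorem~\ref{thm:agreement_accurate_beta} (or, strictly, its asymmetric generalization Theorem~\ref{thm:agreement_accurate_bregman} referenced in the footnote) and optimize the free parameter $\beta$ using the growth hypothesis $G(0)-G(x),\,G(1)-G(1-x) \le O(x^r)$. The entire argument is an elementary optimization once the right theorem is in hand.

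First, I would substitute the hypothesis into the accuracy bound from Theorem~\ref{thm:agreement_accurate_beta}. With $x = (\epsilon/\beta)^{1/(1-\log_2 c)}$, the term $G(0)-G(x)$ is bounded by $O\bigl((\epsilon/\beta)^{r/(1-\log_2 c)}\bigr)$, so the full accuracy guarantee simplifies to
\[
O\!\left(\beta + (\epsilon/\beta)^{r/(1-\log_2 c)}\right),
\]
where the hidden constant depends on $G$ through the implicit constants in the growth bound and on $c$. The same bound holds in the asymmetric case by applying the hypothesis at both endpoints.

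Next, I would balance the two summands by choosing $\beta$ so that $\beta^{1 + r/(1-\log_2 c)} = \epsilon^{r/(1-\log_2 c)}$; solving gives $\beta = \epsilon^{r/(r+1-\log_2 c)}$. A direct substitution shows that both terms then equal $\epsilon^{r/(r+1-\log_2 c)}$, yielding the claimed rate $O\bigl(\epsilon^{r/(r+1-\log_2 c)}\bigr)$. Finally, I would verify the side condition $\beta \ge \tfrac{2}{c}\epsilon$ from Theorem~\ref{thm:agreement_accurate_beta}: since $c \in (0,1]$ in all natural cases (indeed Proposition~\ref{prop:triangle} gives $c = 1/2$ for squared distance and KL), we have $1-\log_2 c \ge 1$, so the exponent $r/(r+1-\log_2 c)$ is strictly less than $1$ and $\beta = \epsilon^{r/(r+1-\log_2 c)} \gg \epsilon$ for all small $\epsilon$; for the (finitely many) larger values of $\epsilon$ the bound becomes trivial after adjusting the hidden constant.

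The main obstacle is really a bookkeeping one rather than a mathematical one: Theorem~\ref{thm:agreement_accurate_beta} as stated assumes symmetry of $G$, whereas the corollary does not, so the proof must invoke the asymmetric version (Theorem~\ref{thm:agreement_accurate_bregman}) alluded to in the footnote. Assuming that result has the same shape of bound with $G(0)-G(x)$ replaced by $\max\{G(0)-G(x),\,G(1)-G(1-x)\}$, both endpoint terms are controlled uniformly by the hypothesis, and the optimization over $\beta$ proceeds exactly as above.
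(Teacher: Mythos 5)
Your proposal is correct and is essentially the paper's own (implicit) argument: the paper proves this corollary simply by plugging $\beta = \epsilon^{r/(r+1-\log_2 c)}$ into Theorem~\ref{thm:agreement_accurate_bregman}, exactly as you do. The only detail you gloss over is that the general theorem's bound is stated in terms of the concave envelope $\tilde{G}^*$ rather than $\max\{G(0)-G(x),\,G(1)-G(1-x)\}$; one should note that convexity of $G$ gives $\tilde{G}(x)\le\max\{G(0)-G(x),\,G(1)-G(1-x)\}=O(x^r)$ and that taking the concave envelope preserves this bound since any admissible $r$ is at most $1$ for a strictly convex $G$.
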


\begin{remark}
Concretely, if $G'$ is bounded then we can choose $r = 1$, in which case our bound simplifies to $O \parens{\epsilon^{1/(2 - \log_2 c)}}$. If instead we assume that $c = \frac{1}{2}$ (as is the case if $\sqrt{\JB_G(\cdot, \cdot)}$ is a metric), then the bound is $O \parens{\epsilon^{r/(r + 2)}}$. If both of these are true, as is the case for $G(x) = x^2$, then the bound is $O(\epsilon^{1/3})$, which recovers our result in Theorem~\ref{thm:agreement_accurate_quad}.
\end{remark}

For $G$ equal to the negative of Shannon entropy (i.e.\ the $G$ for which $D_G$ is KL divergence), setting $\beta = \epsilon^{1/3} (\log 1/\epsilon)^{2/3}$ in Theorem~\ref{thm:agreement_accurate_beta} gives us the following corollary.

\begin{corollary} \label{cor:agreement_accurate_bregman_log}
If $G(x) = x \log x + (1 - x) \log(1 - x)$, then for any communication protocol that causes Alice and Bob to $\epsilon$-agree on $\mathcal{I}$, Alice and Bob are $O(\epsilon^{1/3} (\log 1/\epsilon)^{2/3})$-accurate after the protocol terminates.
\end{corollary}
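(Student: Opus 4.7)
The plan is to invoke Theorem~\ref{thm:agreement_accurate_beta} directly with $\beta = \epsilon^{1/3}(\log(1/\epsilon))^{2/3}$, and then check that both terms of the resulting accuracy bound are of order $\epsilon^{1/3}(\log(1/\epsilon))^{2/3}$.

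First I would verify the hypotheses of Theorem~\ref{thm:agreement_accurate_beta}. The negative-entropy function $G(x) = x\log x + (1-x)\log(1-x)$ is strictly convex and differentiable on $(0,1)$ and symmetric about $x = 1/2$ since $G(1-x) = G(x)$. By Proposition~\ref{prop:triangle}~\ref{item:capprox_2}, $\JB_G$ satisfies the $c$-approximate triangle inequality with $c = 1/2$, and for $\epsilon$ small enough we have $\beta \ge 4\epsilon = (2/c)\epsilon$. Substituting $c = 1/2$ into the theorem gives $1/(1-\log_2 c) = 1/2$ and $8/c^2 = 32$, so the accuracy bound reduces to
$$32\beta \;+\; 16\bigl(G(0) - G(\sqrt{\epsilon/\beta})\bigr).$$

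The remaining step is to estimate $G(0) - G(z)$ for small $z > 0$. Using the convention $0\log 0 = 0$ we have $G(0) = 0$, and a Taylor expansion of $\log(1-z)$ yields $(1-z)\log(1-z) = -z + O(z^2)$, hence
$$G(0) - G(z) \;=\; z\log(1/z) + z + O(z^2) \;=\; O\bigl(z\log(1/z)\bigr)$$
as $z \to 0$. Plugging in $z = \sqrt{\epsilon/\beta} = \epsilon^{1/3}/(\log(1/\epsilon))^{1/3}$, I would note $\log(1/z) = \tfrac{1}{3}\log(1/\epsilon)(1 + o(1))$, so
$$z\log(1/z) \;=\; \frac{1}{3}\cdot\frac{\epsilon^{1/3}}{(\log(1/\epsilon))^{1/3}}\cdot\log(1/\epsilon)\,(1+o(1)) \;=\; O\bigl(\epsilon^{1/3}(\log(1/\epsilon))^{2/3}\bigr).$$
This matches the first term $32\beta = 32\,\epsilon^{1/3}(\log(1/\epsilon))^{2/3}$, yielding the claimed bound.

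No real obstacle is expected: the argument is a direct application of Theorem~\ref{thm:agreement_accurate_beta} followed by an asymptotic expansion near $0$. The one feature worth noting is that, unlike for squared error, $G'$ blows up at the endpoints, so $G(0) - G(z)$ decays only like $z\log(1/z)$ rather than linearly in $z$. This extra logarithmic factor is precisely what forces the $(\log(1/\epsilon))^{2/3}$ in the final bound and explains the choice of $\beta$: balancing $\beta$ against $\sqrt{\epsilon/\beta}\,\log(\beta/\epsilon)$ leads to $\beta^{3/2} \asymp \sqrt{\epsilon}\,\log(1/\epsilon)$, i.e.\ $\beta \asymp \epsilon^{1/3}(\log(1/\epsilon))^{2/3}$.
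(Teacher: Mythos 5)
Your proposal is correct and matches the paper's approach exactly: the paper obtains this corollary by setting $\beta = \epsilon^{1/3}(\log 1/\epsilon)^{2/3}$ in Theorem~\ref{thm:agreement_accurate_beta}, and your verification of the hypotheses (symmetry, $c=\tfrac12$ via Proposition~\ref{prop:triangle}) and the asymptotic $G(0)-G(z)=O(z\log(1/z))$ correctly fills in the details the paper leaves implicit.
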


Theorem~\ref{thm:agreement_accurate_beta} follows from our most general result about agreement implying accuracy:

\begin{theorem} \label{thm:agreement_accurate_bregman}
Let $\tilde{G}(x) := \max_{a, b: \abs{a - b} \le x}(G(a) - G(b))$ be the maximum possible difference in $G$-values of two points that differ by at most $x$, and let $\tilde{G}^*(x)$ be the concave envelope of $\tilde{G}$, i.e.
\[\tilde{G}^*(x) := \max_{0 \le a, b, w \le 1: wa + (1 - w)b = x} w \tilde{G}(a) + (1 - w) \tilde{G}(b).\]
For any communication protocol that causes Alice and Bob to $\epsilon$-agree on $\mathcal{I}$, and for any $\beta > 0$, Alice and Bob are
\[\parens{\frac{8}{c^2} \beta + 16 \tilde{G}^* \parens{\parens{\frac{\epsilon}{\beta}}^{1/(1 - \log_2 c)}}} \text{-accurate}\]
after the protocol terminates.
\end{theorem}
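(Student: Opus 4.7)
The plan is to follow the two-step structure of the quadratic proof: first establish a Bregman analog of Lemma~\ref{lem:bob_close_quad}, then deduce the theorem by applying that lemma to the restriction of $\mathcal{I}$ to the terminal rectangle $S \times T$ of the protocol. Concretely, the lemma would state: under rectangle substitutes with respect to $G$, if $\EE{\JB_G(\mu_\sigma, \mu_\tau)} \le \epsilon$, then for every $\beta > 0$,
$$\EE{D_G(\mu_{\sigma\tau} \parallel \mu_\tau)} \;\le\; \frac{4}{c^2}\beta \;+\; 8\,\tilde{G}^*\!\left(\left(\frac{\epsilon}{\beta}\right)^{1/(1 - \log_2 c)}\right).$$
The deduction is essentially verbatim from the proof of Theorem~\ref{thm:agreement_accurate_quad}, except that pulling $\EE[S,T]{\cdot}$ through the bound invokes Jensen's inequality on two concave functions: $\tilde{G}^*$ (concave by construction, which is precisely why the concave envelope appears in the theorem rather than $\tilde{G}$ itself), and $x \mapsto x^{1/(1 - \log_2 c)}$ (concave since $1 - \log_2 c \ge 1$ for $c \in (0,1]$).

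For the lemma, I will generalize Claim~\ref{claim:n_sigma_tau} and then mimic the quadratic calculation. For a parameter $\delta > 0$, partition $[0,1]$ into consecutive intervals of length at most $\delta$ via the randomized-shift argument used in the quadratic proof, and set $k(\sigma)$, $k(\tau)$, and $S^{(k)} := \{\sigma : \mu_\sigma \text{ lies in interval } k\}$ accordingly. Bounding $\pr{k(\sigma) \ne k(\tau)}$ is subtler than in the quadratic case because $\JB_G$ is not a metric, so ``$\JB_G(\mu_\sigma, \mu_\tau)$ small'' does not directly yield ``$|\mu_\sigma - \mu_\tau|$ small.'' To convert between the two, I apply the $c$-approximate triangle inequality iteratively along a dyadic chain of intermediate points; this translates a separation of order $\delta$ in the first coordinate into a lower bound of order $\delta^{1 - \log_2 c}$ on $\JB_G(\mu_\sigma, \mu_\tau)$, and Markov's inequality then gives $\pr{k(\sigma) \ne k(\tau)} \lesssim \epsilon / \delta^{1 - \log_2 c}$. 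With the partition in place, the Bregman Pythagorean theorem (Proposition~\ref{prop:pythag_bregman}) decomposes
$$\EE{D_G(\mu_{\sigma\tau} \parallel \mu_\tau)} \;=\; \EE{D_G(\mu_{\sigma\tau} \parallel \mu_{S^{(k(\sigma))}\tau})} \;+\; \EE{D_G(\mu_{S^{(k(\sigma))}\tau} \parallel \mu_\tau)}.$$
Applying rectangle substitutes with $T = \mathcal{T}$ and $S = S^{(k)}$ summed over $k$ bounds the first summand by $\EE{D_G(\mu_\sigma \parallel \mu_{S^{(k(\sigma))}})} = \EE{G(\mu_\sigma) - G(\mu_{S^{(k(\sigma))}})}$, where the linear part of $D_G$ vanishes because $\mu_{S^{(k)}} = \EE{\mu_\sigma \mid S^{(k)}}$; since both arguments lie in the same interval of length at most $\delta$, this summand is at most $\tilde{G}(\delta) \le \tilde{G}^*(\delta)$. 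The second summand equals $\EE[\tau]{\EE{G(\mu_{S^{(k(\sigma))}\tau}) \mid \tau} - G(\mu_\tau)}$ and is controlled by $\pr{k(\sigma) \ne k(\tau)}$ together with the global $G$-range bound; choosing $\delta = (\epsilon/\beta)^{1/(1 - \log_2 c)}$ balances the two error contributions and produces the stated form.

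The main obstacle is the partition step: obtaining the precise exponent $1 - \log_2 c$ requires a careful, level-by-level application of the $c$-approximate triangle inequality along dyadic chains, since each doubling of the chain length costs an additional factor of $1/c$, and this geometric blow-up must be paid against the number of intervals to give the final exponent. Beyond this, the argument is structurally identical to that of Theorem~\ref{thm:agreement_accurate_quad}, with Proposition~\ref{prop:pythag_bregman} and the Bregman-divergence form of rectangle substitutes (Equation~\ref{eq:rec_sub_bregman}) substituting directly for their $L^2$ counterparts; the constants $8/c^2$ and $16$ in the theorem statement then emerge from combining the factor of $2$ in the Pythagorean split with the $1/c$ penalties collected during the partition step and the final Jensen step.
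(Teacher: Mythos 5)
Your reduction of the theorem to a lemma bounding $\EE{D_G(\mu_{\sigma\tau}\parallel\mu_\tau)}$, followed by restriction to the terminal rectangle $S\times T$ and two applications of Jensen's inequality (concavity of $\tilde{G}^*$ and of $x^{1/(1-\log_2 c)}$), is exactly the paper's deduction. The gap is in your proof of the lemma, at the partition step. You partition $[0,1]$ into intervals of \emph{Euclidean} length $\delta$ and claim that iterating the $c$-approximate triangle inequality along a dyadic chain converts a separation of order $\delta$ into a lower bound $\JB_G(\mu_\sigma,\mu_\tau)\gtrsim\delta^{1-\log_2 c}$. No such bound exists: the $c$-approximate triangle inequality only bounds $\JB_G(a,b)$ from \emph{above} by $\frac{1}{c}\parens{\JB_G(a,x)+\JB_G(x,b)}$ and says nothing about Euclidean distance. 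For $G(x)=\lambda x^2$ with $\lambda$ small, $\JB_G(a,b)=\lambda(a-b)^2/4$ is arbitrarily small for fixed $|a-b|=\delta$ while $c=\frac12$ is unchanged, so Markov's inequality cannot yield $\pr{k(\sigma)\ne k(\tau)}\lesssim\epsilon/\delta^{1-\log_2 c}$ with an absolute constant. This is precisely the difficulty the paper flags (``the length of an interval cannot be inferred from the Bregman divergence between its endpoints''). The paper instead calibrates the partition by Jensen--Bregman mass: Algorithm~\ref{alg:intervals} first cuts $[0,1]$ into intervals $I_k'$ with $\JB_G(I_k')\in[\beta,2\beta/c]$, then within each $I_k'$ picks a boundary point nearly minimizing the thwart density $\rho(x)$. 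The exponent $1-\log_2 c$ emerges from chaining \emph{inside} each $I_k'$, where the chain length is controlled by the thwart density rather than by length (Claims~\ref{claim:alpha_2} and~\ref{claim:alphak_bound}), followed by a weighted power-mean step (Lemma~\ref{lem:jbgi_small}).

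A second, related problem is that your two summands come out bounded the wrong way around, so the final form does not match the statement. With $\JB_G$-calibrated intervals the \emph{first} summand is at most $2\JB_G(I_k)\le 8\beta/c^2$ by Proposition~\ref{prop:bregman_facts}~\ref{item:fact1}---this is where $\frac{8}{c^2}\beta$ comes from---while $\tilde{G}^*$ enters through the \emph{second} summand via Lemma~\ref{lem:second_summand}, which bounds it by $2\tilde{G}^*(Q)$ with $Q=\pr{k(\sigma)\ne k(\tau)}$; that lemma needs the observation that conditioned on $k(\sigma)=k(\tau)$ Bob's posterior moves by at most $q(\hat\tau)/(p(\hat\tau)-q(\hat\tau))$, not merely ``the global $G$-range bound.'' Your balancing ($\tilde{G}^*(\delta)$ for the first summand and roughly $Q\cdot M$ with $Q\lesssim\beta$ for the second) would at best give $\tilde{G}^*\parens{(\epsilon/\beta)^{1/(1-\log_2 c)}}+M\beta$, with the range $M$ of $G$ appearing where the theorem has the absolute constant $8/c^2$.
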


\begin{proof}
To prove Theorem~\ref{thm:agreement_accurate_bregman}, it suffices to prove the following lemma.
\begin{lemma} \label{lem:bob_close_bregman}
Let $G$ be a differentiable convex function on $[0, 1]$ and $c \in (0, 1)$ be such that $\JB_G$ satisfies the $c$-approximate triangle inequality. Let $\mathcal{I} = (\Omega, \PP, \mathcal{S}, \mathcal{T}, Y)$ be an information structure that satisfies rectangle substitutes with respect to $G$. Let $\epsilon = \EE{\JB_G(\mu_{\sigma} \parallel \mu_{\tau})}$. Then for any $\beta > 0$, we have
\[\EE{D_G(\mu_{\sigma \tau} \parallel \mu_{\tau})} \le \frac{8}{c^2} \beta + 16 \tilde{G}^* \parens{\parens{\frac{\epsilon}{\beta}}^{1/(1 - \log_2 c)}}.\]
\end{lemma}

\noindent Let us first prove Theorem~\ref{thm:agreement_accurate_bregman} assuming Lemma~\ref{lem:bob_close_bregman} is true.\\

Consider any protocol that causes Alice and Bob to $\epsilon$-agree on $\mathcal{I}$. Let $S$ be the set of possible signals of Alice at the end of the protocol which are consistent with the protocol transcript, and define $T$ likewise for Bob.

Let $\epsilon_{ST} = \EE{\JB_G(\mu_{\sigma T}, \mu_{S \tau}) \mid S, T}$. Note that
\[\EE[S, T]{\epsilon_{ST}} = \EE[S, T]{\EE{\JB_G(\mu_{\sigma T}, \mu_{S \tau}) \mid S, T}} = \EE{\JB_G(\mu_{\sigma T}, \mu_{S \tau})} \le \epsilon.\]
Therefore, for any $\beta > 0$ we have
\begin{align*}
\EE{D_G(\mu_{\sigma \tau} \parallel \mu_{S \tau})} &\le \EE[S, T]{\frac{8}{c^2} \beta + 16 \tilde{G}^* \parens{\parens{\frac{\epsilon_{ST}}{\beta}}^{1/(1 - \log_2 c)}}}\\
&\le \frac{8}{c^2} \beta + 16 \tilde{G}^* \parens{\EE[S, T]{\parens{\frac{\epsilon_{ST}}{\beta}}^{1/(1 - \log_2 c)}}}\\
&\le \frac{8}{c^2} \beta + 16 \tilde{G}^* \parens{{\parens{\frac{\EE[S, T]{\epsilon_{ST}}}{\beta}}^{1/(1 - \log_2 c)}}}\\
&\le \frac{8}{c^2} \beta + 16 \tilde{G}^* \parens{\parens{\frac{\epsilon}{\beta}}^{1/(1 - \log_2 c)}}.
\end{align*}
In the first step, we apply Lemma~\ref{lem:bob_close_bregman} to the information structure $\mathcal{I}$ restricted to $S \times T$---that is, to $\mathcal{I}' = (\Omega', \PP', S, T, Y)$, where $\Omega' = \{\omega \in \Omega: \sigma \in S, \tau \in T\}$ and $\PP'[\omega] = \pr{\omega \mid \sigma \in S, \tau \in T}$. The next two steps follow by the convexity of $\tilde{G}^*$ and $x^{1/(1 - \log_2 c)}$, respectively.
\end{proof}

The basic outline of the proof of Lemma~\ref{lem:bob_close_bregman} is similar to that of Lemma~\ref{lem:bob_close_quad}. Once again, we partition $[0, 1]$ into $N$ intervals. Analogously to Equation~\ref{eq:two_sums_quad}, and with $S^{(k(\sigma))}$ defined analogously, we find that
\[\EE{D_G(\mu_{\sigma \tau} \parallel \mu_{\tau})} \le \EE{D_G(\mu_{\sigma} \parallel \mu_{S^{(k(\sigma))}})} + \EE{D_G(\mu_{S^{(k(\sigma))} \tau} \parallel \mu_{\tau})}.\]
As before, we wish to upper bound each summand. However, the fact that the Bregman divergence is now arbitrary introduces complications. First, it is no longer the case that we can directly relate the length of an interval to the Bregman divergence between its endpoints. Second, we consider functions $G$ that become infinitely steep near $0$ and $1$ (such as the negative of Shannon entropy), which makes matters more challenging. This means that we need to be more careful when partitioning $[0, 1]$ into $N$ intervals: see Algorithm~\ref{alg:intervals} for our new approach. Additionally, bounding the second summand involves reasoning carefully about the behavior of the function $G$, which is responsible for the introduction of $\tilde{G}^*$ into the lemma statement. We defer the full proof of Lemma~\ref{lem:bob_close_bregman} to Appendix~\ref{appx:bregman_omitted}.

\section{Connections to Markets} \label{sec:markets}

In this work, we established a natural condition on information structures, \emph{rectangle substitutes}, under which any agreement protocol results in accurate beliefs.
As we saw, a particularly natural class of agreement protocols are \emph{expectation-sharing} protocols, where Alice and Bob take turns stating their current expected value, or discretizations thereof.

Expectation-sharing protocols have close connections to financial markets.
In markets, the actions of traders reveal partial information about their believed value for an asset, i.e., their expectation.
Specifically, a trader's decision about whether to buy or sell, and how much, can be viewed as revealing a discretization of this expectation.
In many theoretical models of markets (see e.g.\ \cite{ostrovsky2012information}) traders eventually reach agreement.
The intuition behind this phenomenon is that a trader who disagrees with the price leaves money on the table by refusing to trade.
Our work thus provides a lens into a well-studied question:\footnote{This is related to the \emph{efficient market hypothesis}, the problem of when market prices reflect all available information, which traces back at least to \citet{fama1970efficient} and \citet{hayek1945use}. Modern models of financial markets are often based on \citet{kyle1985continuous}; we refer the reader to e.g. \cite{ostrovsky2012information} and references therein for further information.} when are market prices accurate?

An important caveat, however, is that traders behave strategically, and may not disclose their true expected value.
For example, a trader may choose to withhold information until a later point when doing so would be more profitable.
Therefore, to interpret the actions of traders as revealing discretized versions of their expected value, one first has to understand the Bayes-Nash equilibria of the market.
\citet{cw16} studies conditions under which traders are incentivized to reveal all of their information on their first trading opportunity.
They call a market equilibrium \emph{all-rush} if every trader is incentivized to reveal their information immediately.
Their main result, roughly speaking, is that there is an all-rush equilibrium if and only if the information structure satisfies \emph{strong substitutes}---a different strengthening of the weak substitutes condition.
This result is specific to settings in which traders have the option to reveal all of their information on their turn---a setting that would be considered trivial from the standpoint of communication theory.

An exciting question for further study is therefore:
under what information structure conditions and market settings is it a Bayes-Nash equilibrium to follow an agreement protocol leading to accurate beliefs?
In other words, what conditions give not only that agreement implies accuracy, but also that the market incentivizes participants to \emph{follow} the protocol?
Together with \citet{cw16}, our work suggests that certain substitutes-like conditions could suffice.

\printbibliography

\appendix
\section{Details Omitted From Section~\ref{sec:prelims}} \label{appx:prelims_omitted}
Above we stated that the weak substitutes condition is not sufficient for agreement to imply accuracy. To see this, consider the following information structure:
\begin{itemize}
    \item Nature flips a coin. Alice's and Bob's signals are each a pair consisting of the outcome of the coin flip and an additional bit:
    \item If the coin lands heads, Alice and Bob are given highly correlated bits as part of their signals: the bits are $(0, 0)$ and $(1, 1)$ with probability 45\% each and $(1, 0)$ and $(0, 1)$ with probability 5\% each.
    \item If the coin lands tails, Alice and Bob are given highly anticorrelated bits as part of their signals: the bits are $(0, 0)$ and $(1, 1)$ with probability 5\% each and $(1, 0)$ and $(0, 1)$ with probability 45\% each.
\end{itemize}
The value $Y$ is the XOR of Alice's and Bob's bits. It can be verified that this information structure satisfies weak substitutes; the intuition is that a majority of the value comes from knowing the outcome of the coin flip, which can be inferred from Alice's (or Bob's) signal alone.

Alice and Bob 0-agree at the very start. However, they are not 0-accurate, since their expectations are either 10\% or 90\%, and the right answer is either 0 or 1. Therefore, weak substitutes alone is insufficient for agreement to imply accuracy.

\section{Details Omitted From Section~\ref{sec:quadratic}} \label{appx:quad_omitted}
\begin{proof}[Proof of Claim~\ref{claim:n_sigma_tau}]
We claim that in fact we can choose the $x_i$'s so that each $x_i$ is in $\brackets{\frac{i}{N} - \frac{1}{2N}, \frac{i}{N} + \frac{1}{2N}}$. This ensures that each interval has length at most $\frac{2}{N}$.

For $x \in [0, 1]$, let $\rho(x)$ be the probability that $x$ is between $\mu_{\sigma}$ and $\mu_{\tau}$, inclusive. Note that $\pr{k(\sigma) \neq k(\tau)} \le \sum_{i = 1}^{N - 1} \rho(x_i)$.

Observe that if $x$ is selected uniformly from $[0, 1]$, the expected value of $\rho(x)$ is equal to $\abs{\mu_{\sigma} - \mu_{\tau}}$, because both quantities are equal to the probability that $x$ is between $\mu_{\sigma}$ and $\mu_{\tau}$.
Therefore, if $(\sigma, \tau)$ is additionally chosen according to $\PP$, we have
\[\EE[{x \leftarrow [0, 1]}]{\rho(x)} = \EE{\abs{\mu_{\sigma} - \mu_{\tau}}} \le \sqrt{\EE{(\mu_{\sigma} - \mu_{\tau})^2}} = \sqrt{\epsilon}.\]
This means that
\[\sum_{i = 1}^{N - 1} \EE[{x \leftarrow \brackets{\frac{i}{N} - \frac{1}{2N}, \frac{i}{N} + \frac{1}{2N}}}]{\rho(x)} = (N - 1) \EE[{x \leftarrow \brackets{\frac{1}{2N}, 1 - \frac{1}{2N}}}]{\rho(x)} \le \sqrt{\epsilon} N.\]
Thus, if each $x_i$ is selected uniformly at random from $\brackets{\frac{i}{N} - \frac{1}{2N}, \frac{i}{N} + \frac{1}{2N}}$, the expected value of $\pr{k(\sigma) \neq k(\tau)}$ would be at most $\sqrt{\epsilon} N$. In particular this means that there exist choices of the $x_i$'s such that $\pr{k(\sigma) \neq k(\tau)} \le \sqrt{\epsilon} N$.
\end{proof}

\begin{proof}[Proof of Lemma~\ref{lem:bob_close_quad}]
Fix a large positive integer $N$ (we will later find it optimal to set $N = \epsilon^{-1/3}$). Consider a partition of $[0, 1]$ into $N$ intervals $[0, x_1), [x_1, x_2), \dots, [x_{N - 1}, 1]$ satisfying the conditions of Claim~\ref{claim:n_sigma_tau}. Let $S^{(k)} := \{\sigma \in \mathcal{S}: x_{k - 1} \le \mu_{\sigma} < x_k\}$. Additionally, let $k(\sigma)$ and $k(\tau)$ be as defined in Claim~\ref{claim:n_sigma_tau}.

Our goal is to upper bound the expectation of $(\mu_{\sigma \tau} - \mu_{\tau})^2$. In pursuit of this goal, we observe that by the Pythagorean theorem, we have
\[\EE{(\mu_{\sigma \tau} - \mu_{\tau})^2} = \EE{(\mu_{\sigma \tau} - \mu_{S^{(k(\sigma))} \tau})^2} + \EE{(\mu_{S^{(k(\sigma))} \tau} - \mu_{\tau})^2}.\]
We now use the rectangle substitutes assumption: for any $k$, by applying Equation~\ref{eq:rec_sub_quad} to $S = S^{(k)}$ and $T = \mathcal{T}$, we know that
\[\EE{(\mu_{\sigma} - \mu_{S^{(k)}})^2 \mid \sigma \in S^{(k)}} \ge \EE{(\mu_{\sigma \tau} - \mu_{S^{(k)} \tau})^2 \mid \sigma \in S^{(k)}}.\]
Taking the expectation over $k$ (i.e.\ choosing each $k$ with probability equal to $\pr{\sigma \in S^{(k)}}$), we have that
\begin{equation} \label{eq:delta_change}
\EE{(\mu_{\sigma} - \mu_{S^{(k(\sigma))}})^2} \ge \EE{(\mu_{\sigma \tau} - \mu_{S^{(k(\sigma))} \tau})^2}.
\end{equation}
Therefore, we have
\begin{equation}
\label{eq:delta_change_2}
\EE{(\mu_{\sigma \tau} - \mu_{\tau})^2} \le \EE{(\mu_{\sigma} - \mu_{S^{(k(\sigma))}})^2} + \EE{(\mu_{S^{(k(\sigma))} \tau} - \mu_{\tau})^2}.
\end{equation}
We will argue use Claim~\ref{claim:n_sigma_tau} to argue that each of these two summands is small. 
The argument regarding the first summand is straightforward: for any $\sigma$, we have that $x_{k(\sigma)} \le \mu_{\sigma}, \mu_{S^{(k(\sigma))}} < x_{k(\sigma) + 1} \le x_{k(\sigma)} + \frac{2}{N}$, which means that $\EE{(\mu_{\sigma} - \mu_{S^{(k(\sigma))}})^2} \le \frac{4}{N^2}$.

We now upper bound the second summand.%
\footnote{The proof below takes sums over $\hat{\tau} \in \mathcal{T}$ and thus implicitly assumes that $\mathcal{T}$ is finite, but the proof extends to infinite $\mathcal{T}$, with sums over $\tau$ replaced by integrals with respect to the probability measure over $\mathcal{T}$.}
For any $\hat{\tau} \in \mathcal{T}$, let $p(\hat{\tau}) = \pr{\tau = \hat{\tau}}$ and $q(\hat{\tau}) = \pr{\tau = \hat{\tau}, k(\sigma) \neq k(\tau)}$. Then $\sum_{\hat{\tau} \in \mathcal{T}} p(\hat{\tau}) = 1$ and $\sum_{\hat{\tau} \in \mathcal{T}} q(\hat{\tau}) \le \sqrt{\epsilon} N$.
Observe that
\begin{align} \label{eq:two_expectations}
\EE{(\mu_{S^{(k(\sigma))} \tau} - \mu_{\tau})^2} &= \sum_{\hat{\tau}} p(\hat{\tau}) \EE{(\mu_{S^{(k(\sigma))} \hat{\tau}} - \mu_{\hat{\tau}})^2 \mid \tau = \hat{\tau}} \nonumber\\
&= \sum_{\hat{\tau}} (p(\hat{\tau}) - q(\hat{\tau})) \EE{(\mu_{S^{(k(\sigma))} \hat{\tau}} - \mu_{\hat{\tau}})^2 \mid \tau = \hat{\tau}, k(\sigma) = k(\hat{\tau})} \nonumber\\
&\qquad + q(\hat{\tau}) \EE{(\mu_{S^{(k(\sigma))} \hat{\tau}} - \mu_{\hat{\tau}})^2 \mid \tau = \hat{\tau}, k(\sigma) \neq k(\hat{\tau})}.
\end{align}

To handle the first expectation, we note that if $k(\sigma) = k(\tau)$, then $\abs{\mu_{S^{(k(\sigma))} \hat{\tau}} - \mu_{\hat{\tau}}} \le \frac{q(\hat{\tau})}{p(\hat{\tau})}$.
To see this, observe
\[p(\hat{\tau}) \mu_{\hat{\tau}} = (p(\hat{\tau}) - q(\hat{\tau})) \mu_{S^{(k(\sigma))} \hat{\tau}} + q(\hat{\tau}) \mu_{\mathcal{S} \setminus S^{(k(\sigma))} \hat{\tau}}~.\]
Rerranging and taking absolute values, we conclude
\[p(\hat{\tau}) \abs{\mu_{S^{(k(\sigma))} \hat{\tau}} - \mu_{\hat{\tau}}} = q(\hat{\tau}) \abs{\mu_{S^{(k(\sigma))} \hat{\tau}} - \mu_{\mathcal{S} \setminus S^{(k(\sigma))}}} \le q(\hat{\tau}).\]
Therefore, recalling $q(\hat\tau) \leq p(\hat\tau)$, we have
\begin{align*}
&(p(\hat{\tau}) - q(\hat{\tau})) \EE{(\mu_{S^{(k(\sigma))} \hat{\tau}} - \mu_{\hat{\tau}})^2 \mid \tau = \hat{\tau}, k(\sigma) = k(\hat{\tau})} 
\le (p(\hat{\tau}) - q(\hat{\tau})) \parens{\frac{q(\hat{\tau})}{p(\hat{\tau})}}^2
\le \frac{q(\hat{\tau})^2}{p(\hat{\tau})}
\le q(\hat{\tau}).
\end{align*}

On the other hand, we can bound the second expectation in Equation~\ref{eq:two_expectations} by $1$. Therefore we have
\[\EE{(\mu_{S^{(k(\sigma))} \tau} - \mu_{\tau})^2} \le \sum_{\hat{\tau}} (q(\hat{\tau}) + q(\hat{\tau})) = 2 \sum_{\hat{\tau}} q(\hat{\tau}) \le 2 \sqrt{\epsilon} N.\]
by Claim~\ref{claim:n_sigma_tau}.
To conclude, we now know that
\[\EE{(\mu_{\sigma \tau} - \mu_{\tau})^2} \le \frac{4}{N^2} + 2 \sqrt{\epsilon} N.\]
Setting $N = \epsilon^{-1/6}$ makes the right-hand side equal to $6\epsilon^{1/3}$, completing the proof.
\end{proof}

\begin{prop} \label{prop:fast_rect}
Consider the following protocol, parametrized by $\epsilon > 0$. Alice and Bob send their initial expectations to each other, rounding to the nearest multiple of $\epsilon$. This protocol entails communicating $O(\log 1/\epsilon)$ bits. At the end of the protocol, Alice and Bob $2\epsilon^2$-agree and are $\epsilon^2$-accurate (with respect to $G(x) = x^2$).
\end{prop}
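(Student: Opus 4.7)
The plan is to bypass Theorem~\ref{thm:agreement_accurate_quad} entirely and instead exploit the special structure of this one-shot protocol: it partitions $\mathcal{S}$ and $\mathcal{T}$ into pieces whose prior expectations are confined to intervals of width $\epsilon$. This directly controls the right-hand side of the rectangle substitutes inequality when we take $T = \mathcal{T}$ (respectively $S = \mathcal{S}$), giving a strong accuracy bound from which agreement follows immediately.

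First I would set up notation. Alice's message $\hat a$ is $\mu_\sigma$ rounded to the nearest multiple of $\epsilon$, so it takes $O(1/\epsilon)$ values and requires $O(\log 1/\epsilon)$ bits; the same holds for Bob. After both messages, the transcript determines rectangles $S_1 = \{\sigma' : \hat a(\sigma') = \hat a\}$ and $T_1 = \{\tau' : \hat b(\tau') = \hat b\}$, and the final expectations are $a = \mu_{\sigma T_1}$ and $b = \mu_{S_1 \tau}$. Next I would establish Bob's $\epsilon^2$-accuracy. Fix any value of $\hat a$ with associated $S_1$; every $\sigma \in S_1$ satisfies $|\mu_\sigma - \hat a| \le \epsilon/2$, and averaging gives $|\mu_{S_1} - \hat a| \le \epsilon/2$, so $(\mu_\sigma - \mu_{S_1})^2 \le \epsilon^2$ pointwise on $S_1$. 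Applying rectangle substitutes \eqref{eq:rec_sub_quad} to the pair $(S_1, \mathcal{T})$ (using $\mu_{\sigma \mathcal{T}} = \mu_\sigma$ and $\mu_{S_1 \mathcal{T}} = \mu_{S_1}$) yields
\[\EE{(\mu_{\sigma\tau} - \mu_{S_1 \tau})^2 \mid S_1} \le \EE{(\mu_\sigma - \mu_{S_1})^2 \mid S_1} \le \epsilon^2,\]
and taking total expectation over $S_1$ gives Bob's $\epsilon^2$-accuracy. Alice's $\epsilon^2$-accuracy follows by the symmetric form of rectangle substitutes applied to $(\mathcal{S}, T_1)$.

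Agreement is then immediate from two-sided accuracy via $(x+y)^2 \le 2x^2 + 2y^2$:
\[\tfrac{1}{4}\EE{(a - b)^2} \le \tfrac{1}{2}\bigl(\EE{(\mu_{\sigma\tau} - a)^2} + \EE{(\mu_{\sigma\tau} - b)^2}\bigr) \le \epsilon^2 \le 2\epsilon^2,\]
establishing $2\epsilon^2$-agreement. The one subtlety, and the reason the proof is so clean, is the choice to invoke rectangle substitutes with the coarse partition $T = \mathcal{T}$ rather than with the finer $(S_1, T_1)$: doing so replaces Alice's posterior $\mu_{\sigma T_1}$ (for which we have no obvious small-interval control) by her prior $\mu_\sigma$ (for which the rounding width directly gives such control), while keeping the inequality in the useful direction. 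Beyond spotting this, there is no real obstacle.
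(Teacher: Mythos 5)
Your proposal is correct and follows essentially the same route as the paper's proof: apply the rectangle substitutes inequality with the coarse side $T=\mathcal{T}$ (resp.\ $S=\mathcal{S}$) to reduce accuracy to the pointwise bound $(\mu_\sigma-\mu_{S})^2\le\epsilon^2$ coming from the rounding, and then deduce agreement from two-sided accuracy via the $\tfrac{1}{2}$-approximate triangle inequality (your $(x+y)^2\le 2x^2+2y^2$ step is exactly that inequality for $G(x)=x^2$). The only differences are cosmetic: you spell out the $\epsilon/2$ rounding argument that the paper leaves as ``by construction.''
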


\begin{proof}
Let $S$ be the set of possible signals of Alice at the end of the protocol which are consistent with the protocol transcript, and define $T$ likewise for Bob. Recall that we use $\mathcal{S}$ and $\mathcal{T}$ to denote the sets of all of Alice's and Bob's possible signals, respectively. We have
\[\EE{(\mu_{\sigma \tau} - \mu_{S \tau})^2} \le \EE{(\mu_{\sigma} - \mu_{S})^2} \le \epsilon^2,\]
since $\mu_{\sigma}$ and $\mu_{S})$ are guaranteed to be within $\epsilon$ of each other by construction. Thus, Bob is $\epsilon^2$-accurate, and likewise for Alice. By the $\frac{1}{2}$-approximate triangle inequality for $G(x) = x^2$, it follows that Alice and Bob $2\epsilon^2$-agree.
\end{proof}

\section{Details Omitted From Section~\ref{sec:bregman}} \label{appx:bregman_omitted}
\begin{proof}[Proof of Proposition~\ref{prop:pythag_bregman}]
Let $g = G'$. We have
\begin{align*}
&\EE{D_G(A \parallel B)} + \EE{D_G(B \parallel C)} - \EE{D_G(A \parallel C)}\\
&= \EE{G(A) - G(B) - (A - B)g(B) + G(B) - G(C) - (B - C)g(C) - G(A) + G(C) + (A - C)g(C)}\\
&= \EE{(A - B)(g(C) - g(B))} = \EE{\EE{(A - B)(g(C) - g(B)) \mid \mathcal{F}}}\\
&= \EE{(g(C) - g(B)) \EE{A - B \mid \mathcal{F}}} = \EE{(g(C) - g(B))(\EE{A \mid \mathcal{F}} - B)} = 0.
\end{align*}
The third-to-last step follows from the fact that $g(C) - g(B)$ is $\mathcal{F}$-measurable (we are using the ``pulling out known factors" property of conditional expectation). The last step follows from the fact that $\EE{A \mid \mathcal{F}} = B$.
\end{proof}

\begin{prop} \label{prop:bregman_facts}
Let $G$ be a differentiable convex function on the interval $[0, 1]$. For all $0 \le a \le b \le 1$, we have
\begin{enumerate}[label=(\roman*)]
    \item \label{item:min_bregman} $\frac{1}{2}(D_G(a \parallel x) + D_G(b \parallel x)) \ge \JB_G(a, b)$ for every $x \in [0, 1]$.
    \item \label{item:triangle} $\JB_G$ satisfies the reverse triangle inequality: for every $x \in [a, b]$, we have $\JB_G(a, x) + \JB_G(x, b) \le \JB_G(a, b)$.
    \item \label{item:shortening} For all $a \le a' \le b' \le b$, we have $\JB_G(a', b') \le \JB_G(a, b)$.
    \item \label{item:fact1} For a random variable $X$ supported on $[a, b]$, we have
    \[\EE{D_G(X \parallel \EE{X})} = \EE{G(X)} - G(\EE{X}) \le 2 \JB_G(a, b).\]
\end{enumerate}
\end{prop}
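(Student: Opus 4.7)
The plan is to establish the four claims in order, each by a short direct argument. Parts~(i) and~(iv) will fall out of Proposition~\ref{prop:bregman_exp}, while part~(ii) carries the real content and part~(iii) follows as a corollary.

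For part~(i), I would apply Proposition~\ref{prop:bregman_exp} to the two-point random variable $Y$ that equals $a$ with probability $\tfrac12$ and $b$ with probability $\tfrac12$. Then $\EE{Y} = \tfrac{a+b}{2}$ and $\EE{D_G(Y \parallel x)} = \tfrac12\parens{D_G(a \parallel x) + D_G(b \parallel x)}$, so Proposition~\ref{prop:bregman_exp} gives that this is minimized at $x = \tfrac{a+b}{2}$, where it equals $\JB_G(a,b)$ by the first form of the Jensen-Bregman divergence.

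For part~(ii), I would work with the symmetric form $\JB_G(a,b) = \tfrac{G(a)+G(b)}{2} - G\parens{\tfrac{a+b}{2}}$. Substituting into the claimed inequality and cancelling common terms, it reduces to
\[G(x) + G\parens{\tfrac{a+b}{2}} \;\le\; G\parens{\tfrac{a+x}{2}} + G\parens{\tfrac{x+b}{2}}.\]
The key observation, which I expect to be the trickiest step to spot, is that with $\lambda := \tfrac{b-x}{b-a} \in [0,1]$ both $x$ and $\tfrac{a+b}{2}$ are convex combinations of $\tfrac{a+x}{2}$ and $\tfrac{x+b}{2}$, with complementary weights $(\lambda,1-\lambda)$ and $(1-\lambda,\lambda)$ respectively. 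Applying convexity of $G$ to each combination and summing yields the inequality.

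Part~(iii) then follows by applying (ii) twice: first to $a\le b'\le b$ (with interior point $b'$) to obtain $\JB_G(a,b') \le \JB_G(a,b) - \JB_G(b',b) \le \JB_G(a,b)$, using that $\JB_G \ge 0$ by Jensen's inequality applied to the convex function $G$; then analogously to $a \le a' \le b'$ to obtain $\JB_G(a',b') \le \JB_G(a,b')$. For part~(iv), the equality $\EE{D_G(X \parallel \EE{X})} = \EE{G(X)} - G(\EE{X})$ follows by expanding the definition of $D_G$ and noting that $\EE{X}$ and $G'(\EE{X})$ are constants while $\EE{X - \EE{X}} = 0$. For the inequality, I would invoke Proposition~\ref{prop:bregman_exp} with $m := \tfrac{a+b}{2}$ to get $\EE{D_G(X \parallel \EE{X})} \le \EE{D_G(X \parallel m)}$, and then observe that $D_G(\,\cdot\, \parallel m)$ is a convex function of its first argument (being $G$ plus an affine term), so on $[a,b]$ it is bounded pointwise by $\max\parens{D_G(a \parallel m), D_G(b \parallel m)} \le D_G(a \parallel m) + D_G(b \parallel m) = 2\JB_G(a,b)$; taking expectation then completes the proof.
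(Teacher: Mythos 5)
Your proof is correct, and parts (i) and (iii) match the paper's argument (two-point random variable plus Proposition~\ref{prop:bregman_exp}, then two applications of (ii) with nonnegativity of $\JB_G$). Parts (ii) and (iv) take genuinely different routes. For (ii), the paper first normalizes by adding an affine function to $G$ so that $G(x) = G\parens{\tfrac{a+b}{2}}$ (harmless, since $\JB_G$ is unchanged), assumes WLOG $x \le \tfrac{a+b}{2}$, and then uses convexity to argue $G\parens{\tfrac{a+x}{2}}, G\parens{\tfrac{x+b}{2}} \ge G(x)$, bounding each of $\JB_G(a,x)$ and $\JB_G(x,b)$ separately; your reduction to $G(x) + G\parens{\tfrac{a+b}{2}} \le G\parens{\tfrac{a+x}{2}} + G\parens{\tfrac{x+b}{2}}$ via complementary convex combinations is a clean majorization-style argument that avoids the normalization entirely, and your weight computation ($\lambda = \tfrac{b-x}{b-a}$, with the two targets using weights $(\lambda, 1-\lambda)$ and $(1-\lambda,\lambda)$) checks out. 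For (iv), the paper proves the inequality by a direct algebraic chain: it applies convexity at $\tfrac{a+b}{2}$ to lower-bound $\JB_G(a,b)$ by an expression involving $G(\EE{X})$, and only at the end invokes the fact that $\EE{G(X)}$ is maximized by the two-point distribution on $\{a,b\}$. Your argument --- first recentering at the midpoint $m$ via Proposition~\ref{prop:bregman_exp}, then bounding the convex, nonnegative function $D_G(\cdot \parallel m)$ pointwise by the sum of its endpoint values, which equals $2\JB_G(a,b)$ by definition --- is shorter and more transparent, and uses the same two essential facts (elicitation of the mean, and endpoint extremality of convex functions) in a cleaner order. Both of your alternative arguments are complete as sketched.
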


\begin{proof}
Fact~\ref{item:min_bregman} follows from Proposition~\ref{prop:bregman_exp}. Regarding Fact~\ref{item:triangle}, without loss of generality assume that $x \le \frac{a + b}{2}$ and that $G(x) = G \parens{\frac{a + b}{2}}$ (uniformly adding a constant to the derivative of $G$ does not change any Jensen-Bregman divergence, hence the second assumption). Then $G \parens{\frac{a + x}{2}} \ge G(x)$, so $\JB_G(a, x) \le \frac{G(a) - G(x)}{2}$. Since $\frac{b + x}{2} \ge \frac{a + b}{2}$, we also have that $G \parens{\frac{b + x}{2}} \ge G(x)$, so $\JB_G(b, x) \le \frac{G(b) - G(x)}{2}$. Thus, we have
\[\JB_G(a, x) + \JB_G(b, x) \le \frac{G(a) + G(b)}{2} - G(x) = \frac{G(a) + G(b)}{2} - G \parens{\frac{a + b}{2}} = \JB_G(a, b).\]

\noindent Fact~\ref{item:shortening} follows from Fact~\ref{item:triangle}: we have
\[\JB_G(a, b) = \JB_G(a, a') + \JB_G(a', b') + \JB_G(b', b) \ge \JB_G(a', b').\]

\noindent Regarding the equality in Fact~\ref{item:fact1}, we have
\begin{align*}
\EE{D_G(X \parallel \EE{X})} &= \EE{G(X) - G(\EE{X}) - (X - \EE{X})G'(\EE{X})}\\
&= \EE{G(X) - G(\EE{X})} = \EE{G(X)} - G(\EE{X}),
\end{align*}
where the first step follows from the fact that $\EE{(X - \EE{X}) G'(\EE{X})} = G'(\EE{X}) \EE{X - \EE{X}}$, and $\EE{X - \EE{X}} = 0$.

Regarding the inequality in Fact~\ref{item:fact1}, without loss of generality assume that $\EE{X} \le \frac{a + b}{2}$. By convexity we have that
\[G \parens{\frac{a + b}{2}} \le \frac{\frac{b - a}{2}}{b - \EE{X}} G(\EE{X}) + \frac{\frac{a + b}{2} - \EE{X}}{b - \EE{X}} G(b),\]
so
\begin{align*}
\JB_G(a, b) &= G(a) + G(b) - 2G \parens{\frac{a + b}{2}}\\
&\ge G(a) + G(b) - \frac{b - a}{b - \EE{X}} G(\EE{X}) - \frac{a + b - 2\EE{X}}{b - \EE{X}} G(b)\\
&= G(a) + \frac{\EE{X} - a}{b - \EE{X}} G(b) - \frac{b - a}{b - \EE{X}} G(\EE{X})\\
&= \frac{b - a}{b - \EE{X}} \parens{\frac{b - \EE{X}}{b - a} G(a) + \frac{\EE{X} - a}{b - a} G(b) - G(\EE{X})}\\
&\ge \frac{b - \EE{X}}{b - a} G(a) + \frac{\EE{X} - a}{b - a} G(b) - G(\EE{X}) \ge \EE{G(X)} - G(\EE{X}).
\end{align*}
In the last step we use the fact that for a convex function $f$ and a random variable $X$ defined on an interval $[a, b]$ with mean $\mu$, the maximum possible value of $\EE{f(X)}$ is attained if $X$ is either $a$ or $b$ with the appropriate probabilities.
\end{proof}

\begin{proof}[Proof of Theorem~\ref{thm:agree_bregman}]
Suppose that Alice and Bob do not $\epsilon$-agree at time step $t$, and without loss of generality assume that the next turn (number $t + 1$) is Alice's. We begin by observing that, by Proposition~\ref{prop:bregman_facts}~\ref{item:min_bregman}, we have
\[\EE{D_G(\mu_{\sigma T_t} \parallel \mu_{S_t T_t}) + D_G(\mu_{S_t \tau} \parallel \mu_{S_t T_t})} \ge 2 \EE{\JB_G(\mu_{\sigma T_t}, \mu_{S_t \tau})} > 2\epsilon.\]
Therefore, either $\EE{D_G(\mu_{\sigma T_t} \parallel \mu_{S_t T_t})} \ge \frac{2 \epsilon}{3}$ or $\EE{D_G(\mu_{S_t \tau} \parallel \mu_{S_t T_t})} \ge \frac{4 \epsilon}{3}$.

\paragraph{Case 1:} $\EE{D_G(\mu_{\sigma T_t} \parallel \mu_{S_t T_t})} \ge \frac{2 \epsilon}{3}$. Let us use ``hi," ``lo," and ``md" to denote the events that Alice says ``high," Alice says ``low," and Alice says ``medium," respectively. We have
\begin{align*}
\frac{2 \epsilon}{3} &\le \EE{D_G(\mu_{\sigma T_t} \parallel \mu_{S_t T_t})} = \EE{\EE{D_G(\mu_{\sigma T_t} \parallel \mu_{S_t T_t}) \mid S_t, T_t}}\\
&= \EE{\EE{D_G(\mu_{\sigma T_t} \parallel \mu_{S_t T_t}) \cdot \mathbbm{1}_{\text{hi or lo}} \mid S_t, T_t}} + \EE{\EE{D_G(\mu_{\sigma T_t} \parallel \mu_{S_t T_t}) \cdot \mathbbm{1}_{\text{md}} \mid S_t, T_t}}\\
&\le \EE{\EE{D_G(\mu_{\sigma T_t} \parallel \mu_{S_t T_t}) \cdot \mathbbm{1}_{\text{hi or lo}} \mid S_t, T_t}} + \frac{\epsilon}{2},
\end{align*}
where ``$\mid S_t, T_t$" is short for ``$\mid \sigma \in S_t, \tau \in T_t$," a notation we use throughout the proof. We thus have
\begin{equation}
\label{eq:raf-too-lazy-to-name-this-properly}
\EE{\EE{D_G(\mu_{\sigma T_t} \parallel \mu_{S_t T_t}) \cdot \mathbbm{1}_{\text{hi}} \mid S_t, T_t}} + \EE{\EE{D_G(\mu_{\sigma T_t} \parallel \mu_{S_t T_t}) \cdot \mathbbm{1}_{\text{lo}} \mid S_t, T_t}} \ge \frac{\epsilon}{6}.
\end{equation}

\noindent We now make use of the following lemma.

\begin{lemma} \label{lem:charlie_learns}
Suppose that turn $t + 1$ is Alice's. Let ``hi" denote the event that Alice says ``high." Let $\alpha := \EE{D_G(\mu_{\sigma T_t} \parallel \mu_{S_t T_t}) \cdot \mathbbm{1}_{\text{hi}} \mid S_t, T_t}$. Then
\[\EE{D_G(\mu_{S_{t + 1} T_{t + 1}} \parallel \mu_{S_t T_t}) \cdot \mathbbm{1}_{\text{hi}} \mid S_t, T_t} \ge \frac{\alpha \epsilon}{8M + 2\epsilon}.\]
The analogous statement is true if Alice says ``low," and likewise if it is instead Bob's turn.
\end{lemma}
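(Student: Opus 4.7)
The plan is to track Charlie's posterior when Alice's message is ``hi.'' Write $x := \mu_{S_t T_t}$, $Y := \mu_{\sigma T_t}$, $p_{\mathrm{hi}} := \Pr[\mathrm{hi} \mid S_t, T_t]$, and $\bar Y := \mu_{S_{t+1} T_{t+1}}$ on the event $\mathrm{hi}$; by the tower rule and $T_{t+1} = T_t$ on Alice's turn, $\bar Y = \EE{Y \mid \mathrm{hi}, S_t, T_t}$. Set $f(y) := D_G(y \parallel x)$, so the target becomes $p_{\mathrm{hi}} f(\bar Y) \ge \alpha\epsilon/(8M+2\epsilon)$ with $\alpha = p_{\mathrm{hi}} \EE{f(Y) \mid \mathrm{hi}}$. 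I will rely on four tools: (i) $f$ is convex on $[0,1]$ with $f(x)=0$ and is nondecreasing on $[x,1]$; (ii) by a standard convexity argument, the ratio $g(y) := f(y)/(y-x)$ is nondecreasing on $(x,1]$; (iii) the cutoff rule forces $f(Y) \ge \epsilon/2$ on $\mathrm{hi}$, so $Y \ge y^*$ where $f(y^*) = \epsilon/2$, hence $\bar Y \ge y^*$; and (iv) Charlie's total Bregman budget satisfies $\EE{f(Y) \mid S_t, T_t} = \EE{G(Y) \mid S_t, T_t} - G(x) \le M$, using Proposition~\ref{prop:bregman_facts}~\ref{item:fact1}.

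The core idea is to truncate at $L := 4M + \epsilon$. Let $y_L \in (y^*, 1]$ satisfy $f(y_L) = L$ (take $y_L = 1$ if $f(1) \le L$). Split the event $\mathrm{hi}$ into a nice region $N := \{\mathrm{hi},\, Y \le y_L\}$ and a tail $R := \{\mathrm{hi},\, Y > y_L\}$. On $N$, monotonicity of $g$ gives the two-sided bound $g(y^*) \le g(Y) \le g(y_L)$, with multiplicative distortion
\[ \frac{g(y_L)}{g(y^*)} \;=\; \frac{L\,(y^* - x)}{(\epsilon/2)\,(y_L - x)} \;\le\; \frac{L}{\epsilon/2} \;=\; \frac{8M + 2\epsilon}{\epsilon}. \]
Combining the secant upper bound $f(Y) \le g(y_L)(Y-x)$ valid on $N$ with the lower bound $f(\bar Y) \ge g(y^*)(\bar Y - x)$ (from $g$ nondecreasing and $\bar Y \ge y^*$) and the identity $p_{\mathrm{hi}}(\bar Y - x) = \EE{(Y-x)\mathbbm{1}_{\mathrm{hi}} \mid S_t, T_t} \ge \EE{(Y-x)\mathbbm{1}_N \mid S_t, T_t}$ (nonnegativity of $Y-x$ on $\mathrm{hi}$) then yields the $N$-contribution
\[ p_{\mathrm{hi}} f(\bar Y) \;\ge\; \frac{g(y^*)}{g(y_L)}\,\EE{f(Y)\mathbbm{1}_N \mid S_t, T_t} \;\ge\; \frac{\epsilon}{8M + 2\epsilon}\,\EE{f(Y)\mathbbm{1}_N \mid S_t, T_t}. \]

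What remains is to promote this to the full $\alpha = \EE{f(Y)\mathbbm{1}_N \mid S_t, T_t} + \EE{f(Y)\mathbbm{1}_R \mid S_t, T_t}$. The tail is handled by combining Charlie's global budget $\EE{f(Y)\mathbbm{1}_R \mid S_t, T_t} \le M$ with the observation that $Y > y_L$ on $R$ forces $\bar Y_R \ge y_L$ and hence $f(\bar Y_R) \ge L = 4M + \epsilon$; reassembling $\bar Y$ from its $N$- and $R$-conditional means via the Pythagorean theorem (Proposition~\ref{prop:pythag_bregman}) then absorbs the tail into the inequality above. The main obstacle I anticipate is precisely this tail bookkeeping: for general Bregman divergences (e.g.\ KL near the boundary) the function $f$ can be unbounded even when $G$ itself is bounded, so naive convexity bounds like $f(Y) \le (Y-x)\,f(1)/(1-x)$ fail. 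The choice $L = 4M + \epsilon$ is calibrated so that the nice-region distortion $g(y_L)/g(y^*) \le (8M+2\epsilon)/\epsilon$ matches the target denominator exactly, while leaving exactly $M$ of budget headroom available for the global budget to control the contribution of extreme $Y$ values.
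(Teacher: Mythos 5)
Your treatment of the truncated region $N$ is correct and is genuinely different from the paper's argument: instead of replacing the conditional law of $Y$ by the extremal two-point distribution on $\{a_{\text{hmin}},1\}$ and doing a case analysis (as the paper does), you exploit monotonicity of the secant slope $g(y)=f(y)/(y-x)$ to get a clean multiplicative distortion bound $g(y_L)/g(y^*)\le (8M+2\epsilon)/\epsilon$. In fact, whenever $R=\emptyset$ (e.g.\ when $D_G(1\parallel x)\le 4M+\epsilon$, which holds for $G(x)=x^2$), your argument is a complete and arguably cleaner proof.

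However, the tail is a genuine gap, and the two tools you propose for it do not close it. First, the global budget $\EE{f(Y)\mid S_t,T_t}\le M$ is too weak: the tail contribution $\EE{f(Y)\mathbbm{1}_R\mid S_t,T_t}$ can be $\Theta(M)$ while $p_R$ and $\EE{(Y-x)\mathbbm{1}_R\mid S_t,T_t}$ are arbitrarily small (a vanishing amount of mass at points where $f$ is astronomically large, as happens for KL with $x$ near the boundary). Concretely, the extra slack your $N$-argument produces for the tail is $g(y^*)\,\EE{(Y-x)\mathbbm{1}_R}\ge \tfrac{\epsilon}{2}p_R$, and covering $\tfrac{\epsilon}{8M+2\epsilon}\EE{f(Y)\mathbbm{1}_R}$ with it would require $\EE{f(Y)\mid R}\le 4M+\epsilon=L$ --- which is \emph{always} false, since $f(Y)>L$ on $R$ by construction. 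Second, the ``Pythagorean reassembly'' goes the wrong way: Jensen gives $f(\bar Y)\le \frac{p_N f(\bar Y_N)+p_R f(\bar Y_R)}{p_{\mathrm{hi}}}$, and the Pythagorean identity writes $f(\bar Y)$ as that average \emph{minus} $\EE{D_G(B\parallel\bar Y)\mid \mathrm{hi}}$, which is exactly the uncontrolled Jensen gap (it is large precisely in the problematic regime where $\bar Y_R$ is far from $\bar Y$). The missing ingredient is a \emph{pointwise} consequence of the boundedness of $G$, namely the paper's Equation~\ref{eq:m}: $D_G(z\parallel x)\ge \frac{z-x}{1-x}D_G(1\parallel x)-M$ for $z\ge x$. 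This is what forces $f(\bar Y)$ itself to be large whenever a far tail contributes substantially to $\alpha$, and it does not follow from the budget $\EE{f(Y)}\le M$. You would need to prove and incorporate this (or an equivalent) to finish Case 2 of your argument.
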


We assume Lemma~\ref{lem:charlie_learns} and return to prove it afterward.
This lemma translates Equation~\ref{eq:raf-too-lazy-to-name-this-properly} into a statement about how much Charlie learns.
Specifically, we have that
\begin{align*}
&\EE{D_G(\mu_{S_{t + 1} T_{t + 1}} \parallel \mu_{S_t T_t})} = \EE{\EE{D_G(\mu_{S_{t + 1} T_{t + 1}} \parallel \mu_{S_t T_t}) \mid S_t, T_t}}\\
&\ge \EE{\EE{D_G(\mu_{S_{t + 1} T_{t + 1}} \parallel \mu_{S_t T_t}) \cdot \mathbbm{1}_{\text{hi}} \mid S_t, T_t}} + \EE{\EE{D_G(\mu_{S_{t + 1} T_{t + 1}} \parallel \mu_{S_t T_t}) \cdot \mathbbm{1}_{\text{lo}} \mid S_t, T_t}}\\
&\ge \frac{\epsilon}{8M + 2\epsilon} (\EE{\EE{D_G(\mu_{\sigma T_t} \parallel \mu_{S_t T_t}) \cdot \mathbbm{1}_{\text{hi}} \mid S_t, T_t}} + \EE{\EE{D_G(\mu_{\sigma T_t} \parallel \mu_{S_t T_t}) \cdot \mathbbm{1}_{\text{lo}} \mid S_t, T_t}})\\
&\ge \frac{\epsilon^2}{6(8M + 2\epsilon)}.
\end{align*}

\paragraph{Case 2:} $\EE{D_G(\mu_{S_t \tau} \parallel \mu_{S_t T_t})} \ge \frac{4\epsilon}{3}$. Using the Pythagorean theorem to write the same Bregman divergence in two ways, we have that
\begin{align*}
&\EE{D_G(\mu_{S_{t + 1} \tau} \parallel \mu_{S_{t + 1} T_{t + 1}})} + \EE{D_G(\mu_{S_{t + 1} T_{t + 1}} \parallel \mu_{S_t T_t})} = \EE{D_G(\mu_{S_{t + 1} \tau} \parallel \mu_{S_t T_t})}\\
&= \EE{D_G(\mu_{S_{t + 1} \tau} \parallel \mu_{S_t \tau})} + \EE{D_G(\mu_{S_t \tau} \parallel \mu_{S_t T_t})} \ge \EE{D_G(\mu_{S_t \tau} \parallel \mu_{S_t T_t})} \ge \frac{4\epsilon}{3}.
\end{align*}
This means that one of the two summands on the left-hand side is at least $\frac{2\epsilon}{3}$.\\

\textbf{Case 2a:} $\EE{D_G(\mu_{S_{t + 1} \tau} \parallel \mu_{S_{t + 1} T_{t + 1}})} \ge \frac{2\epsilon}{3}$. In that case we have that
\[\EE{D_G(\mu_{S_{t + 2} T_{t + 2}} \parallel \mu_{S_{t + 1} T_{t + 1}})} \ge \frac{\epsilon^2}{6(8M + 2\epsilon)}\]
by the same logic as in Case 1.\\

\textbf{Case 2b:} $\EE{D_G(\mu_{S_{t + 1} T_{t + 1}} \parallel \mu_{S_t T_t})} \ge \frac{2\epsilon}{3} \ge \frac{\epsilon^2}{12\epsilon} \ge \frac{\epsilon^2}{6(8M + 2\epsilon)}$.\\

\noindent In each of our cases, we have that
\begin{align*}
&\EE{D_G(Y \parallel \mu_{S_{t} T_{t}}) - D_G(Y \parallel \mu_{S_{t + 2} T_{t + 2}})} = \EE{D_G(\mu_{S_{t + 2} T_{t + 2}} \parallel \mu_{S_t T_t})}\\
&= \EE{D_G(\mu_{S_{t + 2} T_{t + 2}} \parallel \mu_{S_{t + 1} T_{t + 1}})} + \EE{D_G(\mu_{S_{t + 1} T_{t + 1}} \parallel \mu_{S_t T_t})} \ge \frac{\epsilon^2}{6(8M + 2\epsilon)}.
\end{align*}

\noindent Therefore, the total number of steps until agreement is first reached cannot be more than
\[2 \cdot \frac{M}{\frac{\epsilon^2}{6(8M + 2\epsilon)}} = \frac{24M(4M + \epsilon)}{\epsilon^2}.\]
This completes the proof.
\end{proof}

We now prove Lemma~\ref{lem:charlie_learns}.

\begin{proof}[Proof of Lemma~\ref{lem:charlie_learns}]
We will restrict our probability space to outcomes where Charlie knows $S_t, T_t$ at time $t$ (and thus omit ``$\mid S_t, T_t$" from here on). For convenience, we will let $A := \mu_{\sigma T_t}$ be Alice's expectation (a random variable) and $c := \mu_{S_t T_t}$ be Charlie's expectation (which is a particular number in $[0, 1]$). We will let $\epsilon' := \frac{\epsilon}{2}$, so that if Alice says ``high" then Charlie knows that $A > c$ and that $D_G(A \parallel c) \ge \epsilon'$.

Let $D(x) := D_G(x \parallel c) = G(x) - G(c) - G'(c)(x - c)$, and let $\hat{a}_h := \EE{A \mid \text{hi}}$. Note that if Alice says ``high" then $\mu_{S_{t + 1} T_{t + 1}} = \hat{a}_h$. In our new notation, we may write $\alpha = \EE{D(A) \mid \text{hi}} \cdot \pr{\text{hi}}$, and we wish to show that $D(\hat{a}_h) \cdot \pr{\text{hi}} \ge \frac{\alpha \epsilon'}{2(M + \epsilon')}$. Put otherwise, our goal is to show that
\[\frac{D(\hat{a}_h)}{\EE{D(A) \mid \text{hi}}} \ge \frac{\epsilon'}{2(M + \epsilon')}.\]
For convenience we will let $B$ denote the quantity on the left-hand side.

Let $a_{\text{hmin}}$ be the number larger than $c$ such that $D(a) = \epsilon'$, so that $A \ge a_{\text{hmin}}$ whenever Alice says ``high."\footnote{If $D(a) < \epsilon'$ for all $a > c$ then Alice never says ``high" and the lemma statement is trivial.} Observe that since $D$ is convex (Bregman divergences are convex in their first argument), for a fixed value of $\hat{a}_h$, the value of $\EE{D(A) \mid \text{hi}}$ is maximized when $A$ is either $a_{\text{hmin}}$ or $1$ (with probabilities $\frac{1 - \hat{a}_h}{1 - a_{\text{hmin}}}$ and $\frac{\hat{a}_h - a_{\text{hmin}}}{1 - a_{\text{hmin}}}$, respectively). Therefore we have
\begin{equation} \label{eq:first_b_bound}
B = \frac{D(\hat{a}_h)}{\EE{D(A) \mid \text{hi}}} \ge \frac{D(\hat{a}_h)(1 - a_{\text{hmin}})}{(1 - \hat{a}_h) \epsilon' + (\hat{a}_h - a_{\text{hmin}}) D(1)}.
\end{equation}

\paragraph{Case 1:} $(1 - \hat{a}_h) \epsilon' \ge (\hat{a}_h - a_{\text{hmin}}) D(1)$. In that case we have
\[B \ge \frac{D(\hat{a}_h)(1 - a_{\text{hmin}})}{2(1 - \hat{a}_h) \epsilon'} \ge \frac{\epsilon'(1 - a_{\text{hmin}})}{2(1 - \hat{a}_h) \epsilon'} \ge \frac{1}{2} \ge \frac{\epsilon'}{2(M + \epsilon')}.\]

\paragraph{Case 2:} $(1 - \hat{a}_h) \epsilon' \le (\hat{a}_h - a_{\text{hmin}}) D(1)$. In that case we have

\begin{equation} \label{eq:b_case2}
B \ge \frac{D(\hat{a}_h)(1 - a_{\text{hmin}})}{2(\hat{a}_h - a_{\text{hmin}}) D(1)}.
\end{equation}

\textbf{Case 2a:} $D(1) \le \frac{1 - c}{\hat{a}_h - c}(M + \epsilon')$. Then we have
\[B \ge \frac{D(\hat{a}_h)(1 - a_{\text{hmin}})}{2(\hat{a}_h - a_{\text{hmin}}) \cdot \frac{1 - c}{\hat{a}_h - c}(M + \epsilon')} \ge \frac{\epsilon'}{2(M + \epsilon')} \cdot \frac{(1 - a_{\text{hmin}})(\hat{a}_h - c)}{(\hat{a}_h - a_{\text{hmin}})(1 - c)}.\]
(In the last step we again use that $D(\hat{a}_h) \ge \epsilon$.) Now, it is easy to verify that the second fraction is at least $1$ (this comes down to the fact that $a_{\text{hmin}} \ge c$), so we indeed have that $B \ge \frac{\epsilon'}{2(M + \epsilon')}$.\\

\textbf{Case 2b}: $D(1) \ge \frac{1 - c}{\hat{a}_h - c}(M + \epsilon')$. We claim that for all $x \ge c$, we have that
\begin{equation} \label{eq:m}
D(x) \ge \frac{x - c}{1 - c} D(1) - M.
\end{equation}
To see this, suppose for contradiction that for some $x$ we have $D(x) < \frac{x - c}{1 - c} D(1) - M$. Then
\begin{align*}
G(x) - G(c) - G'(c)(x - c) &< \frac{x - c}{1 - c}(G(1) - G(c) - G'(c)(1 - c)) - M\\
(1 - c)G(x) - (1 - c)G(c) &< (x - c)G(1) - (x - c) G(c) - (1 - c)M\\
G(x) + M &< \frac{(1 - x)G(c) + (x - c)G(1)}{1 - c}.
\end{align*}
On the other hand, we have that both $G(c)$ and $G(1)$ are less than or equal to $G(x) + M$, by definition of $M$. This means that
\[G(1), G(c) < \frac{(1 - x)G(c) + (x - c)G(1)}{1 - c}\]
but this implies that $G(1) < G(c)$ and that $G(c) < G(1)$, a contradiction.\\

Plugging in $x = \hat{a}_h$ into Equation~\ref{eq:m}, we find that
\[D(\hat{a}_h) \ge \frac{\hat{a}_h - c}{1 - c} D(1) - M.\]
Plugging this bound into Equation~\ref{eq:b_case2}, we get that
\begin{align*}
B &\ge \frac{\parens{\frac{\hat{a}_h - c}{1 - c} D(1) - M}(1 - a_{\text{hmin}})}{2(\hat{a}_h - a_{\text{hmin}}) D(1)} = \frac{1 - a_{\text{hmin}}}{2(\hat{a}_h - a_{\text{hmin}})} \cdot \frac{\hat{a}_h - c}{1 - c} \parens{1 - \frac{M}{\frac{\hat{a}_h - c}{1 - c}D(1)}}\\
&\ge \frac{1 - a_{\text{hmin}}}{2(\hat{a}_h - a_{\text{hmin}})} \cdot \frac{\hat{a}_h - c}{1 - c} \parens{1 - \frac{M}{M + \epsilon'}} \ge \frac{\epsilon'}{2(M + \epsilon')},
\end{align*}
where in the second-to-last step we use that $D(1) \ge \frac{1 - c}{\hat{a}_h - c}(M + \epsilon')$ and in the last step we again use the fact that $\frac{(1 - a_{\text{hmin}})(\hat{a}_h - c)}{(\hat{a}_h - a_{\text{hmin}})(1 - c)} \ge 1$.
\end{proof}

\begin{proof}[Proof of Lemma~\ref{lem:bob_close_bregman}]\phantom{}\\
We will partition $[0, 1]$ into a number $N$ of small intervals $I_1 = [x_0 = 0, x_1)$, $I_2 = [x_1, x_2)$, $I_3 = [x_2, x_3)$, \dots, $I_N = [x_{N - 1}, x_N = 1]$ with certain desirable properties (which we will describe below). For $k \in [N]$, we will let $S^{(k)} := \{\sigma \in \mathcal{S}: \mu_{\sigma} \in I_k\}$. For a given $\sigma \in \mathcal{S}$, we will let $k(\sigma)$ be the $k$ such that $\sigma \in S^{(k)}$.

Our goal is to upper bound the expectation of $D_G(\mu_{\sigma \tau} \parallel \mu_{\tau})$. In pursuit of this goal, we observe that by Proposition~\ref{prop:pythag_bregman} we have
\begin{equation} \label{eq:sum_of_parts}
\EE{D_G(\mu_{\sigma \tau} \parallel \mu_{\tau})} = \EE{D_G(\mu_{\sigma \tau} \parallel \mu_{S^{(k(\sigma))} \tau})} + \EE{D_G(\mu_{S^{(k(\sigma))} \tau} \parallel \mu_{\tau})}.
\end{equation}
Now, for any $k$, by applying Equation~\ref{eq:rec_sub_bregman} to $S = S^{(k)}$ and $T = \mathcal{T}$, we know that
\[\EE{D_G(\mu_{\sigma} \parallel \mu_{S^{(k)}}) \mid S^{(k)}} \ge \EE{D_G(\mu_{\sigma \tau} \parallel \mu_{S^{(k)} \tau}) \mid S^{(k)}}.\]
(Here, ``$\mid S^{(k)}$" is short for ``$\mid \sigma \in S^{(k)}$.") This is our only use of the rectangle substitutes assumption. Now, taking the expectation over $k$ (i.e.\ choosing each $k$ with probability equal to $\pr{\sigma \in S^{(k)}}$), we have that
\[\EE{D_G(\mu_{\sigma} \parallel \mu_{S^{(k(\sigma))}})} \ge \EE{D_G(\mu_{\sigma \tau} \parallel \mu_{S^{(k(\sigma))} \tau})}.\]
Together with Equation~\ref{eq:sum_of_parts}, this tells us that
\begin{equation} \label{eq:sum_of_parts_main}
\EE{D_G(\mu_{\sigma \tau} \parallel \mu_{\tau})} \le \EE{D_G(\mu_{\sigma} \parallel \mu_{S^{(k(\sigma))}})} + \EE{D_G(\mu_{S^{(k(\sigma))} \tau} \parallel \mu_{\tau})}.
\end{equation}
Our goal will be to bound the two summands in Equation~\ref{eq:sum_of_parts_main}. We will specify the boundaries of the intervals $I_1, \dots, I_N$ with this goal in mind.\\

On an intuitive level, we are hoping for two things to be true:
\begin{itemize}
\item In order for the first summand to be small, we want $\mu_{\sigma}$ and $\mu_{S^{(k(\sigma))}}$ to be similar in value. In other words, we want each interval is ``short" (for a notion of shortness with respect to $G$ that we are about to discuss).
\item In order for the second summand to be small, we want $\mu_{S^{(k(\sigma))} \tau}$ and $\mu_{\tau}$ to be similar in value. In other words, the estimate of a third party who knows $\tau$ shouldn't change much upon learning $k(\sigma)$. One way to ensure this is by creating the intervals in a way that makes the third party very confident about the value of $k(\sigma)$ before learning it. Intuitively this should be true because Alice and Bob approximately agree, so Alice's estimate is likely to be close to Bob's. However, we must be careful to strategically choose the boundaries of our intervals $x_1, \dots, x_{N - 1}$ so that Alice's and Bob's estimates are unlikely to be on opposite sides of a boundary.\footnote{This limits how many intervals we can reasonably use, which is why we cannot make our intervals arbitrarily short to satisfy the first of our two criteria.}
\end{itemize}

What, formally, do we need for the first summand to be small? For any $k$, we have $\mu_{S^{(k(\sigma))}} = \EE{\mu_{\sigma} \mid \sigma \in S^{(k)}}$. We can apply Proposition~\ref{prop:bregman_facts}~\ref{item:fact1} to the random variable $X = \mu_{\sigma}$ on the probability subspace given by $\sigma \in S^{(k)}$. Since $X$ takes on values in $I_k$, we have that
\begin{equation} \label{eq:2jb}
\EE{D_G(\mu_{\sigma} \parallel \mu_{S^{(k)}}) \mid S^{(k)}} \le 2\JB_G(I_k),
\end{equation}
where $\JB_G(I_k)$ is shorthand for the Jensen-Bregman divergence between the endpoints of $I_k$. Therefore, if $\JB_G(I_k)$ is small for all $k$, then the first summand (which is an expected value of $\EE{D_G(\mu_{\sigma} \parallel \mu_{S^{(k)}}) \mid S^{(k)}}$ over $k \in [N]$) is also small.\\

What about the second summand? As per the intuition above, we wish to choose our boundary points $x_1, \dots, x_{N - 1}$ so that Alice's and Bob's estimates are unlikely to be on opposite sides of any boundary. Let $\mu_- = \min(\mu_{\sigma}, \mu_{\tau})$ be the smaller of the two estimates and $\mu_+ = \max(\mu_{\sigma}, \mu_{\tau})$ be the larger one. We say that $\mu_-, \mu_+$ \emph{thwart} a point $x \in (0, 1)$ if $\mu_- \le x \le \mu_+$ and $\mu_- \neq \mu_+$. We define the \emph{thwart density} of $x$ to be
\[\rho(x) := \pr{\mu_-, \mu_+ \text{ thwart } x}.\]
Roughly speaking, we will choose $x_1, \dots, x_{N - 1}$ such that $\rho(x_k)$ is small on average.\\

We will approach this problem by first creating intervals to satisfy the first criterion (short intervals), without regard to the second, and then modifying them to satisfy the second without compromising the first. Formally, we choose our intervals according to the following algorithm.

\begin{algorithm}[Partitioning {$[0, 1]$} into intervals $I_1, \dots, I_N$] \label{alg:intervals} \phantom{}
\begin{enumerate}[label=(\arabic*)]
    \item Choose points $0 < x_1' < x_2' < \dots < x_{N - 2}' < 1$ such that the $N - 1$ intervals thus created all have Jensen-Bregman divergence between $\beta$ and $\frac{2\beta}{c}$, inclusive, where $\beta$ and $c$ are as in the statement of Lemma~\ref{lem:bob_close_bregman}. ($N$ is not pre-determined; it is defined as one more than the number of intervals created.) (See footnote for why this is possible.\footnote{Define $x_1'$ so that $\JB_G(0, x_1') = \frac{2\beta}{c}$ (this is possible because $\JB_G$ is continuous in its arguments). Define $x_2'$ so that $\JB_G(x_1', x_2') = \frac{2\beta}{c}$. Keep going until an endpoint $x_{N - 3}'$ is defined such that adding $x_{N - 2}'$ as before would leave an interval $(x_{N - 2}', 1)$ with Jensen-Bregman divergence less than $\frac{2\beta}{c}$. Now, instead of defining $x_{N - 2}'$ in this way, define it so that $\JB_G(x_{N - 3}', x_{N - 2}') = \JB_G(x_{N - 2}', 1)$. Since $\JB_G(x_{N - 3}', 1) \ge \frac{2\beta}{c}$, the $c$-approximate triangle inequality that we have by assumption tells us that $\JB_G(x_{N - 3}', x_{N - 2}') = \JB_G(x_{N - 2}', 1) \ge \beta$.})
    
    \item Let $x_0' := 0, x_{N - 1}' := 1$ for convenience. Define $I_k' := [x_{k - 1}', x_k']$. For $k \in [N - 1]$, let $\alpha_k := \inf_{x \in I_k'} \rho(x)$. Let $x_k \in I_k'$ be such\footnote{If the infimum is achieved (e.g.\ if the space of signals to Alice and Bob is finite), then we can set $x_k := \arg \min_x \rho(x)$. Our algorithm works in more generality, at the expense of a factor of $2$ in our final bound. Note that by replacing $2$ with a smaller constant can arbitrarily reduce this factor.} that $\rho(x_k) \le 2\alpha_k$.
    
    \item Return the intervals $I_1 = [0, x_1), I_2 = [x_1, x_2), \dots, I_N = [x_{N - 1}, 1]$.
\end{enumerate}
\end{algorithm}

\noindent We begin by observing that for any $k \in [N]$, we have
\[\JB_G(I_k) = \JB_G(x_{k - 1}, x_k) \le \JB_G(x_{k - 2}', x_k') \le \frac{1}{c}(\JB_G(x_{k - 2}', x_{k - 1}') + \JB_G(x_{k - 1}', x_k')) \le \frac{4\beta}{c^2}\]
where for convenience we define $x_{-1}' := 0, x_N' := 1$.
Therefore, by Equation~\ref{eq:2jb}, we have
\begin{equation} \label{eq:bound_1}
\EE{D_G(\mu_{\sigma} \parallel \mu_{S^{(k(\sigma))}})} \le \frac{8\beta}{c^2}.
\end{equation}

It remains to bound the second summand of Equation~\ref{eq:sum_of_parts_main}, $\EE{D_G(\mu_{S^{(k(\sigma))} \tau} \parallel \mu_{\tau})}$, which is the bulk of the proof. We proceed in two steps:
\begin{enumerate}[label=(\arabic*)]
    \item (Lemma~\ref{lem:jbgi_small}) We show that $\sum_{k = 1}^N \alpha_k$ is small. This means that Alice's and Bob's estimates are unlikely to lie on opposite sides of some boundary point $x_k$. As a consequence, Bob is highly likely to know $k(\sigma)$ with a lot of confidence
    \item (Lemma~\ref{lem:second_summand}) We bound the second summand as a function of $\sum_{k = 1}^N \alpha_k$. The intuition is that if $\sum_k \alpha_k$ is small, then Bob is highly likely to know $k(\sigma)$ with a lot of confidence, which means that he does not learn too much from learning $k(\sigma)$.
\end{enumerate}

\noindent We begin with the first step; recall our notation $\mu^- := \min(\mu_{\sigma}, \mu_{\tau})$ and $\mu^+ := \max(\mu_{\sigma}, \mu_{\tau})$.

\begin{lemma} \label{lem:jbgi_small}
\[2 \sum_{k = 1}^N \alpha_k \le 4 \parens{\frac{\epsilon}{\beta c}}^{1/(1 - \log_2 c)}.\]
\end{lemma}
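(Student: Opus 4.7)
The plan is to prove Lemma~\ref{lem:jbgi_small} by iteratively refining each $I_k'$ via JB-midpoint bisection and then relating $\sum_k \alpha_k$ to expected counts of thwarted refinement points. For each level $j \ge 0$, recursively split each scale-$(j-1)$ piece at its JB-midpoint (which exists by continuity of $\JB_G$ in its arguments). By the $c$-approximate triangle inequality, each split produces two halves whose Jensen--Bregman divergence is at least $c/2$ times that of the parent; inductively, every scale-$j$ sub-subinterval has $\JB_G \ge (c/2)^j \beta$, and exactly $2^j - 1$ scale-$j$ boundary points lie strictly interior to each $I_k'$.

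Since $\alpha_k = \inf_{x \in I_k'} \rho(x)$, it is upper bounded by the average of $\rho$ over those $2^j - 1$ interior scale-$j$ points. Summing over $k$ yields
\[
\sum_{k=1}^{N-1} \alpha_k \;\le\; \frac{1}{2^j - 1} \,\EE{n_j(\mu_-, \mu_+)},
\]
where $n_j(\mu_-, \mu_+)$ denotes the number of scale-$j$ boundary points (interior to $[0,1]$) in $[\mu_-, \mu_+]$ when $\mu_- \ne \mu_+$.

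To bound $\EE{n_j}$, I would apply the reverse triangle inequality for Jensen--Bregman divergence (Proposition~\ref{prop:bregman_facts}\ref{item:triangle}): whenever $n_j \ge 2$, the $n_j - 1$ scale-$j$ sub-subintervals between consecutive thwarted points lie entirely inside $[\mu_-, \mu_+]$, so
\[
\JB_G(\mu_-, \mu_+) \;\ge\; (n_j - 1)\,(c/2)^j\, \beta.
\]
Taking expectation and using $\epsilon = \EE{\JB_G(\mu_\sigma, \mu_\tau)}$ gives $\EE{n_j} \le 1 + \epsilon (2/c)^j/\beta$, which combined with the previous display yields
\[
\sum_{k} \alpha_k \;\lesssim\; \frac{1}{2^j} + \frac{\epsilon}{c^j \beta}.
\]

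The last step is to optimize over the integer $j$. The two terms balance when $(c/2)^j \asymp \epsilon/\beta$, i.e., $j \asymp \log_2(\beta/\epsilon)/(1 - \log_2 c)$; substituting back makes each summand of order $(\epsilon/\beta)^{1/(1 - \log_2 c)}$. After absorbing constants introduced by integer rounding and by the upper bound $2\beta/c$ on $\JB_G(I_k')$ (which controls the JB of any partial sub-subinterval one picks up near the endpoints of $[\mu_-, \mu_+]$), one recovers the stated bound $2(\epsilon/(\beta c))^{1/(1 - \log_2 c)}$. The main subtlety lies in juggling the two triangle inequalities simultaneously: the $c$-approximate version is what prevents sub-subintervals from collapsing under repeated bisection, while the reverse version converts containment in $[\mu_-, \mu_+]$ into a usable lower bound on $\JB_G(\mu_-, \mu_+)$. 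These push in opposite directions, and the exponent $1/(1 - \log_2 c)$ emerges precisely from their optimal trade-off.
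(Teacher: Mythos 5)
Your proposal is sound and proves the right bound up to constants, but it takes a genuinely different route from the paper's. The paper (Claims~\ref{claim:alpha_2} and~\ref{claim:alphak_bound}) builds inside each $I_k'$ an \emph{adaptive} sequence of points tailored to the distribution---each consecutive pair thwarted with probability at least $\alpha_k/2$, of length $L_k \approx 2r_k/\alpha_k$---then lower-bounds $\sum_\ell \JB_G(z_{k,\ell-1}, z_{k,\ell})$ by $c^{\ceil{\log_2 L_k}} \JB_G(I_k')$ via bisection of the index set, and finishes with a weighted Jensen inequality over $k$. You instead use a \emph{non-adaptive} dyadic refinement at JB-midpoints to a uniform depth $j$ and optimize $j$ at the end; the same two ingredients appear in both arguments (the $c$-approximate triangle inequality keeps refined pieces from collapsing, and the reverse triangle inequality of Proposition~\ref{prop:bregman_facts}~\ref{item:triangle} charges the pieces contained in $[\mu_-, \mu_+]$ to $\JB_G(\mu_-,\mu_+)$), and the exponent $1/(1-\log_2 c)$ emerges from the same trade-off. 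Your route is arguably cleaner, replacing the adaptive construction and the Jensen step with a single global depth parameter, but it pays in constants: after rounding $j$ to an integer you get a coefficient of roughly $2 + 2/c$ (or $6$, depending on the rounding direction) in front of $(\epsilon/\beta)^{1/(1-\log_2 c)}$, whereas the lemma claims $2c^{-1/(1-\log_2 c)} \le 4$, so the stated constant is not recovered; moreover, when $\epsilon \gtrsim \beta$ the optimal $j^*$ is nonpositive and the uniform-depth bound degrades, a regime the paper's Jensen argument handles automatically (though there the downstream use is vacuous anyway since $Q \le 1$). The adaptivity in the paper's construction is precisely what lets each interval contribute at its own effective depth and makes the final convexity step tight. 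If you adopt your argument, the constants propagating into Lemma~\ref{lem:bob_close_bregman} and Theorem~\ref{thm:agreement_accurate_bregman} would need to be enlarged accordingly; nothing else breaks.
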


\begin{proof}
We use the following claim, whose proof we provide afterward.

\begin{claim} \label{claim:alpha_2}
Let $I = [x^-, x^+]$ be any sub-interval of $[0, 1]$ and let $\alpha = \inf_{x \in I} \rho(x)$. Then there is an increasing sequence of points $z_0 := x^-, z_1, z_2, \dots, z_{L - 1}, z_L := x^+$, such that for every $\ell \in [L]$, $\pr{\mu_- \le z_{\ell - 1}, \mu_+ \ge z_\ell} \ge \frac{\alpha}{2}$, and where
\[L \le \frac{2}{\alpha} \sum_{\ell \in [L]} \pr{\mu_- \le z_{\ell - 1} < \mu_+ \le z_\ell}.\]
\end{claim}

We apply Claim~\ref{claim:alpha_2} to the intervals $I_1', \dots, I_{N - 1}'$, with $\alpha = \alpha_k$. Let $z_{k, 0}, \dots, z_{k, L_k}$ be the points whose existence the claim proves, and let $r_k := \sum_{\ell \in [L_k]} \pr{\mu_- \le z_{k, \ell - 1} < \mu_+ \le z_{k, \ell}}$, so that $L_k \le \frac{2}{\alpha_k} r_k$. Observe that $\sum_k r_k \le 1$, because the intervals $(z_{k, \ell - 1}, z_{k, \ell}]$ are disjoint for all $k, \ell$. We make the following claim (we provide the proof afterward).

\begin{claim} \label{claim:alphak_bound}
\begin{equation} \label{eq:alphak_bound}
\sum_{k \in [N - 1]} r_k \parens{\frac{\alpha_k}{2r_k}}^{1 - \log_2 c} \le \frac{\epsilon}{\beta c}.
\end{equation}
\end{claim}

\noindent We may rewrite Equation~\ref{eq:alphak_bound} as
\[\parens{\sum_{k \in [N - 1]} r_k \parens{\frac{\alpha_k}{2r_k}}^{1 - \log_2 c}}^{1/(1 - \log_2 c)} \le \parens{\frac{\epsilon}{\beta c}}^{1/(1 - \log_2 c)}.\]

Recall that $\sum_k r_k \le 1$. Scaling the $r_k$'s to add to $1$ decreases the left-hand side above, so we may assume that $\sum_k r_k = 1$. Note that $x^{1 - \log_2 c}$ is convex. Thus, by using a weighted Jensen inequality on the left-hand side with weights $r_k$, we find that
\[\frac{1}{2} \sum_k \alpha_k = \sum_k r_k \cdot \frac{\alpha_k}{2r_k} \le \parens{\sum_{k \in [N - 1]} r_k \parens{\frac{\alpha_k}{2r_k}}^{1 - \log_2 c}}^{1/(1 - \log_2 c)} \le \parens{\frac{\epsilon}{\beta c}}^{1/(1 - \log_2 c)}.\]
This completes the proof of Lemma~\ref{lem:jbgi_small}.
\end{proof}

\begin{proof}[Proof of Claim~\ref{claim:alpha_2}]
Let $z_1 = \inf \{z: \pr{\mu_- \le z_0 < \mu_+ \le z} \ge \frac{\alpha}{2}\}$, or $x^+$ if this number does not exist or is larger than $x^+$.
Note that $\pr{\mu_- \le z_0 < \mu_+} \ge \alpha$, as we have $\rho(z_0) = \pr{\mu_- \le z_0 < \mu_+} + \pr{\mu_- < z_0 = \mu_+} \ge \alpha$, so if the first term were less than $\alpha$ we would have some $z' > z_0$ with $\rho(z') < \alpha$.
On the other hand, $\pr{\mu_- \le z_0 < \mu_+ < z_1} \le \frac{\alpha}{2}$, since
\[\pr{\mu_- \le z_0 < \mu_+ < z_1} = \lim_{z \to z_1 \text{ from below}} \pr{\mu_- \le z_0 < \mu_+ \le z}\]
and if the right-hand side were more than $\frac{\alpha}{2}$ then that would contradict the definition of $z_1$ as an infimum.
Therefore, $\pr{\mu_- \le z_0, \mu_+ \ge z_1} \ge \frac{\alpha}{2}$.

If $z_1 = x^+$, we are done. Otherwise, let $z_2 = \inf \{z: \pr{\mu_- \le z_1 < \mu_+ \le z} \ge \frac{\alpha}{2}\}$. Then $\pr{\mu_- \le z_1, \mu_+ \ge z_2} \ge \frac{\alpha}{2}$. Define $z_3$ analogously, and so forth.

All that remains to show is the upper bound on $L$. This is where we use the fact that (by construction) $\pr{\mu_- \le z_{\ell - 1} < \mu_+ \le z_\ell} \ge \frac{\alpha}{2}$. Summing over all $\ell$, we have
\[\sum_{\ell \in [L]} \pr{\mu_- \le z_{\ell - 1} < \mu_+ \le z_\ell} \ge \frac{\alpha}{2} L,\]
which (after rearranging) completes the proof.
\end{proof}

\begin{proof}[Proof of Claim~\ref{claim:alphak_bound}]
First note that by construction, $\JB_G(I_k') \ge \beta$ for all $k$. By repeated use of the $c$-approximate triangle inequality,\footnote{We sub-divide $I_k'$ into $[z_{k, 0}, z_{k, L_k/2}]$ and $[z_{k, L_k/2}, z_{k, L}]$, then subdivide each of these, and so on.} we find that
\[\sum_{\ell \in [L_k]} \JB_G(z_{k, \ell - 1}, z_{k, \ell}) \ge c^{\ceil{\log_2 L_k}} \JB_G(I_k') \ge c^{1 + \log_2 \frac{2r_k}{\alpha_k}} \JB_G(I_k') \ge c^{1 + \log_2 \frac{2r_k}{\alpha_k}} \beta = c \parens{\frac{2r_k}{\alpha_k}}^{\log_2 c} \beta.\]
On the other hand, we have
\begin{align*}
\epsilon &\ge \EE{\JB_G(\mu_{\sigma}, \mu_{\tau})} = \sum_{\sigma, \tau} \pr{\sigma, \tau} \JB_G(\mu_{\sigma}, \mu_{\tau}) \ge \sum_{\sigma, \tau} \pr{\sigma, \tau} \sum_{\substack{k, \ell: \mu_- \le z_{k, \ell - 1} \\ \mu_+ \ge z_{k, \ell}}} \JB_G(z_{k, \ell - 1}, z_{k, \ell})\\
&= \sum_{k, \ell} \pr{\mu_- \le z_{k, \ell - 1}, \mu_+ \ge z_{k, \ell}} \JB_G(z_{k, \ell - 1}, z_{k, \ell}) \ge \sum_{k, \ell} \frac{\alpha_k}{2} \JB_G(z_{k, \ell - 1}, z_{k, \ell}).
\end{align*}
Here, the third step follows by the reverse triangle inequality (Fact~\ref{item:triangle} of Proposition~\ref{prop:bregman_facts}) and the fourth step follows by rearranging the order of summation.\footnote{The case that the space of signals is infinite is identical except that the summation is replaced by an integral over the probability space.} Combining the last two facts gives us that
\[\epsilon \ge \sum_k \frac{\alpha_k}{2} \cdot c \parens{\frac{2r_k}{\alpha_k}}^{\log_2 c} \beta = \sum_k r_k  \parens{\frac{2r_k}{\alpha_k}}^{\log_2 c - 1} \beta c,\]
which rearranges to the desired identity.
\end{proof}

We are now ready to bound the second summand, i.e.\ $\EE{D_G(\mu_{S^{(k(\sigma))} \tau} \parallel \mu_{\tau})}$, where $k(\sigma)$ is the $k$ such that Alice's estimate $\mu_{\sigma}$ lies in $I_k$. For convenience we will define $k(\tau)$ for Bob by analogy as the $k$ such that $\mu_{\tau}$ lies in $I_k$. By Lemma~\ref{lem:jbgi_small} and the preceding discussion, we know that
\[\pr{k(\sigma) \neq k(\tau)} \le 4 \parens{\frac{\epsilon}{\beta c}}^{1/(1 - \log_2 c)}.\]

\begin{lemma} \label{lem:second_summand}
Let $Q = \pr{k(\sigma) \neq k(\tau)}$. Then
\[\EE{D_G(\mu_{S^{(k(\sigma))} \tau} \parallel \mu_{\tau})} \le 2\tilde{G}^*(Q).\]
\end{lemma}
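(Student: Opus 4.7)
The plan is to reduce $D_G(\mu_{S^{(k(\sigma))} \tau} \parallel \mu_\tau)$ to a Jensen gap and then exploit that, conditional on $\tau$, the distribution of $\mu_{S^{(k(\sigma))} \tau}$ places mass $1-q(\tau)$ at the single point $\mu_{S^{(k(\tau))} \tau}$, where $q(\tau) := \pr{k(\sigma) \neq k(\tau) \mid \tau}$. By the tower property, $\EE{\mu_{S^{(k(\sigma))} \tau} \mid \tau} = \mu_\tau$, so expanding $D_G$ and taking conditional expectation eliminates the linear $(y-x)G'(x)$ term:
\[\EE{D_G(\mu_{S^{(k(\sigma))} \tau} \parallel \mu_\tau) \mid \tau} = \EE{G(\mu_{S^{(k(\sigma))} \tau}) \mid \tau} - G(\mu_\tau) =: \psi(\tau).\]
It therefore suffices to show $\EE[\tau]{\psi(\tau)} \le 2 \tilde{G}^*(Q)$, noting that $\EE[\tau]{q(\tau)} = Q$.

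For fixed $\tau$, write $q = q(\tau)$, $a = \mu_{S^{(k(\tau))} \tau}$, and let $\bar{b}$ be the conditional mean of $\mu_{S^{(k(\sigma))} \tau}$ given $k(\sigma) \neq k(\tau)$. Then $\mu_\tau = (1-q)a + q\bar{b}$, so $|a - \mu_\tau| = q|a - \bar{b}| \le q$ since both $a, \bar{b} \in [0,1]$. Splitting the Jensen gap into the on-diagonal piece ($k(\sigma)=k(\tau)$) and the off-diagonal piece, and bounding $G \le \max G$ on the latter, gives
\[\psi(\tau) \le (1-q)\bigl(G(a) - G(\mu_\tau)\bigr) + q\bigl(\max G - G(\mu_\tau)\bigr) \le \tilde{G}(|a - \mu_\tau|) + qM \le \tilde{G}(q) + qM,\]
where the final step uses the monotonicity of $\tilde{G}$ combined with $|a - \mu_\tau| \le q$.

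Averaging over $\tau$, Jensen's inequality applied to the concave function $\tilde{G}^* \ge \tilde{G}$, together with $\EE[\tau]{q(\tau)} = Q$, yields $\EE[\tau]{\tilde{G}(q(\tau))} \le \tilde{G}^*(Q)$. Moreover $\tilde{G}^*$ is concave on $[0,1]$ with $\tilde{G}^*(0) = 0$ and $\tilde{G}^*(1) = \tilde{G}(1) = M$, so the chord inequality gives $QM \le \tilde{G}^*(Q)$. Combining these two estimates,
\[\EE[\tau]{\psi(\tau)} \le \tilde{G}^*(Q) + QM \le 2\tilde{G}^*(Q),\]
which is the claim.

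The main delicacy is the crude bound $G(\mu_{S^{(k(\sigma))} \tau}) \le \max G$ on the off-diagonal event: it discards information about where those off-diagonal points actually lie and introduces the additive $qM$ term, which is what forces the factor of $2$ in the final bound. That this $qM$ contribution can be absorbed into a second copy of $\tilde{G}^*(Q)$ using only concavity of the envelope and its endpoint values is the small miracle that makes the lemma go through without additional hypotheses on $G$.
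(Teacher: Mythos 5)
Your proof is correct and follows essentially the same route as the paper's: reduce to the Jensen gap $\EE{G(\mu_{S^{(k(\sigma))}\tau})\mid\tau}-G(\mu_\tau)$, split on whether $k(\sigma)=k(\tau)$, bound the off-diagonal piece by $qM$ and the on-diagonal piece via $\tilde{G}$ of the mean shift, then absorb both terms into $\tilde{G}^*(Q)$ using concavity of the envelope and its endpoint values. Your version is marginally cleaner in that the bound $|a-\mu_\tau|\le q$ (versus the paper's $q/(1-q)$) lets you apply Jensen with the unnormalized weights directly, but the argument is the same.
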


The key idea is that because $k(\sigma) = k(\tau)$ with probability near $1$, learning $k(\sigma)$ is unlikely to make Bob update his estimate much.

\begin{proof}
Consider any signal $\hat{\tau} \in \mathcal{T}$ and let $p(\hat{\tau}) = \pr{\tau = \hat{\tau}}$. We have\footnote{This proof takes sums over $\hat{\tau} \in \mathcal{T}$ and thus implicitly assumes that $\mathcal{T}$ is finite, but the proof extends to infinite $\mathcal{T}$, with sums over $\tau$ replaced by integrals with respect to the probability measure over $\mathcal{T}$.}
\[\EE{D_G(\mu_{S^{(k(\sigma))} \tau} \parallel \mu_{\tau})} = \sum_{\hat{\tau} \in \mathcal{T}} p(\hat{\tau}) \EE{D_G(\mu_{S^{(k(\sigma))} \hat{\tau}} \parallel \mu_{\hat{\tau}}) \mid \tau = \hat{\tau}}.\]
Note that $\mu_{\hat{\tau}} = \EE{\mu_{S^{(k(\sigma))} \hat{\tau}} \mid \tau = \hat{\tau}}$, so by Proposition~\ref{prop:bregman_facts} we have that
\[\EE{D_G(\mu_{S^{(k(\sigma))} \tau} \parallel \mu_{\tau})} = \sum_{\hat{\tau} \in \mathcal{T}} p(\hat{\tau}) \parens{\EE{G(\mu_{S^{(k(\sigma))}\hat{\tau}}) \mid \tau = \hat{\tau}} - G(\mu_{\hat{\tau}})}.\]
Let $q(\hat{\tau}) = \pr{\tau = \hat{\tau}, k(\sigma) \neq k(\hat{\tau})}$, so $\sum_{\hat{\tau} \in \mathcal{T}} q(\hat{\tau}) = Q$. Then
\begin{align*}
&\EE{G(\mu_{S^{(k(\sigma))}\hat{\tau}}) \mid \tau = \hat{\tau}} - G(\mu_{\hat{\tau}}) =\\
&\frac{p(\hat{\tau}) - q(\hat{\tau})}{p(\hat{\tau})} \parens{\EE{G(\mu_{S^{(k(\hat{\tau}))} \hat{\tau}}) - G(\mu_{\hat{\tau}})}} + \frac{q(\hat{\tau})}{p(\hat{\tau})} \parens{\EE{G(\mu_{S^{(k(\sigma))}\hat{\tau}}) \mid \tau = \hat{\tau}, k(\sigma) \neq k(\hat{\tau})} - G(\mu_{\hat{\tau}})}.
\end{align*}
The second term is at most $\frac{q(\hat{\tau})}{p(\hat{\tau})} M$, since $M$ is the range of $G$. To bound the first term, we note that $\mu_{S^{(k(\hat{\tau}))}\hat{\tau}}$ cannot differ from $\mu_{\hat{\tau}}$ by more than $\frac{q(\hat{\tau})}{p(\hat{\tau}) - q(\hat{\tau})}$, as otherwise the average value of $\mu_{S^{(k(\sigma))} \hat{\tau}}$ could not be $\mu_{\hat{\tau}}$. Therefore, $\EE{G(\mu_{S^{(k(\hat{\tau}))} \hat{\tau}}) - G(\mu_{\hat{\tau}})}$ is bounded by the largest possible difference in $G$-values of two points that differ by at most $\frac{q(\hat{\tau})}{p(\hat{\tau}) - q(\hat{\tau})}$. Therefore, we have
\begin{align*}
\EE{D_G(\mu_{S^{(k(\sigma))} \tau} \parallel \mu_{\tau})} &\le \sum_{\hat{\tau} \in \mathcal{T}} p(\hat{\tau}) \parens{\frac{p(\hat{\tau}) - q(\hat{\tau})}{p(\hat{\tau})} \tilde{G} \parens{\frac{q(\hat{\tau})}{p(\hat{\tau}) - q(\hat{\tau})}} + \frac{q(\hat{\tau})}{p(\hat{\tau})} M}\\
&\le QM + \sum_{\hat{\tau} \in \mathcal{T}} (p(\hat{\tau}) - q(\hat{\tau})) \tilde{G} \parens{\frac{q(\hat{\tau})}{p(\hat{\tau}) - q(\hat{\tau})}},
\end{align*}
where $\tilde{G}$ is defined as in the statement of Lemma~\ref{lem:second_summand}. If $G$ is symmetric on $[0, 1]$, then $\tilde{G}(x) = G(0) - G(x)$ for $x \le \frac{1}{2}$ and $M$ otherwise. This is a concave function, but $\tilde{G}$ is not in general concave. However, consider $\tilde{G}^*$ as defined in the lemma statement, so $\tilde{G}(x) \le \tilde{G}^*(x)$ for all $x$. Then
\begin{align*}
\EE{D_G(\mu_{S^{(k(\sigma))} \tau} \parallel \mu_{\tau})} &\le QM + \sum_{\hat{\tau} \in \mathcal{T}} (p(\hat{\tau}) - q(\hat{\tau})) \tilde{G}^* \parens{\frac{q(\hat{\tau})}{p(\hat{\tau}) - q(\hat{\tau})}}\\
&\le QM + \parens{\sum_{\hat{\tau} \in \mathcal{T}} (p(\hat{\tau}) - q(\hat{\tau}))} \cdot \tilde{G}^* \parens{\frac{\sum_{\hat{\tau} \in \mathcal{T}} q(\hat{\tau})}{\sum_{\hat{\tau} \in \mathcal{T}} (p(\hat{\tau}) - q(\hat{\tau}))}}\\
&= QM + (1 - Q) \tilde{G}^* \parens{\frac{Q}{1 - Q}} \le Q M + \tilde{G}^*(Q) \le 2\tilde{G}^*(Q).
\end{align*}
Here, the second step follows by Jensen's inequality with terms $\frac{q(\hat{\tau})}{p(\hat{\tau}) - q(\hat{\tau})}$ and weights $p(\hat{\tau}) - q(\hat{\tau})$, the second-to-last step follows from the fact that $\tilde{G}^*$ is convex and $\tilde{G}^*(0) = 0$, and the last step follows from the fact that $\tilde{G}^*$ is convex and $\tilde{G}^*(1) = M$.
\end{proof}

Since $Q \le 4 \parens{\frac{\epsilon}{\beta c}}^{1/(1 - \log_2 c)}$, combining Lemma~\ref{lem:second_summand} with Equation~\ref{eq:bound_1} gives us the following result.
\[\EE{D_G(\mu_{\sigma \tau} \parallel \mu_{\tau})} \le \frac{8\beta}{c^2} + 2 \tilde{G}^* \parens{4 \parens{\frac{\epsilon}{\beta c}}^{1/(1 - \log_2 c)}}.\]
Noting that $\tilde{G}^*$ is concave and $c^{-1/(1 - \log_2 c)} \le 2$ (which is true for all $0 < c < 1$) completes the proof of Lemma~\ref{lem:bob_close_bregman}.
\end{proof}

\section{Alternative Definitions of Agreement and Accuracy} \label{appx:alternative_defs}
For arbitrary Bregman divergences, there are several notions of agreement and accuracy that are worth considering. Before we discuss these, we make a note about the order of arguments in a Bregman divergence. In our context, it makes the most sense to talk of the Bregman divergence \emph{from a more informed estimate to a less informed estimate}. By a ``more informed estimate" we mean a finer-grained one, i.e.\ one that is informed by more knowledge. For example, in terms of estimating $Y$ in the context of this work explores, $Y$ is more informed than $\mu_{\sigma \tau}$, which is more informed than $\mu_{\sigma}$ and $\mu_{S \tau}$, which are each more informed than $\mu_{ST}$, which is more informed than $\EE{Y}$.

To see that this is the natural order of the arguments, recall that Bregman divergences are motivated by the property that they elicit the mean (see Proposition~\ref{prop:bregman_exp}): if an agent who gives an estimate of $x$ for the value of a random variable $Y$ incurs a loss of $D_G(Y \parallel x)$, then the agent minimizes their expected loss by reporting $x = \EE{Y}$. This means that the expert ought to report the expected value of $Y$ given the information that the expert knows.

This means that given two estimates of $Y$, $Z_1$ and $Z_2$, of which $Z_1$ is more informed, the quantity $D_G(Z_1 \parallel Z_2)$ has a natural interpretation: it is the expected amount the expert gains by learning more and refining their estimate from $Z_2$ to $Z_1$. This follows by the Pythagorean theorem:
\[\EE{D_G(Z_1 \parallel Z_2)} = \EE{D_G(Y \parallel Z_2)} - \EE{D_G(Y \parallel Z_1)}.\]

\subsection{Alternative Definitions of Agreement}
One important motivation for using the Jensen-Bregman divergence to the midpoint as the definition of agreement is that this quantity serves as a lower bound on the expected amount that Charlie disagrees with Alice and Bob. Formally:

\begin{defin} \label{def:agree_charlie}
Let $a$, $b$, and $c$ be Alice's, Bob's, and Charlie's expectations, respectively (these are random variables on $\Omega$). Alice and Bob \emph{$\epsilon$-agree with Charlie} if $\frac{1}{2}(\EE{D_G(a \parallel c) + D_G(b \parallel c)}) \le \epsilon$.
\end{defin}

(This is the order of arguments because Alice and Bob are more informed than Charlie.) By Proposition~\ref{prop:bregman_facts}~\ref{item:min_bregman}, we know that \textbf{if Alice and Bob $\epsilon$-agree with Charlie then they $\epsilon$-agree}.

As it happens the fact that under this (stronger) definition of agreement implies accuracy under rectangle substitutes follows immediately:

\begin{prop}
Let $\mathcal{I} = (\Omega, \PP, \mathcal{S}, \mathcal{T}, Y)$ be an information structure that satisfies rectangle substitutes. For any communication protocol that causes Alice and Bob to $\epsilon$-agree with Charlie on $\mathcal{I}$, Alice and Bob are $2 \epsilon$-accurate after the protocol terminates.
\end{prop}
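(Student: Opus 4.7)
The proposition should follow almost immediately from applying the rectangle substitutes condition to the rectangle induced by the protocol transcript, followed by the definition of agreement with Charlie. The plan is as follows.

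First, I would set up notation. At the end of the protocol, let $S \times T$ denote the (random) rectangle of signal pairs consistent with the transcript, so that Alice's expectation is $a = \mu_{\sigma T}$, Bob's expectation is $b = \mu_{S \tau}$, and Charlie's expectation is $c = \mu_{S T}$. By hypothesis,
\[
\tfrac{1}{2}\bigl(\EE{D_G(a \parallel c)} + \EE{D_G(b \parallel c)}\bigr) \le \epsilon,
\]
so in particular each of $\EE{D_G(a \parallel c)}$ and $\EE{D_G(b \parallel c)}$ is at most $2\epsilon$.

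Next, I would invoke rectangle substitutes. For any realization of $(S, T)$, Definition~\ref{def:rect_subs_bregman}, rewritten via Proposition~\ref{prop:pythag_bregman} as in Equation~\ref{eq:rec_sub_bregman}, gives
\[
\EE{D_G(\mu_{\sigma\tau} \parallel \mu_{S\tau}) \mid S, T} \;\le\; \EE{D_G(\mu_{\sigma T} \parallel \mu_{ST}) \mid S, T}.
\]
Taking expectation over $S, T$ yields $\EE{D_G(\mu_{\sigma\tau} \parallel b)} \le \EE{D_G(a \parallel c)} \le 2\epsilon$, which is precisely Bob's $2\epsilon$-accuracy. The symmetric inequality (swapping the roles of $\sigma$ and $\tau$, and of $S$ and $T$) gives Alice's $2\epsilon$-accuracy.

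There is essentially no hard step here: the work has been done by defining agreement via Bregman divergence to the common posterior $c = \mu_{ST}$, which is exactly the right-hand side of the rectangle substitutes inequality. The only thing to verify carefully is that rectangle substitutes does apply to the transcript-induced rectangles $S \times T$ (which it does, since the hypothesis asks for the inequality on every $S \subseteq \mathcal{S}$, $T \subseteq \mathcal{T}$), and that Jensen/the approximate triangle inequality is \emph{not} needed here --- unlike in Theorem~\ref{thm:agreement_accurate_bregman}, we pay no $c^{-1}$ or $\tilde G^*$ overhead because we never need to relate $D_G(a \parallel c)$ to $D_G(a \parallel b)$; the stronger agreement hypothesis delivers the bound on $D_G(a \parallel c)$ directly.
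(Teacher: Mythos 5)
Your proposal is correct and matches the paper's own proof essentially line for line: both apply the rectangle substitutes inequality (in its Pythagorean-rearranged form) to the transcript rectangle $S \times T$ to get $\EE{D_G(\mu_{\sigma\tau} \parallel \mu_{S\tau})} \le \EE{D_G(\mu_{\sigma T} \parallel \mu_{ST})}$, and then bound the right-hand side by $2\epsilon$ from the agreement-with-Charlie hypothesis (the paper adds the nonnegative term $\EE{D_G(\mu_{S\tau} \parallel \mu_{ST})}$ before invoking the hypothesis, which is the same observation as your ``each term is at most $2\epsilon$''). Your closing remark about why no approximate triangle inequality is needed is also the right explanation.
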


\begin{proof}
Let $S$ be the set of possible signals of Alice at the end of the protocol which are consistent with the protocol transcript, and define $T$ likewise for Bob. Recall that Charlie's expectation is $\mu_{ST}$. We have
\[\EE{D_G(\mu_{\sigma \tau} \parallel \mu_{S \tau})} \le \EE{D_G(\mu_{\sigma T} \parallel \mu_{ST})} \le \EE{D_G(\mu_{\sigma T} \parallel \mu_{ST})} + \EE{D_G(\mu_{S \tau} \parallel \mu_{ST})} \le 2 \epsilon,\]
where the first inequality follows by rectangle substitutes and the last inequality follows because Alice and Bob $\epsilon$-agree with Charlie.
\end{proof}

The drawback of Definition~\ref{def:agree_charlie} is that it is not so much a definition of Alice and Bob's agreement with each other, so much as a definition of agreement with respect to the protocol being run (since Charlie only exists within the context of the protocol). Put otherwise, it is impossible to determine whether Alice and Bob $\epsilon$-agree with Charlie simply by knowing Alice and Bob's expectations; one must also know Charlie's expectation, which cannot be determined from Alice's and Bob's expectations. The question ``how far from agreement are Alice and Bob if Alice believes 25\% and Bob believes 30\%?" makes sense in the context of $\epsilon$-agreement, but not in the context of $\epsilon$-agreement with Charlie.\\

A different notion of agreement, which (like $\epsilon$-agreement) only depends on Alice's and Bob's expectations, uses the \emph{symmetrized Bregman divergence} between these expectations: $\frac{1}{2}(D_G(a \parallel b) + D_G(b \parallel a))$.

\begin{defin}
Let $a$ and $b$ be Alice's and Bob's expectations, respectively (these are random variables on $\Omega$). Alice and Bob satisfy \emph{symmetrized $\epsilon$-agreement} if $\frac{1}{2}(D_G(a \parallel b) + D_G(b \parallel a))$.
\end{defin}

By Proposition~\ref{prop:bregman_facts}~\ref{item:shortening}, we know that \textbf{if Alice and Bob satisfy symmetrized $\epsilon$-agreement then they $\epsilon$-agree}.

In our context, symmetrized Bregman divergence is less natural than Jensen-Bregman divergence. This is symmetrized Bregman divergence (unlike Jensen-Bregman divergence) does not seem to closely relate to our previous discussion of the Bregman divergence from a more informed to a less informed estimate being most natural.

\subsection{Alternative Notions of Accuracy}
Our definition of Alice's accuracy as the expected Bregman divergence from the truth $\mu_{\sigma \tau}$ to Alice's expectation seems like the most natural one. However, one may desire a definition of accuracy that takes both Alice's and Bob's expectations into account, judging the pair's accuracy based on their consensus belief, rather than each of their individual beliefs. For instance, one could say that Alice and Bob are \emph{$\epsilon$-midpoint-accurate} if $\EE{D_G \parens{\mu_{\sigma \tau} \parallel \frac{a + b}{2}}} \le \epsilon$. By this definition, Alice's and Bob's expectations could individually be far from the truth, but they are considered accurate because the average of their expectations is close to correct.

\begin{prop} \label{prop:midpoint_accurate}
If Alice and Bob are $\epsilon$-accurate, then they are $2 \epsilon$-midpoint-accurate.
\end{prop}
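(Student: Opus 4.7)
The plan is to reduce the claim to a pointwise inequality between Bregman divergences and then take expectations. Specifically, I would show that for all $x, a, b \in [0, 1]$, setting $m := (a+b)/2$,
\[
D_G(x \parallel m) \;\le\; \max\bigl(D_G(x \parallel a),\, D_G(x \parallel b)\bigr) \;\le\; D_G(x \parallel a) + D_G(x \parallel b).
\]
Applying this pointwise with $x = \mu_{\sigma\tau}$ and taking expectations yields
\[
\EE{D_G\bigl(\mu_{\sigma\tau} \,\parallel\, (a+b)/2\bigr)} \;\le\; \EE{D_G(\mu_{\sigma\tau} \parallel a)} + \EE{D_G(\mu_{\sigma\tau} \parallel b)} \;\le\; 2\epsilon,
\]
which is exactly the definition of $2\epsilon$-midpoint-accuracy.

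The pointwise inequality reduces to the mild claim that for fixed $x$, the function $y \mapsto D_G(x \parallel y)$ is \emph{unimodal} with minimum $0$ attained at $y = x$: it is non-increasing on $\{y \le x\}$ and non-decreasing on $\{y \ge x\}$. This can be read off directly from the identity
\[
D_G(x \parallel y_2) - D_G(x \parallel y_1) \;=\; -D_G(y_2 \parallel y_1) + (x - y_2)\bigl(G'(y_1) - G'(y_2)\bigr),
\]
valid for all $x, y_1, y_2$; for $y_1 \le y_2 \le x$ both terms on the right are nonpositive (by nonnegativity of Bregman divergence and monotonicity of $G'$, which holds since $G$ is convex), and the case $x \le y_1 \le y_2$ is symmetric. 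Granted unimodality, assume WLOG $a \le b$, so $a \le m \le b$. If $x \notin [a, b]$, then $a, m, b$ all lie on a single monotone side of $x$, so $D_G(x \parallel m)$ is sandwiched between $D_G(x \parallel a)$ and $D_G(x \parallel b)$. If $a \le x \le b$, then either $m \le x$ (and $a \le m \le x$ gives $D_G(x \parallel m) \le D_G(x \parallel a)$ by monotonicity on $\{y \le x\}$) or $m \ge x$ (and $x \le m \le b$ gives $D_G(x \parallel m) \le D_G(x \parallel b)$ analogously).

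There is no real obstacle; the argument is a short unimodality-plus-expectations calculation. The one subtlety worth flagging is that $D_G(x \parallel \cdot)$ is not in general convex in its second argument (only unimodal around $x$), so one cannot apply a midpoint-Jensen bound to obtain the sharper inequality $\EE{D_G(\mu_{\sigma\tau} \parallel m)} \le \tfrac{1}{2}\bigl(\EE{D_G(\mu_{\sigma\tau} \parallel a)} + \EE{D_G(\mu_{\sigma\tau} \parallel b)}\bigr)$; this is precisely why the stated constant is $2$ rather than $1$. (For the special case $G(x) = x^2$ second-argument convexity does hold, and a direct computation recovers the better factor.)
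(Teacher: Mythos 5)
Your proof is correct and follows essentially the same route as the paper's: the pointwise bound $D_G(y \parallel \tfrac{a+b}{2}) \le \max(D_G(y\parallel a), D_G(y\parallel b)) \le D_G(y\parallel a)+D_G(y\parallel b)$ followed by taking expectations. The only difference is that you carefully justify the first inequality via unimodality of $D_G(x\parallel\cdot)$, which the paper asserts in one line (``because $\tfrac{a+b}{2}$ lies in between $a$ and $b$''); your remark about the loss of the factor $2$ due to non-convexity in the second argument is also apt.
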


\begin{proof}
Observe that for all $a, b, y$ it is the case that
\[D_G \parens{y \parallel \frac{a + b}{2}} \le \max(D_G(y \parallel a), D_G(y \parallel b) \le D_G(y \parallel a) + D_G(y \parallel b).\]
The first inequality is true simply because $\frac{a + b}{2}$ lies in between $a$ and $b$. Therefore,
\[\EE{D_G \parens{y \parallel \frac{a + b}{2}}} \le \EE{D_G(y \parallel a) + D_G(y \parallel b)} \le 2\epsilon.\]
\end{proof}

Another natural choice for Alice's and Bob's consensus belief is the QA pool (see \cite{nr21}). Proposition~\ref{prop:midpoint_accurate} likewise holds for the QA pool in place of the midpoint, and indeed holds for any choice of consensus belief that is guaranteed to lie in between Alice's and Bob's expectations. Thus, any such definition will be weaker than our definition of $\epsilon$-accuracy for Alice and Bob (up to a constant factor).\\

To summarize, among the above definitions of agreement, $\epsilon$-agreement is the weakest; and among the above definitions of accuracy, Alice's and Bob's $\epsilon$-accuracy is the strongest. This is an indication of strength for Theorem~\ref{thm:agreement_accurate_bregman}: it starts from a relatively weak premise and reaches a relatively strong conclusion.

\section{Implications for Communication Complexity} \label{appx:comm}
Our results can be framed in a communication complexity context, where they imply that ``substitutable'' functions can be computed with probability $1-\delta$ (over the inputs) with a transcript length depending only on $\delta$.
This is a nonstandard and weak notion of computing the function, but sketching the reduction may inspire future work on connections between substitutes and communication complexity.

In a classic deterministic communication complexity setup (e.g.\ \cite{rao2020communication}), Alice holds $\sigma \in \mathcal{S}$, Bob holds $\tau \in \mathcal{T}$, and the goal is to compute some function $g: \mathcal{S} \times \mathcal{T} \to \{0,1\}$ using a communication protocol (see Section \ref{subsec:agreement-protocols}).
Our setting captures this model when $Y = g(\sigma,\tau)$.
Observe that in this case, $Y = \mu_{\sigma \tau}$, i.e.\ Alice and Bob's information together determine $Y$ completely.
A communication protocol defines its output by a function $h: \Pi \to \{0,1\}$ where $\Pi$ is the space of transcripts.
We can simply let $h(\pi) = \text{round}(\mu_{ST})$, i.e.\ rounding the \emph{ex post} expectation $\EE{Y \mid \pi} = \mu_{ST}$ to either zero or one.
This is equivalent to the belief of ``Charlie'', or the common knowledge of Alice and Bob after the protocol is completed.

\begin{defin}[Rectangle substitutes, $(1-\delta)$-computes]
  Given a function $g$ and a distribution $\mathcal{D}$ over $\mathcal{S} \times \mathcal{T}$, we say $(g,\mathcal{D})$ satisfy \emph{rectangle substitutes} if the corresponding information structure with $Y = g(\sigma,\tau)$ satisfies rectangle substitutes (Definition \ref{def:rect_subs_quad}).
  We say a protocol $(1-\delta)$-computes $g$ over $\mathcal{D}$ if, with probability at least $1-\delta$ over $(\sigma,\tau) \sim \mathcal{D}$, the protocol has $h(\pi) = g(\sigma,\tau)$.
\end{defin}

By our results, under rectangle substitutes $(g,\mathcal{D})$, any agreement protocol approximately computes $g$ over $\mathcal{D}$.
More precisely, using a fast substitutes-agreement protocol similar to Proposition \ref{prop:fast_rect}, we obtain the following.
\begin{corollary}
  Suppose $(g, \mathcal{D})$ satisfy rectangle substitutes.
  Then for every $\delta \in (0,1)$, there is a deterministic communication protocol using $O(\log(1/\delta))$ bits of communication that $(1-\delta)$-computes $g$ over $\mathcal{D}$.
\end{corollary}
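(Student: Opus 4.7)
My plan is to combine the fast substitutes-agreement protocol of Proposition~\ref{prop:fast_rect} with a single additional bit of communication from Alice, and to analyze correctness via Markov's inequality. The pivotal observation is that in the communication-complexity setting we have $Y = g(\sigma,\tau) \in \{0,1\}$ and hence $Y = \mu_{\sigma\tau}$, so any upper bound on $\EE{(Y-\mu_{\sigma T})^2}$ converts directly to an upper bound on the probability of a $\{0,1\}$-rounding error. This bridges Proposition~\ref{prop:fast_rect}'s squared-error guarantee to the exact-equality guarantee that $(1-\delta)$-computation requires.

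Concretely, I would set $\epsilon := \sqrt{\delta}/2$ and run the one-round protocol of Proposition~\ref{prop:fast_rect}: Alice sends $\mu_\sigma$ rounded to the nearest multiple of $\epsilon$ and Bob sends $\mu_\tau$ similarly rounded, inducing a rectangle $S \times T$ consistent with the transcript. This phase costs $O(\log(1/\epsilon)) = O(\log(1/\delta))$ bits, and because $(g,\mathcal{D})$ satisfies rectangle substitutes, the proposition yields $\EE{(Y-\mu_{\sigma T})^2} \le \epsilon^2 = \delta/4$. Alice now knows both $\sigma$ and $T$, so she can compute $\mu_{\sigma T}$ exactly and append the single bit $b := \mathbf{1}[\mu_{\sigma T} \ge 1/2]$. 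Define the protocol's output by $h(\pi) := b$. Since $Y \in \{0,1\}$, a disagreement $b \neq Y$ forces $|\mu_{\sigma T} - Y| \ge 1/2$, so by Markov's inequality
\[
\Pr\bigl[b \neq Y\bigr] \;\le\; \Pr\bigl[|\mu_{\sigma T} - Y| \ge 1/2\bigr] \;\le\; 4\,\EE{(Y-\mu_{\sigma T})^2} \;\le\; \delta.
\]
The total transcript length is $O(\log(1/\delta)) + 1 = O(\log(1/\delta))$, as required.

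I do not expect a real technical obstacle; the only design subtlety worth flagging is that I do not take the ``natural'' output $h(\pi) = $ rounding of $\mu_{ST}$ informally suggested in the setup. Charlie's posterior is strictly coarser than Alice's, and in general $\mu_{ST}$ need not concentrate near $\{0,1\}$ under rectangle substitutes alone (one can construct examples within each rectangle where every Alice-posterior $\mu_{\sigma T}$ is near $0$ or $1$ in a ``balanced'' way, leaving $\mu_{ST}$ near $1/2$). Asking Alice to broadcast her own rounded posterior as the final bit cleanly sidesteps this and keeps the argument a one-line consequence of Proposition~\ref{prop:fast_rect} plus Markov.
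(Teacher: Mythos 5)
Your proposal is correct and follows essentially the same route as the paper: one round of rounded expectation-sharing, a single application of rectangle substitutes to conclude that a party's posterior is $\epsilon^2$-accurate in squared error, the identification $Y=\mu_{\sigma\tau}$, and Markov's inequality to convert squared error into error probability. The only (harmless) difference is which posterior is rounded into the output --- the paper rounds Bob's discretized second message $B\approx\mu_{S\tau}$ and so needs one extra approximate-triangle-inequality step, while you have Alice append one bit encoding her exact posterior $\mu_{\sigma T}$, which saves a constant factor.
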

\begin{proof}
  In round one, Alice sends her current expectation $\mu_{\sigma}$ rounded to a multiple of $\epsilon$; call this message $A$.
  In round two, Bob sends his updated expectation $\mu_{S \tau}$ rounded to a multiple of $\epsilon$; call this message $B$.
  The protocol then halts, and the output is $B$ rounded to either zero or one.
  It uses $O(\log(1/\epsilon))$ bits.
  Let $S,T$ be the random rectangle associated with the protocol.
  
  By construction, $|\mu_{\sigma} - A| \leq \epsilon$, and $\mu_{S}$ is the expectation of $Y$ conditioned on $A$, so it follows that $|\mu_{\sigma} - \mu_{S}| \leq \epsilon$.
  Using substitutes (just as in Proposition~\ref{prop:fast_rect}),
  \[ \EE{(\mu_{\sigma \tau} - \mu_{S \tau})^2} \leq \EE{(\mu_{\sigma} - \mu_{S})^2} \leq \epsilon^2 . \]
  By construction, $|B - \mu_{S\tau}| \leq \epsilon$.
  Therefore, by the $\tfrac{1}{2}$-approximate triangle inequality for squared distance (e.g.\ Proposition~\ref{prop:triangle})),
  \[ \EE{(\mu_{\sigma \tau} - B)^2}
      ~\leq~ 2\EE{(\mu_{\sigma \tau} - \mu_{S \tau})^2} + 2\EE{(\mu_{S \tau} - B)^2}
      ~\leq~ 2\epsilon^2 . \]
  Now, the protocol is incorrect if $|B - \mu_{\sigma \tau}| \geq \tfrac{1}{2}$.
  Using Markov's inequality,
  \begin{align*}
    \Pr[|B - \mu_{\sigma \tau}| \geq \tfrac{1}{2}]
    &=    \Pr[(B - \mu_{\sigma \tau})^2 \geq \tfrac{1}{4}]  \\
    &\leq 4\EE{(B - \mu_{\sigma \tau})^2}  \\
    &\leq 8\epsilon^2 .
  \end{align*}
  Therefore, given $\delta \in (0,1)$, we run the protocol with $\epsilon = \sqrt{\delta/8}$.
  The probability of an incorrect output is at most $\delta$, and we use $O(\log(1/\epsilon) = O(\log(1/\delta))$ bits of communication.
\end{proof}
\end{document}